\newtheorem{theorem}{Theorem}[section]
\newtheorem{lemma}[theorem]{Lemma}
\newtheorem{corollary}[theorem]{Corollary}
\title{A Rigorous Framework for Specification, Analysis and Enforcement of Access Control Policies}
\definecolor{grigiomoltochiaro}{gray}{0.97}
\definecolor{verde}{rgb}{0,1,0}
\newcommand{\semLbl}[1]{S\textrm{-}#1}
\newcommand{\conLbl}[1]{T\textrm{-}#1}
\newcommand{\sr}[1]{#1}
\newcommand{\define}{\triangleq}
\newcommand{\ALGOInline}[3]
{
  \lstset{
   %    basicstyle=\tiny\sffamily,
    basicstyle=\scriptsize\ttfamily,
    stringstyle=\color{red},
    commentstyle=\color{verde},
    keywordstyle=\color{blue}\bfseries\underbar,
    %caption={\mbox{#1}},
    %label={#2},
    frame={tb},
    numbers=none,
    numberstyle=\tiny,
    numbersep=5pt,
    tabsize=4,
    language=Java,
    showtabs=false,
    showstringspaces=false,
    identifierstyle=,
    breaklines,
    literate={~=}{{$\neq$}}2{<=}{{$\leq$}}2{>=}{{$\geq$}}2{&}{{$\&$}}2,
    %    backgroundcolor=\color{grigiochiaro}
    backgroundcolor=\color{grigiomoltochiaro}
    }
  \lstinputlisting{#3}
}
\newcommand{\ie}{i.e.~}
\newcommand{\eg}{e.g.~}
\newcommand{\xacml}{\ac{XACML}}
\newcommand{\facpl}{\ac{FACPL}}
\newcommand{\pdp}{\ac{PDP}}
\newcommand{\pep}{\ac{PEP}}
\newcommand{\pr}{\ac{PR}}
\newcommand{\denyOver}{\x{d}\textrm{-}\x{over}_{\delta}}
\newcommand{\permitOver}{\x{p}\textrm{-}\x{over}_{\delta}}
\newcommand{\permitUnless}{\x{p}\textrm{-}\x{unless}\textrm{-}\x{d}_{\delta}}
\newcommand{\denyUnless}{\x{d}\textrm{-}\x{unless}\textrm{-}\x{p}_{\delta}}
\newcommand{\onlyOneApp}{\x{one}\textrm{-}\x{app}_{\delta}}
\newcommand{\firstApp}{\x{first}\textrm{-}\x{app}_{\delta}}
\newcommand{\weakCon}{\x{weak}\textrm{-}\x{con}_{\delta}}
\newcommand{\strongCon}{\x{strong}\textrm{-}\x{con}_{\delta}}
\newcommand{\denyOverO}[1]{\x{d}\textrm{-}\x{over}_{\x{#1}}}
\newcommand{\permitOverO}[1]{\x{p}\textrm{-}\x{over}_{\x{#1}}}
\newcommand{\permitUnlessO}[1]{\x{p}\textrm{-}\x{unless}\textrm{-}\x{d}_{\x{#1}}}
\newcommand{\denyUnlessO}[1]{\x{d}\textrm{-}\x{unless}\textrm{-}\x{p}_{\x{#1}}}
\newcommand{\onlyOneAppO}[1]{\x{one}\textrm{-}\x{app}_{\x{#1}}}
\newcommand{\firstAppO}[1]{\x{first}\textrm{-}\x{app}_{\x{#1}}}
\newcommand{\weakConO}[1]{\x{weak}\textrm{-}\x{con}_{\x{#1}}}
\newcommand{\strongConO}[1]{\x{strong}\textrm{-}\x{con}_{\x{#1}}}
\newcommand{\obM}{\x{m}}
\newcommand{\obO}{\x{o}}
\newcommand{\oblp}{\x{obl}\textrm{-}\x{p}}
\newcommand{\obld}{\x{obl}\textrm{-}\x{d}}
\newcommand{\based}{\x{base}}
\newcommand{\denyBiased}{\x{deny}\textrm{-}\x{biased}}
\newcommand{\permitBiased}{\x{permit}\textrm{-}\x{biased}}
\newcommand{\alice}{\x{Alice}}
\newcommand{\pdpPol}[2]{\{ #1 \,  \, #2 \}}
\newcommand{\obl}[1]{\, PepAction( #1^* ) }
\newcommand{\streq}{\x{equal}}
\newcommand{\exprOp}{\x{eop}}
\newcommand{\exprOperator}{\x{op}}
\newcommand{\attribute}[2]{( #1 , #2 )}
\newcommand{\x}[1]{{\sf #1}}
\newcommand{\Sep}{\ \mid\ }
\newcommand{\policyBegin}{{\bf{\{}}}
\newcommand{\policiesBegin}{\x{policies:}\,}
\newcommand{\targetBegin}{\x{target:}\,}
\newcommand{\policyEnd}{{\bf{\}}}}
\newcommand{\oblspBegin}{\x{\oblp:}\,}
\newcommand{\oblsdBegin}{\x{\obld:}\,}
\newcommand{\oblBegin}{{\bf{[}}}
\newcommand{\oblEnd}{{\bf{]}}}
\newcommand{\ruleOpt}[1]{{\bf{(}}#1{\bf{)}}}
\newcommand*\lfrac[2]{{}_{#1}\!\backslash\!^{#2}} % table combining algorithm
\newcommand{\permit}{\x{permit}}
\newcommand{\deny}{\x{deny}}
\newcommand{\notApp}{\x{not}\textrm{-}\x{app}}
\newcommand{\indet}{\x{indet}}
\newcommand{\excpt}{\perp}
\newcommand{\err}{\x{error}}
\newcommand{\indetD}{\x{indetD}}
\newcommand{\indetP}{\x{indetP}}
\newcommand{\indetDP}{\x{indetDP}}
\newcommand{\true}{\x{true}} 
\newcommand{\false}{\x{false}} 
\newcommand{\pepSemR}[1]{(\!( #1 )\!)}
\newcommand{\concat}{{\bullet}}
\newcommand{\denSemF}[1]{\mathcal{ #1 }}
\newcommand{\denSem}[2]{[\![ #1 ]\!] #2}
\newcommand{\policySem}[2]{\denSemF{P}\denSem{#1}{#2}}
\newcommand{\algSem}[2]{\denSemF{A}\denSem{#1}{#2}}
\newcommand{\exprSem}[2]{\denSemF{E}\denSem{#1}{#2}}
\newcommand{\oblSem}[2]{\denSemF{O}\denSem{#1}{#2}}
\newcommand{\pdpSem}[2]{\denSemF{P}dp\denSem{#1}{#2}}
\newcommand{\pepSem}[2]{\denSemF{E}A\denSem{#1}{#2}}
\newcommand{\pasSem}[1]{\denSemF{P}as\denSem{#1}{}}
\newcommand{\reqSemS}[2]{\denSemF{R}\denSem{#1}{#2}}  %two arguments, \name as second argument
\newcommand{\reqSem}[1]{\denSemF{R}\denSem{#1}{}} %used for PAS semantics
\newcommand{\expr}{\mathit{expr}}
\newcommand{\effect}{\mathit{e}}
\newcommand{\ob}{\mathit{o}}
\newcommand{\obType}{\mathit{t}}
\newcommand{\fo}{\mathit{i\!o}}
\newcommand{\foS}{\mathit{i\!o}^*}
\newcommand{\rSyntax}{\mathit{req}} %element of syntactic category Request
\newcommand{\req}{\mathit{r}} % element of functional representation of Request
\newcommand{\policy}{\mathit{p}}
\newcommand{\algSyntax}{\mathit{a}}
\newcommand{\pdpRes}{\mathit{res}}
\newcommand{\dec}{\mathit{dec}} %decision
\newcommand{\extVal}{\mathit{w}} % Obligations and table of expression semantics
\newcommand{\enfAlg}{\mathit{ea}}
\newcommand{\pdpSyntax}{\mathit{pdp}}
\newcommand{\name}{\mathit{n}}
\newcommand{\val}{\mathit{v}}
\newcommand{\pepAction}{\mathit{pepAct}}
\newcommand{\algNT}{\mathit{Alg}} %used in syntax section
\newcommand{\algName}{\mathsf{alg}} % used in combining algorithms semantic section
\newcommand{\alg}[1]{\mathsf{alg}_{#1}}
\newcommand{\algOp}{\otimes \mathsf{alg}}
\newcommand{\algOpAlg}[1]{\otimes #1}
\newcommand{\all}{\x{all}}
\newcommand{\greedy}{\x{greedy}}
\newcommand{\isFinal}[2]{\mathit{isFinal}_{#1}({#2})}
\newcommand{\isFinalPred}[1]{\mathit{isFinal}_{#1}}
\newcommand{\Req}{R}
\newcommand{\Ext}[1]{\mathit{Ext}(#1)}
\newcommand{\EvalS}{\mathtt{eval}}
\newcommand{\Eval}[2]{#1\  \EvalS\ #2}
\newcommand{\May}[2]{#1\ \EvalS_{\mathtt{may}}\ #2}
\newcommand{\Must}[2]{#1\ \EvalS_{\mathtt{must}}\ #2}
\newcommand{\NSatLit}{\mathtt{unsat}}
\newcommand{\SatLit}{\mathtt{sat}}
\newcommand{\Sat}[1]{#1 \ \SatLit\ }
\newcommand{\NSat}[1]{#1 \ \NSatLit\ }
\newcommand{\complete}{\mathtt{complete}}
\newcommand{\disjoint}{\mathtt{disjoint}}
\newcommand{\cover}{\mathtt{cover}}
\newcommand{\const}{\mathit{c}}%single constraint 
\newcommand{\smtconst}{\mathit{smtlib\textrm{-}c}}%single constraint 
\newcommand{\proj}[1]{\downarrow_{#1}}
\newcommand{\opC}{\x{cop}}
\newcommand{\isBot}[1]{\mathtt{isMiss}(#1)}
\newcommand{\isErr}[1]{\mathtt{isErr}(#1)}
\newcommand{\fnot}{\dot{\neg}}
\newcommand{\fand}{\dot{\wedge}}
\newcommand{\for}{\dot{\vee}}
\newcommand{\boolT}{\mathit{Bool}}
\newcommand{\stringT}{\mathit{String}}
\newcommand{\doubleT}{\mathit{Double}}
\newcommand{\dateT}{\mathit{Date}}
\newcommand{\setT}{2^{\mathit{Value}}}
\newcommand{\type}{\mathit{T}}
\newcommand{\typeVar}{\mathit{X}}
\newcommand{\nvType}{\mathit{U}}
\newcommand{\constrSem}{\mathcal{C}}
\newcommand{\cs}[1]{\constrSem[\![{#1}]\!]{\req}}
\newcommand{\translSymbol}{\mathcal{T}} % letter for translation functions
\newcommand{\transFunct}[1]{\translSymbol_{#1}} %general letter template for translating functions
\newcommand{\transl}[2]{\transFunct{#1}\{\!|#2|\!\}} % general template for translation functions
\newcommand{\translExpr}[1]{\transl{E}{#1}}
\newcommand{\translObl}[1]{\transl{Ob}{#1}}
\newcommand{\translPol}[1]{\transl{P}{#1}}
\newcommand{\translAlg}[1]{\transl{A}{#1}}
\newcommand{\isBool}[1]{\mathtt{isBool}(#1)}
\newcommand{\AT}{PCT}
\newcommand{\Fval}{\texttt{TValue}}
\newcommand{\FvalField}{\texttt{val}}
\newcommand{\FvalErr}{\texttt{err}}
\newcommand{\FvalBot}{\texttt{miss}}
\author[1]{\normalsize Andrea~Margheri}
\author[2]{Massimiliano~Masi}
\author[3]{Rosario~Pugliese}
\author[4]{Francesco~Tiezzi}
\affil[1]{Electronics and Computer Science, University of Southampton}
\affil[2]{Tiani ``Spirit'' GmbH}
\affil[3]{Universit\`a degli Studi di Firenze}
\affil[4]{Universit\`a di Camerino}
\date{}
\begin{document}

\maketitle

\begin{abstract}
Access control systems are widely used means for the protection of computing systems. They are defined in terms of access control policies regulating the accesses to system resources. In this paper, we introduce a formally-defined, fully-implemented framework for specification, analysis and enforcement of attribute-based access control policies. The framework rests on FACPL, a language with a compact, yet expressive, syntax for specification of real-world access control policies and with a rigorously defined denotational semantics. The framework enables the automatic verification of properties regarding both the authorisations enforced by single policies and the relationships among multiple policies. Effectiveness and performance of the analysis rely on a semantic-preserving representation of FACPL policies in terms of SMT formulae and on the use of efficient SMT solvers. Our analysis approach explicitly addresses some crucial aspects of policy evaluation, as \eg missing attributes, erroneous values and obligations, which are instead overlooked in other proposals. The framework is supported by Java-based tools, among which an Eclipse-based IDE offering a tailored development and analysis environment for FACPL policies and a Java library for policy enforcement. We illustrate the framework and its formal ingredients by means of an e-Health case study, while its effectiveness is assessed by means of performance stress tests and experiments on a well-established benchmark.
\end{abstract}

%\begin{keyword}
%
%\end{keyword}

% !TEX root = ./facpl_journal.tex
\section{Introduction}
\label{sec:intro}

Nowadays computing systems have pervaded every daily activity and prompted the proliferation of several innovative services and applications. These modern distributed systems manage a huge amount of data that, due to its importance and societal impact, has brought out security issues of paramount importance. Controlling the access to system resources is thus crucial to prevent unauthorised accesses that could jeopardise trustworthiness of data. 

This has prompted an increasing research interest towards access control systems, which are the first line of defence for the protection of computing systems. They are defined by \emph{rules} that establish under which conditions a subject's \emph{request} for accessing a resource has to be permitted or denied. In practice, it amounts to restrict physical and logical access rights of subjects to system resources. 

Access control is a broad field, covering several different approaches, using different technologies and involving various degrees of complexity.  Since the first applications in operating systems, to the more recent ones in distributed systems, many access control approaches have been proposed. Traditional approaches are based on the identity of subjects, either directly \,--\,e.g., Access Control Matrix~\cite{Lampson74} \,--\, or through predefined features, such as roles or groups --\,e.g., Role-Based Access Control (RBAC~\cite{rbac}). These approaches are however inadequate for dealing with modern distributed systems, as they suffer from scalability and interoperability issues. Moreover, they cannot easily encompass information representing the evaluation context, as \eg system status or current time. An alternative approach that permits to overcome these problems is Attribute-Based Access Control (ABAC)~\cite{HuKF15}. Here, the rules are based on \emph{attributes}, which represent arbitrary security-relevant information exposed by the system, the involved subjects, the action to be performed, or by any other entity of the evaluation context relevant to the rules at hand. Thus, ABAC permits defining fine-grained, flexible and context-aware access control rules that are expressive enough to uniformly represent all the other approaches~\cite{JinKS12}. Attribute-based rules are typically hierarchically structured and paired with strategies for resolving possible conflicting authorisation results. These structured specifications are called \emph{policies}; from this name derives the terminology Policy-Based Access Control (PBAC) \cite{nistsurvey}, sometimes used in place of ABAC.

Many languages have been proposed for the specification of access control policies (see, e.g.,~\cite{Han2012477} for a survey). Among the proposed languages, in the authors' knowledge, the OASIS standard \xacml~\cite{XACML3} is the best-known one. Due to its XML-based syntax and the advanced access control features it provides, \xacml\ is commonly used in many real-world systems, e.g., in service-oriented ones. However, the management of real access control policies is in practice cumbersome and error-prone, and should be supported by rigorous analysis techniques. Unfortunately, \xacml\ is generally acknowledged as lacking of a formally defined semantics (see, e.g.,~\cite{RaoLBLL09,CramptonM12,RamliNN12,ArkoudasCC14}), which makes it difficult the specification and realisation of analysis techniques. 

To tackle these difficulties, we introduce a formally-defined, fully-implemented framework, based on \facpl, supporting developers in the specification, analysis and enforcement of access control policies. 

\subsection*{The FACPL-based Access Control Framework}

The \facpl\ language defines a core, yet expressive, syntax for specification of high-level access control policies. It is inspired by \xacml\ (with which it shares the main traits of the policy structure and some terminology), but it refines some aspects of \xacml\ and introduces novel features from the access control literature. Evaluation of \facpl\ policies is formalised by a denotational semantics, which clarifies intricate aspects of access controls like, e.g., management of missing attributes (\ie attributes controlled by a policy but not provided by the request to authorise) and formalisation of combining algorithms (\ie strategies to resolve conflictual decisions that policy evaluation can generate).

The analysis functionalities offered by our framework enable verification of two main groups of properties of \facpl\ policies. The \emph{authorisation properties} permit to statically reason on the result of the evaluation of a policy with respect to a specific request, by also considering additional attributes that can be possibly introduced in the request at run-time and that might lead to unexpected authorisations. Instead, the \emph{structural properties} permit to statically reason on the whole set of results of the evaluation of one or more policies and can be exploited, e.g., to implement maintenance and \emph{change-impact analysis}~\cite{FislerKMT05} techniques.

The verification of these properties requires extensive checks on very large (possibly infinite) amounts of requests, hence support through software tools is essential. As no off-the-shelf analysis tool directly takes \facpl\ specifications in input, our framework exploits a constraint formalism that permits uniformly representing policy elements and enabling automated analysis. The constraint formalism we introduce is based on Satisfiability Modulo Theories (SMT) formulae, that is formulae defining satisfiability problems involving multiple theories, e.g. boolean and linear arithmetic ones. The relevant progress made in the development of automatic SMT solvers has led SMT to be extensively employed in diverse analysis applications~\cite{MouraB11}, even for access control policies~\cite{ArkoudasCC14,TurkmenHRZ15}. In practice, SMT-based approaches are more effective than many other ones, like e.g. decision diagrams~\cite{FislerKMT05} or description logic~\cite{KolovskiHP07}. The soundness of our analysis techniques is guaranteed by the correspondence, which we formally prove, between the semantics of \facpl\ policies and that of their constraint-based representations.

Our framework is supported by a Java-based software \emph{toolchain}. The key software tool is an Eclipse-based IDE that offers a tailored development and analysis environment for \facpl\ policies. Specifically, it helps access control policy developers in the tasks of specification, analysis and enforcement of policies by providing, e.g., static checks on the code and automatic generation of runnable SMT and Java code. The evaluation of the SMT code relies on the Z3 solver~\cite{MouraB08}, while the policy enforcement relies on  an expressly developed Java library.

\subsection*{Contributions}

The main contribution of this paper is the development of a comprehensive methodology supporting the whole life-cycle of access control policies, from their specification and analysis to their enforcement. Each ingredient of the methodology is formally introduced in this paper, together with its tool implementation. The tools allow access control system developers to use formally-defined functionalities without requiring them to be familiar with formal methods. 

Our methodology enhances the proposals from the literature to different extents, in order to provide a single framework where all the relevant functionalities are expressed and formalised in a uniform manner. Indeed, \xacml\ does not come with any formal specification and, hence, analysis; the formally-grounded proposals in~\cite{Jajodia97alogical,CramptonM12,RamliNN12,CramptonW15} do not offer any supporting tools; the SMT-based analysis proposals in~\cite{ArkoudasCC14,TurkmenHRZ15} do not support some crucial features (e.g., missing attributes and obligation instantiation). Detailed comparisons with the relevant literature are in Section~\ref{sec:relatedWork}.

Our aim is to design an expressive language whose formal foundations enable tool-supported analysis, rather than to face \xacml\ semantic issues or supersede it. Further contributions of this paper are summarised below.
\begin{itemize}
\item The \facpl\ semantics manages missing attributes in a way similar to~\cite{CramptonM12,CramptonW15} and extends it with explicit error management. 
\item The formalisation of combining algorithms extends that of~\cite{LiWQBRLL09} with explicit combination of obligations and with different instantiation strategies. 
\item The authorisation properties explicitly take into account the non-monotonicity issue of policy evaluation~\cite{TschantzK06} by appropriately employing the request extensions set of~\cite{CramptonMZ15} for property formalisation.
\item The main structural properties of~\cite{FislerKMT05} and~\cite{KolovskiHP07} are uniformly formalised in terms of policy semantics.  
\item The constraint formalism defines a low-level, tool-in\-de\-pendent representation of attribute-based policies that is capable to deal with all issues regarding policy evaluation. 
\item The validation of the proposal is carried out through experiments on a standard benchmark in the field of access control tools, \ie the CONTINUE~\cite{Krishnamurthi03} case study.
\end{itemize}

\begin{figure*}[!t]
\centering{
\includegraphics[scale=.5]{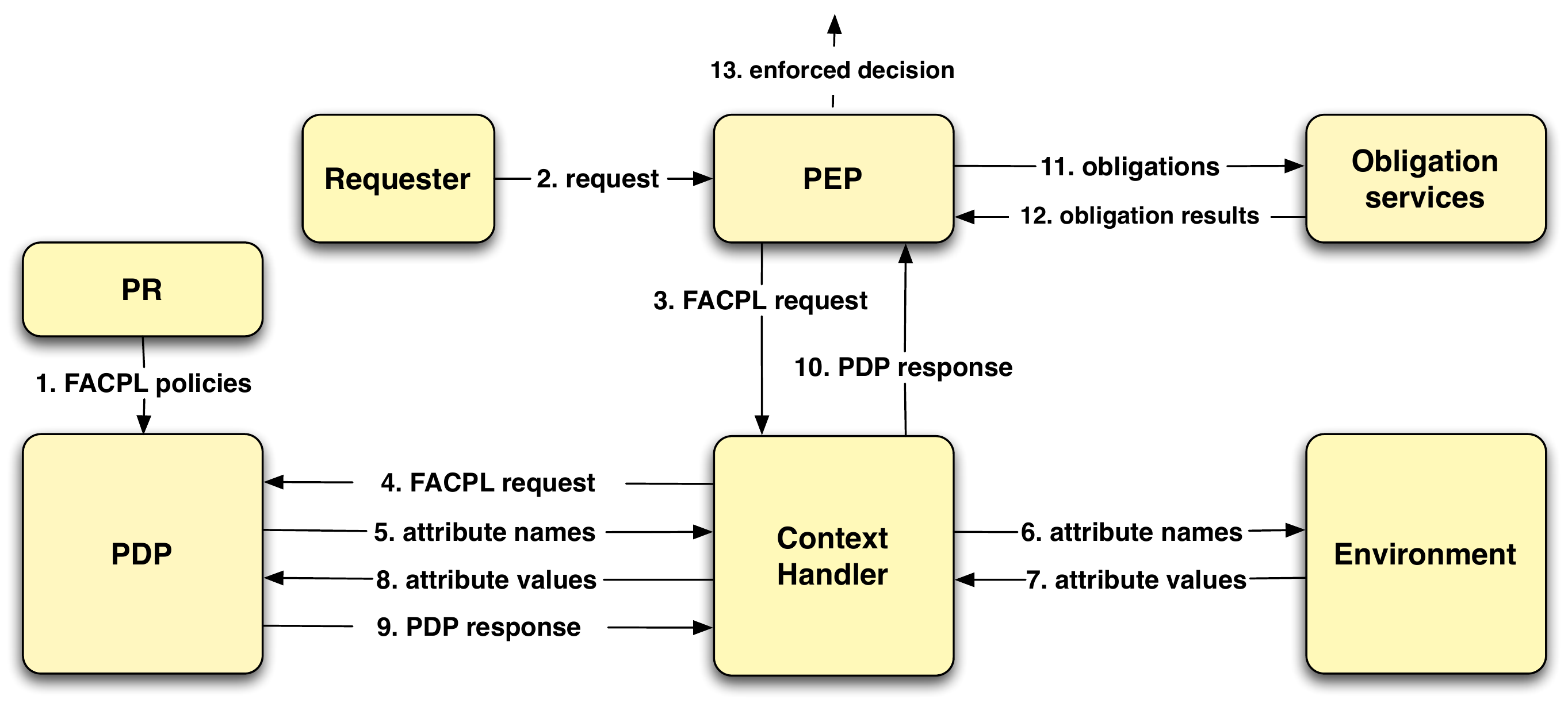}
\vspace*{-.3cm}
\caption{The \facpl\ evaluation process}
\label{fig:facplModel}}
\end{figure*}

This paper is a revised and extended version of~\cite{ESSOS,WWV15}. Besides significant revisions and extensions of syntax and semantics of \facpl\ (we refer to Section~\ref{sec:relatedWork} for a detailed comparison) this paper proposes a  complete development methodology for access control policies. Most of all, differently from previous works, we introduce a constraint-based representation of \facpl\ policies enabling the verification of various properties through SMT solvers.

\medskip
\noindent
\textit{Summary of the rest of the paper.}\ \ 
In Section~\ref{sec:model} we overview the \facpl\ evaluation process. In Section~\ref{sec:casestudy} we introduce an e-Health case study we use throughout the paper as a running example. In Section~\ref{sec:facplSyntax} we present the syntax of \facpl\ and its informal semantics, together with the \facpl-based specification of the case study. In Section~\ref{sec:formal_sem} we formally define the \facpl\ semantics. In Section~\ref{sec:constraint} we introduce the constraint formalism and the representation it enables of \facpl\ policies. In Section~\ref{sec:analysis} we introduce various properties for access control policies and their verification via SMT solvers. In Section~\ref{sec:tool} we outline the Java-based software toolchain. In Section~\ref{sec:relatedWork} we discuss the closest related work and, finally, in Section~\ref{sec:conclusions} we conclude and touch upon directions for future work. Appendixes~\ref{sec:appendixA} and~\ref{sec:appendix3} report, respectively, all the definitions for combining algorithms, and the proofs of the formal results.

% !TEX root = ./facpl_journal.tex

\section{The FACPL Evaluation Process}
\label{sec:model}

The \facpl\ evaluation process of (access control) policies and requests is shown in Figure~\ref{fig:facplModel}. It defines the interactions, leading to the final authorisation decision, among three key components: the \pr, the \acf{PDP}\ and the \acf{PEP}. These entities and their interactions were introduced in~\cite{rfc2753} to define the evaluation process of policy-based systems. Each policy language, e.g. \xacml, has then tailored them according to its specific features.

The evaluation process assumes that system resources are paired with one or more \facpl\ policies, which define the credentials necessary to gain access to such resources. The \pr\ stores the policies and supplies them to the \pdp\ (step~1), which then decides if the access can be granted.

When \pep\ receives a request (step~2), the credentials contained in the request are encoded as a sequence of \emph{attribute} elements (i.e., name-value pairs representing arbitrary information relevant for evaluating the access request) forming a \facpl\ request (step~3). \pep s can have many different forms, \eg a gateway or a Web server. Therefore, this encoding allows policies and requests to be written and evaluated independently of their specific nature.

The \emph{context handler} sends the request to the \pdp\ (step~4), by possibly adding environmental attributes, e.g. request receiving time, that may be used in the evaluation.

The \pdp\ \emph{authorisation process} computes the \emph{PDP response} for the request by checking the attributes, that may belong either to the request or to the environment (steps 5-8), against the controls contained in the policies. The \pdp\ response (steps~9-10) contains an authorisation \emph{decision} and possibly some \emph{obligations}.

The decision is one among $\permit$, $\deny$, $\notApp$ and $\indet$\footnote{The \facpl\ supporting tools can handle the same extended indeterminate values dealt with by \xacml\ (see Section~\ref{sec:tool}). However, for the sake of presentation, in the formal specification of \facpl\ we only consider a single indeterminate value, rather than the whole set.}. The meaning of the first two ones is obvious, the third one means that there is no policy that applies to the request and the latter one means that some errors have occurred during the evaluation. Policies can automatically manage these errors by using operators that combine, according to different strategies, $\indet$ decisions with the others. 

Obligations are instead additional actions connected to the access control system that must be discharged by the \pep\ through appropriate \emph{obligation services} (steps~11-12). Obligations usually correspond to, e.g., updating a log file, sending a message or executing a command. The \emph{enforcement process} performed by the \pep\ determines the \emph{enforced decision} (step~13) on the basis of the result of obligations discharge. This decision could differ from that of the \pdp\ and is the outcome of the evaluation process.

It is worth noticing that the \facpl\ evaluation process guarantees separation of concerns among policies, their evaluation and the system itself. Among others, the main advantages it ensures are: (i) different types of requests can be handled, as the \pep\ can appropriately encode them in the format required by the \pdp; (ii) the \pdp\ can be placed in any point of the system, with the \pep\ acting as a gateway or a proxy; (iii) the \pr\ can be also instantiated to support dynamic, possibly regulated, modifications of policies\footnote{When \pr\ provides also support for the specification of administration controls on policy modifications, it is usually called \emph{Policy Administration Point} (PAP).}.

% !TEX root = ./facpl_journal.tex

\section{An e-Health case study}
\label{sec:casestudy}

\begin{figure*}[!t]
\vspace*{-.2cm}
\centerline{\includegraphics[scale=.45]{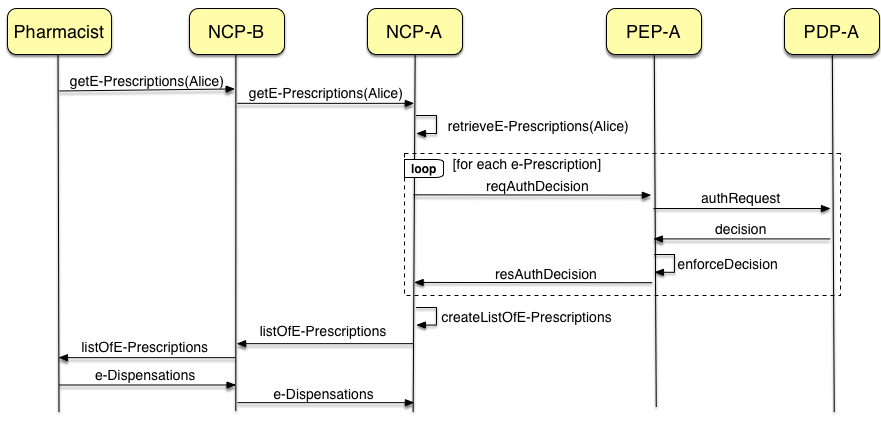}}
\vspace*{-.3cm}
\caption{e-Prescription service protocol}
\label{fig:ePrescription}
\end{figure*}

The case study we consider throughout this paper concerns the provision of e-Health services for exchanging private health data. In this context, we will show that access control policies expressed in \facpl\ can control accesses to health data in order to preserve data confidentiality and integrity.

The exchange of patients health data among European points of care (such as clinics, hospitals, pharmacies, etc.) has been pursued by the EU through the large scale pilot project epSOS (\url{http://www.epsos.eu}). The goal is to establish a suite of standardised data exchanging services for facilitating the cross-border interoperability~\cite{Kovac14} of the EU country healthcare systems and professionals (such as doctors, nurses, pharmacists, etc.), thus ultimately improving the effectiveness of healthcare treatments to EU citizens that are abroad. These services must respect a set of requirements in order to comply with country-specific legislations \cite{DataProtectionDirective,Art29} and to enforce the \emph{patient informed consent}, \ie the patients informed indications pertaining to personal data processing.

As a case study for this paper, we consider the \emph{electronic prescription} (e-Prescription) service. This service allows EU patients, while staying in a foreign country $\x{B}$ participating to the project, to have dispensed a medicine prescribed by a doctor in the country $\x{A}$ where the patient is insured. The protocol implemented by this service is illustrated in the message sequence diagram in Figure~\ref{fig:ePrescription}. The e-Prescription service helps $\x{pharmacist}$s in country $\x{B}$ to retrieve (and properly convert) e-Prescriptions from country $\x{A}$; this is due to trusted actors named National Contact Points (NCPs). Therefore, once a pharmacist has identified the patient ($\alice$), the remote access is requested to the local NCP ($\x{NCP}$-$\x{B}$), which in its own turn contacts the remote NCP ($\x{NCP}$-$\x{A}$)\footnote{For the sake of presentation, we abstract from the authentication process carried out by the pharmacist to ascertain the patient identity.}. The latter one retrieves the e-Prescriptions of the patient from the national infrastructure and, for each e-Prescription, performs through $\x{PEP}$-$\x{A}$ an authorisation check against the patient informed consent. In details, $\x{PEP}$-$\x{A}$ asks $\x{PDP}$-$\x{A}$ to evaluate the pharmacist request with respect to the e-Prescription and the policies expressing the patient consent. Once all $\x{decision}$s are enforced by $\x{PEP}$-$\x{A}$, $\x{NCP}$-$\x{A}$ creates the list of e-Prescriptions, by transcoding and translating them into the code system and language of the country $\x{B}$. Finally, the $\x{pharmacist}$ dispenses the medicine to the patient and updates the e-Prescription, \ie it returns e-Dispensation documents. % to the NCPs.

Starting from the epSOS specifications, we deduced a set of business requirements concerning the e-Prescription service. To streamline the presentation, we explicitly report in Table~\ref{tab:privacy} all and only those requirements authorising some actions. Hence, every action not explicitly authorised is forbidden. For instance, it is not allowed to pharmacists to write e-Prescriptions, which is instead allowed to doctors exhibiting specific permissions. All the requirements are self-explanatory. We just want to point out that the first three requirements deal with access restrictions, while the other ones deal with additional functionalities that sophisticated access control systems, like the one we present, can provide.

\begin{table}[h]
\centering
\footnotesize
%\vspace*{-.3cm}
\caption{Requirements for the e-Prescription service}\label{tab:privacy}
%\vspace*{-.2cm}
\begin{tabular}{@{}l@{}|@{}l}
\\[-.4cm]
\hline
\sr{} \# & \multicolumn{1}{c}{\emph{Description}} \\
\hline
\sr{1} & Doctors with $\x{e}\textrm{-}\x{Pre}\textrm{-}\x{Read}$ and $\x{e}\textrm{-}\x{Pre}\textrm{-}\x{Write}$ permissions
%\\ 
%& 
can write e-Prescriptions\\
%BeA3 & A $\x{pharmacist}$ can write an eDispensation \\
\sr{2} & Doctors with $\x{e}\textrm{-}\x{Pre}\textrm{-}\x{Read}$ permission can read
%\\
%& 
e-Prescriptions\\
\sr{3} & Pharmacists with $\x{e}\textrm{-}\x{Pre}\textrm{-}\x{Read}$ permission can read
%\\
%& 
e-Prescriptions\!\!\\
%\sr{4} & Administrative staff cannot access patients information \\
\sr{4} & Authorised user accesses must be recorded by the system\\
\sr{5} & Patients must be informed of unauthorised access
%\\
%& 
attempts\\
\sr{6} & Data exchanged should be compressed\\
\hline
\end{tabular}
\end{table}

% !TEX root = ./facpl_journal.tex

\section{The FACPL Language}
\label{sec:facplSyntax}

In this section we present \facpl, the language we propose for defining high-level access control policies and requests. First, we introduce its syntax (Section~\ref{sec:policySyntax}). Then, we informally explain the semantics of its linguistic constructs (Section~\ref{sec:informal_sem}) and employ them to implement the access control system of the e-Health case study (Section~\ref{sec:facplEx}).
%We refer to Section~\ref{sec:formal_sem} for the formal account of the \facpl\ semantics.

%\todo[inline]{FT: volendo potremmo togliere dalla sintassi le \emph{instantiated obligations} rimpiazzandole con obligations nortmali. Infatti, nella vecchia versione del paper la differenza sintattica era significativa (per via dell'effetto), adesso invece riguarda solo la presenza di espressioni valutate; ma comunque i valori sono espressioni. Visto che fanno parte delle responses, non c'\`e nemmeno da preoccuparsi di dire che il programmatore non pu˜ usare espressioni generiche nelle instantiated obligation. Lascerei comunque la terminologia instantiated obligation e direi una frase per far notare che la semantica garantisce che l'argomento delle PepAction sono valori risultanti dalla valutazione di espressioni.\\[.2cm]
%Si risparmierebbe un p\`o di notazione, ma al contempo siamo formalmente un p\`o meno chiari, quindi non so se vale la pena fare la modifica. Parliamone....}

\begin{table*}[!t]
\caption{Syntax of \facpl} 
\vspace*{-.5cm}
\label{tab:facpl_syntax}
$$
\footnotesize
\begin{array}{@{\,}r@{\ \ }r@{\ }r@{\ \ }l@{\ }}
\hline
&&&\\[-.2cm]
{\textbf{Policy Auth. Systems}} &
\mathit{PAS} & ::= & \{ \,  \x{pep:} \, \mathit{EnfAlg}\ \ \x{pdp:}\, \mathit{PDP} \, \}
\\[.2cm]
{\textbf{Enforcement algorithms}} &
\mathit{EnfAlg}
& ::= & \based \Sep \denyBiased \Sep \permitBiased 
\\[.2cm]
{\textbf{Policy Decision Points}} &
\mathit{PDP} & ::= & \mathit{Policy} \Sep\ \pdpPol{\algNT\ }{\x{policies:} \, \mathit{Policy}^{+}}\
\\[.2cm]
{\textbf{Combining algorithms}} &
\algNT & ::= & \permitOver \Sep \denyOver \Sep \denyUnless \Sep \permitUnless 
%\\
%&& \mid &
\Sep \firstApp \Sep \onlyOneApp \Sep \\
&&&
\weakCon \Sep \strongCon 
\\[.2cm]
\textbf{Instantiation strategies} & \delta
& ::= & 
\greedy \Sep \all 
\\[.2cm]
{\textbf{Policies}} &
\mathit{Policy} & ::= &
%\ruleOpt{\mathit{Effect}\ \ \x{target:} \, Expr\ \ \x{obl:} \, \mathit{Obligation}^{*} \, } 
\mathit{Rule}
\\
&& \mid &
%\Sep
\{ \algNT\ \ \x{target:} \, Expr\ \ \x{policies:} \, \mathit{Policy}^{+}  \ \ \x{\oblp:} \, \mathit{Obligation}^{*}\ \ \x{\obld:} \, \mathit{Obligation}^{*} \, \}
\\[.2cm]
{\textbf{Rules}} &
\mathit{Rule} & ::= &
\ruleOpt{\mathit{Effect}\ \ \x{target:} \, Expr\ \ \x{obl:} \, \mathit{Obligation}^{*} \, }
\\[.2cm]
{\textbf{Effects}} &
\mathit{Effect} & ::= & \permit \Sep \deny
\\[.2cm]
{\textbf{Obligations}} &
\mathit{Obligation} & ::= & [ \, \mathit{ObType} \ \ \obl{Expr} \, ]
\\[.2cm]
{\textbf{Obligation types}} &
\mathit{ObType} & ::= & \obM \Sep \obO
\\[.2cm]
\textbf{Expressions}&
\mathit{Expr} & ::= &
\mathit{Name} \Sep \mathit{Value}  
%\\
%& & \mid &
\Sep \x{and(\mathit{Expr}, \mathit{Expr})} \Sep \x{or(\mathit{Expr}, \mathit{Expr})} \Sep \x{not(\mathit{Expr})} 
\\
& & \mid &
 \x{equal(\mathit{Expr},\mathit{Expr})} 
%\\
%& & \mid &
\x{in}(\mathit{Expr}, \mathit{Expr}) 
% \\
%& & \mid & 
\Sep \x{greater}\textrm{-}\x{than(\mathit{Expr},\mathit{Expr})} \Sep \x{add(\mathit{Expr} ,\mathit{Expr} )}\\
& & \mid & \x{subtract(\mathit{Expr} ,\mathit{Expr} )} \Sep \x{divide(\mathit{Expr} ,\mathit{Expr} )} \Sep \x{multiply(\mathit{Expr} ,\mathit{Expr} )} \\[.15cm]
\textbf{Attribute names} & 
\mathit{Name} & ::= & \mathit{Identifier}/\mathit{Identifier} \\[.2cm]
\textbf{Literal values} &
\mathit{Value} & ::= & \x{true} \mid \x{false} \mid \mathit{Double} \mid \mathit{String} \mid \mathit{Date}
\\[.2cm]
{\textbf{Requests}} &
\mathit{Request} & ::= & {\attribute{\mathit{Name}}{\mathit{Value}}}^{+} 
%\modif{\ \ oppure\  {\attribute{\mathit{Name}}{\mathit{Value}}}^{*} ???}
\\[.1cm]
\hline
\end{array}
$$
\vspace*{-.1cm}
\end{table*}

\subsection{Syntax}
\label{sec:policySyntax}

Intuitively, \facpl\ policies are hierarchically structured lists of elements containing controls on the value of attributes that should be provided by \facpl\ access requests. Together with $\permit$ or $\deny$ decisions, policies specify the combining algorithms to be used in their evaluation and the obligations for the enforcement process.

Formally, the syntax of \facpl\ is reported in Table~\ref{tab:facpl_syntax}. It is given through EBNF-like grammars, where as usual the symbol $?$ stands for optional items, $*$ for (possibly empty) sequences, and $+$ for non-empty sequences. 

A top-level term is a \emph{\ac{PAS}} encompassing the specifications of a \pep\ and a \pdp. The \pep\ is defined by the \emph{enforcement algorithm} applied for establishing how decisions have to be enforced, e.g. if only decisions $\permit$ and $\deny$ are admissible, or also $\notApp$ and $\indet$ can be returned. The \pdp\ is instead defined by a policy, or by a sequence of policies and an algorithm for combining the results of the evaluation of these policies.

A \emph{policy} is made of a sequence of fields separated by keywords. It can be either a basic authorisation \emph{rule}
%\mbox{$ \ruleOpt{\mathit{Effect}\ \ \x{target:} \, Expr\  \x{obl:} \, Obligation^{*} \, }$} 
or a \emph{policy set}
%\todo{Ros: e se li chiamassimo policy list?\\ AM: io resterei per policy set, ma mi rimetto alla maggioranza\\ Fra: non ho un opinione forte: `list' \`e pi\`u preciso mentre `set' pu\`o dare riferimento a chi conosce access control policies. Parliamone a voce. }\ 
%\mbox{$\{ \algNT\ \ \x{target:} \, Expr\ \ \x{policies:} \, \mathit{Policy}^{+}\ \  \x{obl:} \, Obligation^{*} \}$} 
collecting rules and other policy sets, so that policies can be hierarchically structured. A rule specifies an \emph{effect}, that is the $\permit$ or $\deny$ decision returned when the rule is successfully evaluated, a \emph{target}, that is an expression indicating the set of access requests to which the rule applies, and a sequence of obligations, that is actions to be discharged by the enforcement process. A policy set specifies a target, a sequence of enclosed policies along with an algorithm for combining the results of their evaluation, and two sequences of obligations, one to be discharged if the resulting effect is $\permit$, the other if it is $\deny$. Obligation sequences may be empty, while policy sequences cannot.

An attribute $\mathit{name}$ refers to the literal value associated to the attribute. The name is structured in the form $\mathit{Iden}\-\mathit{ti}\-\mathit{fi}\-\mathit{er}$/$\mathit{Identifier}$, where the first identifier stands for a category name and the second for an attribute name. For example, the structured name $\x{subject/role}$ represents the value of the attribute $\x{role}$ within the category $\x{subject}$. Categories permit a fine-grained classification of attributes, varying from the usual categories of access control, i.e. \emph{subject}, \emph{resource} and \emph{action}, to possibly application-dependent ones. 

\emph{Expressions} are built from attribute names and \emph{literal} values, \ie booleans, doubles, strings, and dates, by using standard operators. As usual, string values are written as sequences of characters delimited by double quotes. 

\emph{Combining algorithms} offer different strategies to merge the decisions resulting from the evaluation of various policies (\eg the $\permitOver$ algorithm states that decision $\permit$ takes precedence over the others). They can be specialised by choosing different strategies for the instantiation of obligations (\eg the $\greedy$ strategy states that only the obligations resulting from the actually evaluated policies are returned). In the algorithm names, $\x{p}$ and $\x{d}$ are shortcuts for $\permit$ and $\deny$, respectively. 

An \emph{obligation} specifies %an applicability effect, 
a type, \ie mandatory ($\obM$) or optional ($\obO$), and identifier and arguments of an action to be performed by the \pep. The set of action identifiers accepted by the \pep\ can be chosen, from time to time, according to the specific application (therefore, $\mathit{PepAction}$ is intentionally left unspecified). Action arguments are  expressions.

A \emph{request} consists of a (non-empty) sequence of \emph{attributes}, i.e.~name-value pairs, that enumerate request credentials in the form of literal values. \emph{Multivalued attributes}, i.e. names associated to a set of values, are rendered as multiple attributes sharing the same name.

\begin{table*}[!t]
\vspace*{-.6cm}
\caption{Auxiliary syntax for \facpl\ responses} \label{tab:facpl_context_syntax}
\centering
$
\footnotesize
\begin{array}{@{\!}r@{\ }r@{\ }r@{\ }l@{\!}}
\hline
&&&\\[-.2cm]
{\textbf{\pdp\ responses }} &
\mathit{PDPResponse} & ::= & \langle \,\mathit{Decision} \ \ \ \mathit{IObligation}^* \rangle
\\[.2cm]
{\textbf{Decisions}} &
\mathit{Decision} & ::= & \permit \Sep\! \deny \Sep\! \notApp \Sep\! \indet
\\[.2cm]
%{\textbf{\modif{Instantiated} obligations}} &
{\textbf{Instantiated oblig.}} &
\mathit{IObligation} & ::= &  [ \, \mathit{ObType} \ \ \obl{\mathit{Value}} \, ]\\[.1cm]
\hline
\end{array}
$
\end{table*}

The responses resulting from the evaluation of \facpl\ requests are written using the auxiliary syntax reported in Table~\ref{tab:facpl_context_syntax}. The two-stage evaluation process described in Section~\ref{sec:model} produces two different kinds of responses: \emph{PDP\ responses} and \emph{decisions} (\ie responses by the \pep). The former ones, in case of decision $\permit$ and $\deny$, pair the decision with a (possibly empty) sequence of instantiated obligations. An \emph{instantiated obligation} is a pair made of a type (i.e., $\obM$ or $\obO$) and an action whose arguments are values.

To simplify notations, in the sequel we will omit the keyword preceding a sub-term generated by the grammar in Table~\ref{tab:facpl_syntax} whenever the sub-term is missing or is the expression $\x{true}$. Thus, e.g., the rule $\ruleOpt{\deny\ \ \x{target:}\ \x{true}\ \ \x{obl:}\, }$ will be simply written as $\ruleOpt{\deny}$. Moreover, when in the $\mathit{PDPResponse}$ the sequence of instantiated obligations is empty, we sometimes write $\mathit{Decision}$ instead of $\langle\mathit{Decision} \rangle$.

% !TEX root = ./facpl_journal.tex
\subsection{Informal Semantics}
\label{sec:informal_sem}

We now informally explain how the \facpl\ linguistic constructs are dealt with in the evaluation process of access requests described in Section~\ref{sec:model}. We first present the \pdp\ authorisation process and then the \pep\ enforcement process.

When the \pdp\ receives an access request, first it evaluates the request on the basis of the available policies. Then, it determines the resulting decision by combining the decisions returned by these policies through the top-level combining algorithm.

The evaluation of a policy with respect to a request starts by checking its applicability to the request, which is done by evaluating the expression defining its target. Let us suppose that the applicability holds, \ie the expression evaluates to $\true$. In case of rules, the rule effect is returned. In case of policy sets, the result is obtained by evaluating the contained policies and combining their evaluation results through the specified algorithm. In both cases, the evaluation ends with the instantiation of the enclosed obligations. Let us suppose now that the applicability does not hold. If the expression evaluates to $\false$, the policy evaluation returns $\notApp$, while if the expression returns an error or a non-boolean value, the policy evaluation returns $\indet$. Clearly, the target of enclosed policies may refine that of the enclosing ones, while a policy with target expression $\true$ (resp., $\false$) applies to all (resp., no) requests.

Evaluating expressions amounts to apply operators and to \emph{resolve} the attribute names occurring within, that is to determine the value corresponding to each such name. This value can either be contained in the request or retrieved from the environment by the context handler (steps~\mbox{5-8} in Figure~\ref{fig:facplModel}). Thus, if an attribute with that name is missing in the request and its retrieval by the context handler fails, the special value $\excpt$ is returned. Taking the value $\excpt$ apart from errors permits both carefully managing those requests only containing a limited set of attributes and reasoning on the role of missing attributes in policy evaluation (see Section~\ref{sec:analysis} for details).

It is worth noticing that the syntax of policies, and in particular that of attribute names and expressions, does not consider types. Indeed, we want a policy to provide a response to any request, not only to those complying with the expected type of (the values referred by) the attribute names controlled by the policy. Since we do not filter requests on the base of the type of their attributes, we cannot in general statically ensure that expressions within policies are well-typed. Consequently, errors will be generated at evaluation-time , and possibly managed, when expression operators are applied to arguments of unexpected type.

Indeed, the evaluation of expressions takes into account the types of the operators' arguments, and possibly returns the special values $\excpt$ and $\err$. In details, if the arguments are of the expected type, the operator is applied, else, \ie at least one argument is $\err$, $\err$ is returned; otherwise, \ie at least one argument is $\excpt$ and none is $\err$, $\excpt$ is returned. The operators $\x{and}$ and $\x{or}$ implement a different treatment of these special values. Specifically, $\x{and}$ returns $\true$ if both operands are $\true$, $\false$ if at least one operand is $\false$, $\excpt$ if at least one operand is $\excpt$ and none is $\false$ or $\err$, and $\err$ otherwise (\eg when an operand is not a boolean value). The operator $\x{or}$ is the dual of $\x{and}$. Hence, $\x{and}$ and $\x{or}$ may mask $\excpt$ and $\err$. Instead, the unary operator $\x{not}$ only swaps values $\true$ and $\false$ and leaves $\excpt$ and $\err$ unchanged. In the rest, we use operators $\x{and}$ and $\x{or}$ in infix notation, and assume that they are commutative and associative, and that operator $\x{and}$ takes precedence over $\x{or}$.

The evaluation of a rule ends with the instantiation of all the enclosed obligations, while that of a policy set ends with the instantiation of all the obligations in the sequence corresponding to the decision calculated for the policy. The instantiation of an obligation consists in evaluating every expression argument of the enclosed action. If an error occurs, the policy decision is changed to $\indet$. Otherwise, the instantiated obligations are paired with the policy decision to form the \pdp\ response. 

Evaluating a policy set requires the application of the specified algorithm for combining the decisions resulting from the evaluation of various policies and, thus, resolving possible conflicts, \eg whenever both decisions $\permit$ and $\deny$ occur. Given a sequence of policies in input, the combining algorithms prescribe the sequential evaluation of the given policies and behave as follows:
\begin{itemize}
\item $\permitOver$ ($\denyOver$ is specular): if the evaluation of a policy returns $\permit$, then the result is $\permit$. In other words, $\permit$ takes precedence, regardless of the result of any other policy. Instead, if at least one policy returns $\deny$ and all others return $\notApp$ or $\deny$, then the result is $\deny$. If all policies return $\notApp$, then the result is $\notApp$. In the remaining cases, the result is $\indet$.
\item $\denyUnless$ ($\permitUnless$ is specular): similarly to $\permitOver$, this algorithm gives precedence to $\permit$ over $\deny$, but it always returns $\deny$ in all the other cases.
\item $\firstApp$: the algorithm returns the evaluation result of the first policy in the sequence that does not return $\notApp$, otherwise the result is $\notApp$. 
\item $\onlyOneApp$: when exactly one policy is applicable, the result of the algorithm is that of the applicable policy. If no policy applies, the algorithm returns $\notApp$, while if more than one policy is applicable, it returns $\indet$.
\item $\weakCon$: the algorithm returns $\permit$ (resp., $\deny$) if some policies return $\permit$ (resp., $\deny$) and no other policy returns $\deny$ (resp., $\permit$); if both decisions are returned, the algorithm returns $\indet$. If policies only return $\notApp$ or $\indet$, then $\indet$, if present, prevails.
\item $\strongCon$: this algorithm is the stronger version of the previous one, in the sense that to obtain $\permit$ (resp., $\deny$) all policies have to return $\permit$ (resp., $\deny$), otherwise $\indet$ is returned. If all policies return $\notApp$ then the result is $\notApp$.
\end{itemize}
The algorithms described in the first four items above are those popularised by XACML. They combine decisions either according to a given precedence criterium or to policy applicability. The remaining two algorithms, instead, are borrowed from \cite{LiWQBRLL09} and compute the combined decision by achieving different forms of consensus.

If the resulting decision is $\permit$ or $\deny$, each algorithm also returns the sequence of instantiated obligations according to the chosen instantiation strategy $\delta$. There are two possible strategies. The $\all$ strategy requires evaluation of all policies in the input sequence and returns the instantiated obligations pertaining to all decisions. Instead, the $\greedy$ strategy prescribes that, as soon as a decision is obtained that cannot change due to evaluation of subsequent policies in the input sequence, the execution halts. Hence, the result will not consider the possibly remaining policies and only contains the obligations already instantiated. Therefore, the instantiation strategies mainly affect the amount of instantiated obligations possibly returned. The $\greedy$ strategy, that reflects the management of obligations in \xacml, may significantly improve the policy evaluation performance. Instead, the $\all$ strategy may require additional workload but, on the other hand, ensures that all the policies and their obligations are always taken into account.

As last step, the calculated \pdp\ response is sent to the \pep\ for the enforcement. To this aim, the \pep\ must discharge all obligations and decide, by means of the chosen enforcement algorithm, how to enforce decisions $\notApp$ and $\indet$. The algorithms are those popularised by \xacml\ and, in particular, the $\denyBiased$ (resp., $\permitBiased$) one enforces $\permit$ (resp., $\deny$) only when all the corresponding obligations are correctly discharged, while enforces $\deny$ (resp., $\permit$) in all other cases. Instead, the $\based$ algorithm leaves all decisions unchanged but, in case of decisions $\permit$ and $\deny$, enforces $\indet$ if an error occurs while discharging obligations. This means that obligations not only affect the authorisation process due to their instantiation, but also the enforcement one. However, errors caused by optional obligations, \ie with type $\obO$, are safely ignored.

% !TEX root = ./facpl_journal.tex

\subsection{Policies for the e-Health case study}
\label{sec:facplEx}

We now use the \facpl\ linguistic abstractions to formalise the requirements for the e-Health case study reported in Table~\ref{tab:privacy}. These rules are meant to prevent unauthorised access to patient data and hence to ensure their confidentiality and integrity. The specification of this access control system is introduced bottom-up, from single rules to whole policies, thus illustrating in a step-by-step fashion the combination strategies that could be pursued and their effects.

The system resources to protect via the access control system are \mbox{\emph{e-Prescriptions}}. The access control rules need to deal with requester credentials, i.e. $\x{doctor}$ and $\x{pharmacist}$ roles, along with their assigned permissions, and with $\x{read}$ or $\x{write}$ actions.

Requirement (\sr{1}), allowing doctors to write e-Prescriptions, can be formalised as a \emph{positive} \facpl\ rule (i.e., a rule with effect $\permit$) as follows
$$
\begin{array}{@{}l@{}}
\ruleOpt{\,\permit\ \
\targetBegin
\streq(\x{subject/role},``\x{doctor}")\\
\qquad\qquad\quad
\x{and}\
\streq(\x{action/id},``\x{write}") \\
\qquad\qquad\quad
\x{and}\ \x{in}(``\x{e}\textrm{-}\x{Pre}\textrm{-}\x{Write}",\x{subject/permission})\\
\qquad\qquad\quad
\x{and}\ \x{in}(``\x{e}\textrm{-}\x{Pre}\textrm{-}\x{Read}",\x{subject/permission})}  
\end{array}
$$
The rule target\footnote{To improve code readability, we use the infix operators, 
% Moreover, we introduce 
a textual notation for permissions and an additional check on the subject role. Of course, in a setting with semantically different roles, a stan\-dard\-ised permission-based coding, \eg HL7 (\url{http://www.hl7.org}), should be used for defining role checks.} checks if the requester role is $\x{doctor}$, if the action is $\x{write}$, and if the subject's permissions include those for writing and reading an e-Prescription. 
%That the resource type is equal to e-Prescription will be controlled by the target of the policy enclosing the rule. 
The control that the resource type is equal to e-Prescription will be performed by the target of the policy enclosing the rule. 
This, because of the hierarchical processing of \facpl\ elements, is enough to ensure that the rule will only be applied to e-Prescriptions. 

Requirement (\sr{2}) can be expressed like the previous: it differs for the action identifier and for the required permissions, i.e. only  $\x{e}\textrm{-}\x{Pre}\textrm{-}\x{Read}$. Requirement (\sr{3}) only differs from the second for the role value.

These three rules, modelling Requirements (\sr{1}), (\sr{2}) and (\sr{3}), can be combined together in a policy set whose target specifies the check on the resource type $\x{e}\textrm{-}\x{Prescription}$ (again, to improve code readability, we use textual encoding for resources).
%\footnote{Again to improve code readability, the resource is encoded as text; in a real application, for interoperability reasons, the LOINC (\url{http://loinc.org/}) universal code system for clinical data should be used.}
Since all granted requests are explicitly authorised, choosing the $\permitOverO{all}$ algorithm as combining strategy seems a natural choice. Let thus Policy~(\ref{policy1}) be defined as follows
\[
\begin{array}{@{\hspace*{-.4cm}}l@{\ }}
\policyBegin\, \permitOverO{all} \\
 \ \ \targetBegin
\streq(\x{resource/type},``\x{e}\textrm{-}\x{Prescription}")\\
\ \ \policiesBegin\\
%%%%%RULE1
\qquad
\ruleOpt{\,\permit\ \
\targetBegin
\streq(\x{subject/role},``\x{doctor}")\\
\qquad\quad\quad
\x{and}\
\streq(\x{action/id},``\x{write}") \\
\qquad\quad\quad
\x{and}\ \x{in}(``\x{e}\textrm{-}\x{Pre}\textrm{-}\x{Write}",\x{subject/permission})\\
\qquad\quad\quad
\x{and}\ \x{in}(``\x{e}\textrm{-}\x{Pre}\textrm{-}\x{Read}",\x{subject/permission})}\\
\qquad
%%%%%RULE2
\ruleOpt{\,\permit\ \
\targetBegin
\streq(\x{subject/role},``\x{doctor}")\\
\qquad\quad\quad
 \x{and}\
\streq(\x{action/id},``\x{read}") \\
\qquad\quad\quad
\x{and}\ \x{in}(``\x{e}\textrm{-}\x{Pre}\textrm{-}\x{Read}",\x{subject/permission})}\\
\qquad
%%%%%RULE3
\ruleOpt{\,\permit\ \
\targetBegin
\streq(\x{subject/role},``\x{pharmacist}")\\
\qquad\quad\quad
 \x{and}\
\streq(\x{action/id},``\x{read}") \\
\qquad\quad\quad
\x{and}\ \x{in}(``\x{e}\textrm{-}\x{Pre}\textrm{-}\x{Read}",\x{subject/permission})}\\
\ \ \oblspBegin
\oblBegin\, \obM\ \ \x{log}(\x{system/time},\x{resource/type}, 
%\\
%\qquad\qquad\qquad\qquad\qquad\qquad\qquad
\x{subject/id}, \x{action/id})\, \oblEnd\ \policyEnd
\end{array}
\hspace*{-1.2cm}
\tag{P1}
\label{policy1}
\]
Policy~(\ref{policy1}) reports not only access controls but also an obligation formalising Assumption (\sr{4}) about the logging of each authorised access.
%, i.e. all the $\permit$ ones. 
The arguments of the obligation action are separated by commas to increase their readability.

Let us now consider a \facpl\ request and evaluate it with respect to Policy~(\ref{policy1}). For the sake of presentation, hereafter we write $A \define t$ to assign the symbolic name $A$ to the term $t$. Let us suppose that doctor $\x{Dr.\ House}$ wants to write an e-Prescription; the corresponding request is defined as follows
$$
\begin{array}{@{ }r@{\,}c@{\, }l}
\x{req1} & \define &
\attribute{\x{subject/id}}{``\x{Dr.\ House}"}
\\
 & & 
\attribute{\x{subject/role}}{``\x{doctor}"}\, \attribute{\x{action/id}}{``\x{write}"}
\\
& & 
\attribute{\x{resource/type}}{``\x{e}\textrm{-}\x{Prescription}"}
\\
& &
\attribute{\x{subject/permission}}{``\x{e}\textrm{-}\x{Pre}\textrm{-}\x{Read}"}\\
& &
\attribute{\x{subject/permission}}{``\x{e}\textrm{-}\x{Pre}\textrm{-}\x{Write}"}
\ \ \ldots \\
\end{array}
$$
where attributes are organised into the categories \emph{subject}, \emph{resource} and \emph{action}. Additional attributes possibly included in the request are omitted because they are not relevant for this evaluation. Notice that $\x{subject/permission}$ is a multivalued attribute and it is properly handled in the previous rules by using the $\x{in}$ operator, which verifies the membership of its first argument to the set that forms its second argument.

The authorisation process of $\x{req1}$ returns a $\permit$ decision. In fact, the request matches the policy target, as the resource type is $\x{e}\textrm{-}\x{Prescription}$, and exposes all the permissions required in the first rule for the $\x{write}$ action and the $\x{doctor}$ role. The response, that is a $\permit$ including a $\x{log}$ obligation, is defined, e.g., as follows
$$
\begin{array}{l}
\langle \ \permit\, \ [ \, \obM \ \ \x{log}(\x{2016}\textrm{-}\x{10}\textrm{-}\x{22}\, \x{10}\textrm{:}\x{15}\textrm{:}\x{12},
%\\
%\qquad\qquad
``\x{e}\textrm{-}\x{Prescription}",``\x{Dr.\ House}",``\x{write}") \, ] \rangle
\end{array}
$$
The instantiated obligation indicates that the \pdp\ succeeded in retrieving and evaluating all the attributes occurring within the arguments of the action; run-time information, such as the current time, is retrieved through the context handler.

The evaluation of $\x{req1}$ returns the expected result. We might be led to believe that due to the simplicity of Policy~(\ref{policy1}), this is true for all requests. However, this correctness property cannot be taken for granted as, in general, even though the meaning of a rule is straightforward, this may not be the case for a combination of rules. Depending on the chosen combination strategy, some unexpected results can arise. For example, a request by a $\x{pharmacist}$ for a $\x{write}$ action on an $\x{e}\textrm{-}\x{Prescription}$ is not explicitly allowed by the requirements in Table~\ref{tab:privacy}; hence, it should be forbidden. However, the corresponding request
$$
\hspace*{-.2cm}
\begin{array}{r@{\, }c@{\, }l}
\x{req2} & \define &
\attribute{\x{subject/id}}{``\x{Dr.\ Wilson}"}
\\
 & & 
\attribute{\x{subject/role}}{``\x{pharmacist}"}
\,
 \attribute{\x{action/id}}{``\x{write}"}
\\
&& 
\attribute{\x{resource/type}}{``\x{e}\textrm{-}\x{Prescription}"}
\\
& &
\attribute{\x{subject/permission}}{``\x{e}\textrm{-}\x{Pre}\textrm{-}\x{Read}"}
\ \ \ldots \\
\end{array}
$$
would evaluate to $\notApp$. In fact, all enclosed rules do not apply (i.e., their targets do not match) and the resulting $\notApp$ decisions are combined by the $\permitOverO{all}$ algorithm to $\notApp$ as well. Therefore, the enforcement algorithm of the \pep\ is entrusted with the task of taking the final decision for request $\x{req2}$. Even though this is correct in a setting where the \pep\ is well-defined, \eg the epSOS system, it is not recommended when design assumptions on the \pep\ implementation are missing. In fact, a biased algorithm might transform $\notApp$ into $\permit$, possibly causing unauthorised accesses.

To prevent $\notApp$ decisions to be returned by the policy, we can replace the combining algorithm of Policy~(\ref{policy1}) with the $\denyUnlessO{all}$ one. This implies that $\deny$ is taken as the default decision and is returned whenever no rule returns $\permit$. Alternatively, we can get the same achievement by using a policy set defined as the combination, through the $\permitOverO{all}$ algorithm, of Policy~(\ref{policy1}) and a rule forbidding all accesses.  This rule is simply defined as $\ruleOpt{\deny}$: the absence of the target and the \emph{negative} effect means that it always returns $\deny$. Now, let Policy~(\ref{policyPatientConsent}) be defined as
\[
\begin{array}{@{\hspace*{-.3cm}}l@{\hspace*{-.1cm}}}
\policyBegin\, \permitOverO{all} \\
\quad
\policiesBegin
\ \policyBegin\ldots\mathit{Policy~(\ref{policy1})}\ldots\policyEnd
\ \
\ruleOpt{\deny}\\
\quad
\oblspBegin\ 
\oblBegin\, \obO \ \  \x{compress}(\,) \, \oblEnd\\
\quad
\oblsdBegin\ 
\oblBegin\, \obM \ \  \x{mailTo}(\x{resource/patient}\textrm{-}\x{mail} ,
\\ \qquad\qquad\quad
``\x{Data\ request\ by\ unauthorised\ subject}") \, 
%\modif{``\x{..text..}")}
\oblEnd
\,\policyEnd
\end{array}
\hspace*{-.7cm}
\tag{P2}
\label{policyPatientConsent}
\]
Policy~(\ref{policyPatientConsent}) reports two obligations formalising, respectively, the last two requirements of Table~\ref{tab:privacy}: (i) a patient is informed about unauthorised attempts to access her data by means of an obligation for the effect $\deny$ and (ii) if possible, data are exchanged in compressed form by means of an obligation for the effect $\permit$. Notably, the type `optional' is exploited so that compressed exchanges are not strictly required but, e.g., only whenever the corresponding service is available.

Policy~(\ref{policyPatientConsent}) can be used as a basis for the definition of the \emph{patient informed consent} (see Section~\ref{sec:casestudy}). For instance, $\alice$'s policy for the management of her health data could be simply obtained by adding a check on the patient identifier to which the policy applies, such as 
$
\begin{array}{@{}l@{}}
\targetBegin \streq(``\alice",\x{resource/patient}\textrm{-}\x{id})
\end{array}
$,
to Policy~(\ref{policyPatientConsent}). In this way, $\alice$ grants access to her e-Prescription data to the healthcare professionals that satisfy the requirements expressed in her consent policy. Another patient expressing a more restrictive consent, where \eg writing of e-Prescriptions is disabled, will have a similar policy set where the rule modelling Requirement~(\sr{1}) is not included. In a more general perspective, the \pdp\ could have a policy set for each patient, that encloses the policies expressing the consent explicitly signed by the patient. This is the approach followed, e.g., in the Austrian e-Health platform (\url{http://www.elga.gv.at/}).

As shown before, it could be challenging to identify unexpected authorisations and to determine whether policy fixes affect authorisations that should not be altered. The combination of a large number of complex policies is indeed an error-prone task that has to be supported with effective analysis techniques. Therefore we equip \facpl\ with a formal semantics and then define a constraint-based analysis providing effective supporting techniques for the verification of properties on policies.

%%% !TEX root = ./facpl_journal.tex
\section{FACPL Formal Semantics}
\label{sec:formal_sem}

\begin{table*}[tb]
\caption{Correspondence between syntactic and semantic domains}
\label{table:syntsemcorr}
%\vspace*{-.3cm}
\centering
\begin{tabular}{|@{\hspace{.1cm}}c@{\hspace{.1cm}}|@{\hspace{.1cm}}c@{\hspace{.1cm}}|@{\hspace{.1cm}}c@{\hspace{.1cm}}|@{\hspace{.1cm}}c@{\hspace{.1cm}}|@{\hspace{.1cm}}c@{\hspace{.1cm}}|}
\hline
\textbf{Syntactic} & \textbf{Generic} & \textbf{Semantic} & \textbf{Syntactic} & \textbf{Semantic}  \\ 
\textbf{category} & \textbf{synt. elem}. & \textbf{function} &  \textbf{domain} & \textbf{domain} \\ \hline\hline
%%%%
Attribute names & $\name$ & & $\mathit{Name}$ & \\
Literal values & $\val$& & $\mathit{Value}$ &  \\
Requests & $\rSyntax$ & $\denSemF{R}$ & $\mathit{Request}$ & 
$R \define \mathit{Name} \rightarrow (\mathit{Value} \cup 2^{\mathit{Value}}\cup \{ \excpt \})$  \\ \hline
%%%%%
Expressions & $\expr $ & $\denSemF{E}$ & $\mathit{Expr}$ & $R \rightarrow \mathit{Value} \cup 2^{\mathit{Value}}\cup \{\err, \excpt \}$ \\ \hline
%%%%%%
Effects & $\effect$ & & 
$\mathit{Effect}$ & \\  
Obligation Types & $\obType$ & & 
$\mathit{ObType}$ & \\  
Pep Actions & $\pepAction$ & & $\mathit{PepAction}$ & \\
Instantiated obligations & $\fo$ & & $\mathit{IObligation}$ & \\ 
Obligations & $\ob$ & $\denSemF{O}$ & $\mathit{Obligation}$ &  
%$R \rightarrow \ \mathit{FObligation}\ \cup \{\mathtt{\err}\}$\\ 
{$R \rightarrow \ \mathit{IObligation} \cup \{\mathtt{\err}\}$}\\
\hline
PDP Responses & $\pdpRes$ & & $\mathit{PDPReponse}$ & \\ 
Policies & $\policy$ & $\denSemF{P}$ & $\mathit{Policy}$ & $R \rightarrow \mathit{PDPReponse}$ \\
\hline
%%%%
Policy Decision Points & $\pdpSyntax$ & $\denSemF{P}dp$ & $\mathit{PDP}$ & $R \rightarrow \mathit{PDPReponse}$ \\ \hline
%%%
Combining algorithms & $\algSyntax$ & $\denSemF{A}$ & $\mathit{Alg}\times \mathit{Policy}^+$ &  $R \rightarrow \mathit{PDPReponse}$  \\ \hline
%%%
Decisions & $\dec$& & $\mathit{Decision}$ & \\
Enforcement algorithms & $\enfAlg$ & $\denSemF{E}A$ & $\mathit{EnfAlg}$ & $\mathit{PDPReponse} \rightarrow \mathit{Decision}$ \\  \hline
Policy Auth. System & $pas$ & $\denSemF{P}as$ & $\mathit{PAS}$ & $\mathit{Request} \rightarrow \mathit{Decision}$ \\ \hline
\end{tabular}
\end{table*}

In this section, we present the formal semantics of \facpl\ by formalising the evaluation process introduced in Section~\ref{sec:model} and detailed in Section~\ref{sec:informal_sem}. The semantics is defined by following a denotational approach which means that 
\begin{itemize}
\item we introduce some semantic functions mapping each \facpl\ syntactic construct to an appropriate \emph{denotation}, that is an element of a semantic domain representing the meaning of the construct;
\item the semantic functions are defined in a \emph{compositional} way, so that the semantic of each construct is formulated in terms of the semantics of its sub-constructs.
\end{itemize}
To this purpose, we specify a family of semantic functions mapping each syntactic domain to a specific semantic domain. These functions are inductively defined on the \facpl\ syntax through appropriate semantic clauses following a `point-wise' style. For instance, on the syntactic domain \emph{Policy} representing all \facpl\ policies, we formalise the function $\denSemF{P}$ that defines a semantic domain mapping \facpl\ requests to \pdp\ responses.

In the sequel, we convene that the application of the semantic functions is left-as\-so\-cia\-tive, omits parenthesis whenever possible, and surrounds syntactic objects with the emphatic brackets $[\![$~and~$]\!]$ to increase readability. For instance, $\exprSem{\name}{\req}$ stands for $( \denSemF{E}(\name) )(\req)$ and indicates the application of the semantic function $\denSemF{E}$ to (the syntactic object) $\mathit{n}$ and (the semantic object) $\req$. We also assume that each nonterminal symbol in Tables~\ref{tab:facpl_syntax} and~\ref{tab:facpl_context_syntax} (defining the \facpl\ syntax) denominates the set of constructs of the syntactic category defined by the corresponding EBNF rule, e.g. the nonterminal $\mathit{Policy}$ identifies the set of all \facpl\ policies. The used notations are summarised in Table~\ref{table:syntsemcorr} (the missing semantic domains coincide with the corresponding syntactic ones).

In the rest of this section, we detail the semantics of requests (Section~\ref{sez:formalRequest}), \pdp\ (Sections~\ref{sez:formalPDP} and~\ref{sec:sem_alg}), \pep\ (Section~\ref{sec:sem_enfAlg}), Policy Authorisation System (Section~\ref{sec:sem_pas}) and present some properties of the semantics (Section~\ref{sec:prop_sem}).

\subsection{Semantics of Requests}
\label{sez:formalRequest}

The meaning of a request\footnote{For simplicity sake, here we assume that, when the evaluation of a request takes place, the original request has been already enriched with the information that would be retrieved at run-time from the environment by the context handler (steps 5-8 in Figure~\ref{fig:facplModel}).} is an element of the set $R \define \mathit{Name} \rightarrow (\mathit{Value} \cup 2^{\mathit{Value}}\cup \{ \excpt \})$, that is a total function that maps attribute names to either a literal value, or a set of values (in case of multivalued attributes), or the special value $\excpt$ (if the value for an attribute name is missing). The mapping from a request to its meaning is given by the semantic function $\denSemF{R}: \mathit{Request} \rightarrow R$, defined as follows:
%\[
%\begin{array}{l}
%\reqSemS{\attribute{\name'}{\val'}}{\name} = \\[.1cm]
%\qquad\qquad 
%\left\{
%    \begin{array}{l@{\ }l}
%    	\val' 
%		& \mathtt{if}\ \name = \name'\\[.1cm] 
%	\excpt
%		& \mathtt{otherwise}\\
%    \end{array}
%\right.
%\\
%\mbox{}\\[-.1cm]
%\reqSemS{{\attribute{\name_i}{\val_i}}^{+} {\attribute{\name'}{\val'}}}{\name} = \\[.1cm] 
%\qquad\qquad
%\left\{
%    \begin{array}{l@{\ }l}
%    	\reqSemS{{\attribute{\name_i}{\val_i}}^{+}}{\name} \Cup  \val'  
%		& \mathtt{if}\ \name = \name'\\[.1cm] 
%	\reqSemS{{\attribute{\name_i}{\val_i}}^{+}}{\name}
%	 	& \mathtt{otherwise}\\
%    \end{array}
%\right.
%\end{array}
%%\hspace*{-1cm}
%\tag{\semLbl{1}}
%\label{sem:req}
%\]
\[
\begin{array}{l}
%\hspace*{-.75cm}
\reqSemS{\attribute{\name'}{\val'}}{\name} = 
\left\{
    \begin{array}{l@{\ }l}
    	\val' 
		& \mathtt{if}\ \name = \name'\\[.1cm] 
	\excpt
		& \mathtt{otherwise}\\
    \end{array}
\right.
\\
\mbox{}\\[-.05cm]
%\hspace*{-.75cm}
\reqSemS{{\attribute{\name_i}{\val_i}}^{+} {\attribute{\name'}{\val'}}}{\name} =
%\\
%\qquad\qquad
\left\{
    \begin{array}{l@{\ }l}
    	\reqSemS{{\attribute{\name_i}{\val_i}}^{+}}{\name} \Cup  \val'  
		& \mathtt{if}\ \name = \name'\\[.1cm] 
	\reqSemS{{\attribute{\name_i}{\val_i}}^{+}}{\name}
	 	& \mathtt{otherwise}\\
    \end{array}
\right.
\end{array}
\hspace*{-1.5cm}
\tag{\semLbl{1}}
\label{sem:req}
\]
The semantics of a request, which is a function $\req \in R$, is thus inductively defined on the length of the request. To deal with multivalued attributes we introduce the operator $\Cup$, 
%In particular, it composes an element of the set $\mathit{Value} \cup 2^{\mathit{Value}}\cup \{ \excpt \}$, which is returned by $\reqSemS{{\attribute{\name_i}{\val_i}}^{+}}{\name}$, with a literal value in $\mathit{Value}$. 
which is straightforwardly defined by case analysis on the first argument as follows 
$$
v \Cup v' = \{v,v'\}
\qquad
V \Cup v' = V \cup \{v'\}
\qquad
\excpt \Cup\, v' = v' \ 
$$
where we let $V \in 2^{\mathit{Value}}$.

%\begin{remark}
%\label{remark1}
%In the definition of the semantic function $\denSemF{R}$, by making use of the abstraction notation from the $\lambda$-calculus to denote functions, the clause for a request formed by a single attribute could be expressed as follows
%\reqSemS{\attribute{\name'}{\val'}}{} \ = \  
%\lambda m .\, \mathit{if}\ [m = \name']\ \mathit{then}\ \val'\ \mathit{else}\ \excpt\ . 
%From the principle of \emph{extensional equality} among functions, it follows that the returned function $\req$ is equivalent to that previously defined point-wise on the elements of its domain $\mathit{Name}$, as it can be easily seen by applying $\req$ to an argument $\name$. For the sake of presentation, although in general we could define the semantic functions using the $\lambda$-notation, we prefer to use the more compact and intuitive `point-wise definition style'.
%\end{remark}

\subsection{Semantics of the Policy Decision Process}
\label{sez:formalPDP}

We start defining the semantics of expressions and obligations that will be then exploited for defining the semantics of policies.

In Table~\ref{tab:sem_expression} we report (an excerpt of) the clauses defining the function $\denSemF{E} : \mathit{Expr} \rightarrow (R \rightarrow \mathit{Value}\ \cup\ 2^{\mathit{Value}}\cup \{\err, \excpt \})$ modelling the semantics of expressions. This means that the semantics of an expression is a function of the form $R \rightarrow \mathit{Value} \cup 2^{\mathit{Value}}\cup \{\err, \excpt \}$ that, given a request, returns a literal value, or a set of values, or the special value $\excpt$, or an error (\eg when an argument of an operator has unexpected type). The evaluation order of sub-expressions is not relevant, as they do not generate side-effects.

\begin{table*}[!h]
\caption{Semantics of (an excerpt of) \facpl\ Expressions ($T$ stands for one of the sets of literal values or for the powerset of the set of all literal values, and $i, j \in \{1, 2\}$ with $i \neq j$)} \label{tab:sem_expression}
\vspace*{-.2cm}
\centering
$$
\footnotesize
\begin{array}{l@{\!\! }l@{}}
\hline
\mbox{}\\[-.1cm]
\exprSem{\name}{\req} = \req(\name) & \exprSem{\val}{\req} = \val 
\\
\mbox{}\\[-.1cm]
\begin{array}{@{ }l}
\exprSem{\x{or}(\expr_1,\expr_2)}{\req} = \\%[.2cm]
\ \left\{
   \begin{array}{l@{\ \ }l}
      	\true  & \mathtt{if}\  \exprSem{\expr_1}{\req} = \true \vee \exprSem{\expr_2}{\req} = \true\! \\
	\false  & \mathtt{if}\  \exprSem{\expr_1}{\req} = \exprSem{\expr_2}{\req} = \false \\
	\excpt & \mathtt{if}\   \exprSem{\expr_i}{\req} = \excpt %\\
	%& \quad 
	\wedge\ \exprSem{\expr_j}{\req} \in \{\false,\excpt\} \\
	\err & \mathtt{otherwise} \\
   \end{array}
    \right.\\
\end{array} 
&
\begin{array}{@{ }l}
\exprSem{\x{and}(\expr_1,\expr_2)}{\req} = \\%[.2cm]
\ \left\{
   \begin{array}{l@{\ \ }l}
      	\true  & \mathtt{if}\   \exprSem{\expr_1}{\req} = \exprSem{\expr_2}{\req} = \true \\
	\false  & \mathtt{if}\   \exprSem{\expr_1}{\req} = \false \vee \exprSem{\expr_2}{\req} = \false \\
	\excpt & \mathtt{if}\   \exprSem{\expr_i}{\req} = \excpt %\\
	%& \quad\ 
	\wedge\ \exprSem{\expr_j}{\req} \in \{\true,\excpt\} \\
	\err & \mathtt{otherwise} \\
   \end{array}
    \right.\\
\end{array}
\\
%\hline
\mbox{}\\[-.1cm]
\begin{array}{@{ }l}
\exprSem{\x{not}(\expr)}{\req} = \\%[.2cm]
\ \left\{
   \begin{array}{l@{\ \ }l}
      	\true  & \mathtt{if}\   \exprSem{\expr}{\req} = \false \\
	\false  &\mathtt{if}\   \exprSem{\expr}{\req} = \true \\
	\excpt &\mathtt{if}\   \exprSem{\expr}{\req} = \excpt \\ 
	\err &  \mathtt{otherwise} \\[.1cm]
   \end{array}
    \right.
\end{array}
&
\begin{array}{@{ }l}
%\exprSem{\exprOp(\expr_1,\expr_2)}{\req} = \qquad\quad \exprOp \in \{ \x{add}, \x{subtract}, \x{multiply}, \x{divide}\} \\[.2cm]
\exprSem{\x{add}(\expr_1,\expr_2)}{\req} =  \\%[.2cm]
\ \left\{
   \begin{array}{l@{\ \ }l}
      	(\exprSem{\expr_1}{\req} + \exprSem{\expr_2}{\req}) & \mathtt{if}\  \exprSem{\expr_1}{\req}, \exprSem{\expr_2}{\req} \in \doubleT\\
	\excpt & \mathtt{if}\   \exprSem{\expr_i}{\req} = \excpt\ \wedge\ \exprSem{\expr_j}{\req} \neq \err \hspace*{-.4cm} \\
	\err & \mathtt{otherwise} \\
   \end{array}
    \right.  
 \end{array}
\\
\mbox{}\\[-.1cm]
\multicolumn{2}{l}{
 \begin{array}{@{ }l}
\exprSem{\x{in}(\expr_1,\expr_2)}{\req} = \\%[.2cm]
\ \left\{
   \begin{array}{l@{\ \ }l}
      	(\exprSem{\expr_1}{\req} \in \exprSem{\expr_2}{\req}) & \mathtt{if}\  \exprSem{\expr_1}{\req} \in \type \ \wedge\ \exprSem{\expr_2}{\req} \in 2^\type \\
	\excpt & \mathtt{if}\ \exprSem{\expr_i}{\req} =\ \excpt	\ \wedge\ \exprSem{\expr_j}{\req} \neq \err \\
	\err & \mathtt{otherwise} \\
   \end{array}
    \right.  
 \end{array}
 }
 \\
% \hspace*{-.2cm}
\mbox{}\\[-.1cm]
\multicolumn{2}{l}{
 \begin{array}{@{ }l}
\exprSem{\x{equal}(\expr_1,\expr_2)}{\req} = \\%[.2cm]
\ \left\{
   \begin{array}{l@{\ \ }l}
      	(\exprSem{\expr_1}{\req} = \exprSem{\expr_2}{\req})\! & \mathtt{if}\  \exprSem{\expr_1}{\req}, \exprSem{\expr_2}{\req} \in\type \hspace*{-.6cm}\\
	\excpt & \mathtt{if}\ \exprSem{\expr_i}{\req} =\ \excpt \\
	& \quad\ \wedge\ \exprSem{\expr_j}{\req} \neq \err\!\\
	\err & \mathtt{otherwise} \\
   \end{array}
    \right.  
 \end{array}
 }
\\
%\multicolumn{2}{l}{
% \begin{array}{@{ }l}
%\exprSem{\x{divide}(\expr_1,\expr_2)}{\req} = \\[.2cm]
%\left\{
%   \begin{array}{l@{\quad}l}
%      	(\exprSem{\expr_1}{\req} / \exprSem{\expr_2}{\req}) & \mathtt{if}\  \exprSem{\expr_1}{\req}, \exprSem{\expr_2}{\req} \in \doubleT\\
%	\excpt & \mathtt{if}\   \exprSem{\expr_i}{\req} = \excpt\ \wedge\ \exprSem{\expr_j}{\req} \neq \err \\[.1cm]
%	\err & \mathtt{otherwise} \\
%   \end{array}
%    \right.  
% \end{array}
% }\\
\mbox{}\\[-.1cm]
\hline
\end{array}
$$
\end{table*}

The first raw of the table contains the clauses for basic expressions, \ie attribute names and literal values. The semantics of the expression formed by a name $\name$ is a function that, given a (semantic) request $\req$ in input, returns the value that $\req$ associates to $\name$. 
%As mentioned in Remark~\ref{remark1}, this is written as the clause $\exprSem{\name}{\req} = \req(\name)$ rather than $\exprSem{\name}{} = \lambda \req . \req(\name)$. Similarly, the case of a value $v$ is a function that always returns the value itself, that is the clause $\exprSem{\val}{\req} = v$ instead of $\exprSem{\val}{} = \lambda \req . v$.
This is written as the clause $\exprSem{\name}{\req} = \req(\name)$. Similarly, the case of a value $v$ is a function that always returns the value itself, that is the clause $\exprSem{\val}{\req} = v$.

The remaining clauses, one for each operator, %in Table~\ref{tab:sem_expression} 
present (an excerpt of) the semantics of expression operators. In particular, each clause uses straightforward semantic operators for composing denotations (\eg $=$ corresponds to $\x{equal}$), and implements the management strategy for the special values $\excpt$ and $\err$. The clauses establish that $\err$ takes precedence over $\excpt$ and is returned every time the operator arguments have unexpected types; whereas $\excpt$ is returned when at least an argument is $\excpt$ and there is no $\err$. The clauses of operators $\x{and}$ and $\x{or}$ possibly mask these special values by implementing the behaviour informally described in Section~\ref{sec:informal_sem}. It is worth noticing that the explicit management of missing attributes and evaluation errors ensures a full account of crucial aspects of access control policy evaluation, usually neglected by other proposals from the literature (see, e.g.,~\cite{Jajodia97alogical,RamliNN12,ArkoudasCC14}). The only proposals considering the role of missing attributes are those in~\cite{CramptonM12,CramptonW15}, but they only consider a simplified policy language and assume that expressions cannot generate errors.

Function $\denSemF{E}$ is straightforwardly extended to sequences of expressions by the following clauses
\[
\begin{array}{c}
\exprSem{\epsilon}{\req}  =  \epsilon
\\[.2cm]
\exprSem{\expr'\ \expr^*}{\req} = \exprSem{\expr'}{\req} \ \concat \ \exprSem{\expr^*}{\req}
\end{array}
\tag{\semLbl{2}}
\label{sem:exp_con}
\]
The operator $\concat$ denotes concatenation of sequences of semantic elements and $\epsilon$ denotes the empty sequence. We assume that $\concat$ is strict on $\err$ and $\excpt$, \ie $\err$ is returned whenever an $\err$ or $\excpt$ is in the sequence. Therefore, the evaluation of $\exprSem{\expr^*}{\req}$ fails if any of the expressions forming $\expr^*$ evaluates to $\err$ or $\excpt$.

The semantics of the instantiation of obligations is formalised by the function $\denSemF{O}: \ \mathit{Obligation} \rightarrow\ ( R \rightarrow \ \mathit{IOblgation} \cup \{\mathtt{\err}\} )$ defined by the clause
\[
\begin{array}{l}
\oblSem{[ \, \obType \ \ \pepAction(\expr^*) \, ]}{\req} = \\[.2cm]
\ \
\left\{
\begin{array}{ll}
[ \, \obType \ \  \pepAction(\extVal^*) \ ]  & \ \mathtt{if}\ \  \exprSem{\expr^*}{\req} = \extVal^*\\[.1cm]
\err & \ \mathtt{otherwise}
\end{array}
\right.
\end{array}
\hspace*{-.5cm}
\tag{\semLbl{3a}}
\label{sem:obl}
\]
where $\extVal$ stands for a literal value or a set of literal values. Thus, given a request, the instantiation of an obligation returns an instantiated obligation, if the evaluation of every expression argument of the action returns a value. Otherwise, it returns an error.

Function $\denSemF{O}$ is straightforwardly extended to sequences of obligations as follows
\[
\hspace*{-.3cm}\oblSem{\epsilon}{\req}  =  \epsilon
\qquad
\oblSem{\ob' \ob^*}{\req} = \oblSem{\ob'}{\req} \ \concat \ \oblSem{\ob^*}{\req}
\tag{\semLbl{3b}}
\label{sem:obl2}
\]
Notably, a sequence of instantiated obligations is returned only if every obligation in the sequence is successfully instantiated; otherwise, $\err$ is returned (indeed, $\concat$ is strict on $\err$).

We can now define the semantics of a policy as a function that, given a request, returns an authorisation decision paired with a (possibly empty) sequence of instantiated obligations. Formally, it is given by the function $\denSemF{P}: \mathit{Policy} \rightarrow (R \rightarrow \mathit{PDPReponse})$ that has two defining clauses: one for rules and one for policy sets. The clause for rules is
\[
\!\! \begin{array}{l}
    \policySem{ \ruleOpt{\effect\ \ \x{target:} \, \expr\ \ \x{obl:} \, \ob^{*} \,} }{\req} = \\[.2cm]
    \ \ 
    \left\{
    \begin{array}{l@{\ \ }l}
    \langle \effect\ \ \fo^* \rangle & \mathtt{if} \ \exprSem{\expr}{\req}=\true\\ 
    & \qquad\qquad
    \wedge\ \oblSem{\ob^{*}}{\req} = \fo^* \\[.1cm]
    \notApp & \mathtt{if} \ \exprSem{\expr}{\req}=\false \\
    & \qquad\qquad
     \vee\ \exprSem{\expr}{\req}=\ \excpt\\[.1cm]
    \indet &  \mathtt{otherwise}\\
    \end{array}
    \right.
\end{array}
\hspace*{-1.4cm}
\tag{\semLbl{4a}}
\label{sem:rule}
\]
Thus, the rule effect is returned as a decision when the target evaluates to $\true$, which means that the rule applies to the request, and all obligations are successfully instantiated. In this case, the instantiated obligations are also part of the response. Otherwise, it could be the case that \emph{(i)} the rule does not apply to the request, \ie the target evaluates to $\false$ or to $\excpt$, or that \emph{(ii)} an error has occurred while evaluating the target or instantiating the obligations. 
%Notation $\subobeff{\ob^{*}}{\effect}$ indicates the subsequence of $\ob^{*}$ made of those obligations whose effect is $\effect$. Formally,  its definition is as follows
%\[
%\begin{array}{l}
%\subobeff{\epsilon}{\effect}
% = 
%\epsilon
%\\[.2cm]
%\subobeff{([ \, \effect' \ \obType \ \pepAction(\expr^*) \, ] \ \ob^{*})}{\effect} = \\[.1cm]
%\qquad 
%\left\{
%\begin{array}{l@{\qquad}l}
%[ \, \effect' \ \obType \ \pepAction(\expr^*) \, ]  \ (\subobeff{\ob^{*}}{\effect})
%	& \ \mathtt{if}\ \, \effect' = \effect\\[.1cm]
%\subobeff{\ob^{*}}{\effect} 
%	& \ \mathtt{otherwise}
%\end{array}
%\right.
%\end{array}
%\]

The semantics of policy sets relies on the semantics of combining algorithms. Indeed, as detailed in Section~\ref{sec:sem_alg}, we use a semantic function $\denSemF{A}$ to map each combining algorithm $\algSyntax$ to a function that, to a sequence of policies, associates a function from requests to \pdp\ responses. The clause for policy sets is
%\[
%\!\! 
%\begin{array}{l}
%\policySem{\{ \algSyntax\ \ \x{target:} \, \expr\ \ \x{policies:} \, \policy^{+}\, \modiff{\x{oblp:} \, \ob_{p}^{*}\ \ \x{obld:} \, \ob_{d}^{*}} \, \}}{\req}=\\[.2cm]
%\ 
%\left\{
%\begin{array}{@{}l@{\, \, }l}
%%\langle \effect\ \ \foS_1\ \concat\ \foS_2 \rangle &  
%%\mathtt{if} \
%%\exprSem{\expr}{\req}=\true\, \wedge\, \algSem{\algSyntax, \policy^+}{\req}= \langle \effect\  \foS_1 \rangle \\
%%	& \quad \wedge\ \oblSem{\modiff{\ob_{\effect}^{*}}}{\req} = \foS_2 \\[.1cm]
%\modiff{\langle \permit\ \ \foS_1\ \concat\ \foS_2 \rangle} &  
%\mathtt{if} \
%\exprSem{\expr}{\req}=\true\,  	\wedge\ \oblSem{\modiff{\ob_{p}^{*}}}{\req} = \foS_2 \\
%&
%%\hspace*{-1cm}
%\quad  \wedge\, \algSem{\algSyntax, \policy^+}{\req}= \langle \permit\  \foS_1 \rangle 
% \\[.1cm]
%\modiff{\langle \deny\ \ \foS_1\ \concat\ \foS_2 \rangle} &  
%\mathtt{if} \
%\exprSem{\expr}{\req}=\true\, \wedge\, \oblSem{\modiff{\ob_{d}^{*}}}{\req} = \foS_2 \\
%	& \quad \wedge\ \algSem{\algSyntax, \policy^+}{\req}= \langle \deny\  \foS_1 \rangle  \\[.1cm]
%\notApp &  \mathtt{if} \
%\exprSem{\expr}{\req}=\false \\
%& \quad\ \vee\ \exprSem{\expr}{\req}=\ \excpt \\
%	& \quad\ \vee\ (\exprSem{\expr}{\req}=\true \\
%	& \qquad\qquad\quad \ 
%	\wedge \ \algSem{\algSyntax, \policy^{+}}{\req}=\notApp) \\[.1cm]
%\indet &  \mathtt{otherwise}\\
%\end{array}
%\right.
%\end{array}
%\hspace*{-1.8cm}
%\tag{\semLbl{4b}}
%\label{sem:pol}
%\]
\[
\hspace*{-.4cm}
\begin{array}{@{}l}
\policySem{\{ \algSyntax\ \x{target:} \, \expr\ \x{policies:} \, \policy^{+}\, {\x{\oblp:} \, \ob_{p}^{*}\ \ \x{\obld:} \, \ob_{d}^{*}} \, \}}{\req}\\
=\\
\ 
\left\{
\begin{array}{@{\!\!}l@{\, }l@{}}
%\langle \effect\ \ \foS_1\ \concat\ \foS_2 \rangle &  
%\mathtt{if} \
%\exprSem{\expr}{\req}=\true\, \wedge\, \algSem{\algSyntax, \policy^+}{\req}= \langle \effect\  \foS_1 \rangle \\
%	& \quad \wedge\ \oblSem{\modiff{\ob_{\effect}^{*}}}{\req} = \foS_2 \\[.1cm]
{\langle \permit\ \ \foS_1\ \concat\ \foS_2 \rangle} &  
\mathtt{if} \ \exprSem{\expr}{\req}=\true \\ 
& \wedge\, \algSem{\algSyntax, \policy^+}{\req}= \langle \permit\  \foS_1 \rangle \\ 
&  \wedge\ \oblSem{{\ob_{p}^{*}}}{\req} = \foS_2 \\[.1cm]
{\langle \deny\ \ \foS_1\ \concat\ \foS_2 \rangle} &  
\mathtt{if} \ \exprSem{\expr}{\req}=\true \\ 
& \wedge\, \algSem{\algSyntax, \policy^+}{\req}= \langle \deny\  \foS_1 \rangle \\
& \wedge\, \oblSem{{\ob_{d}^{*}}}{\req} = \foS_2 \\[.1cm]
\notApp 
& \mathtt{if} \ \exprSem{\expr}{\req}=\false \\
& \vee\ \exprSem{\expr}{\req}=\ \excpt \\
& \vee\ (\exprSem{\expr}{\req}=\true \\
& \quad \ 
   \wedge \ \algSem{\algSyntax, \policy^{+}}{\req}=\notApp) \\[.1cm]
\indet &  \mathtt{otherwise}\\
\end{array}
\right.
\end{array}
\hspace*{-2cm}
\tag{\semLbl{4b}}
\label{sem:pol}
\]
Thus, the policy set applies to the request when the target evaluates to $\true$, the semantic of the combining algorithm $\algSyntax$ (which is applied to the enclosed sequence of policies and the request) returns the effect $\effect$ and a sequence of instantiated obligations $\foS_1$, and all the enclosed obligations for the effect $\effect$ are successfully instantiated and return a sequence $\foS_2$. In this case, the \pdp\ response contains $\effect$ and the concatenation of the sequences $\foS_1$ and $\foS_2$. Instead, if the target evaluates to $\false$ or to $\excpt$, or the combining algorithm returns $\notApp$, the policy set does not apply to the request. The response is $\indet$ in the remaining cases, \ie when an error occurred in the evaluation of the target or of the obligations, or when the evaluation of the combining algorithm returned $\indet$.

Finally, the semantic of a \pdp\ is that function from requests to \pdp\ responses obtained by applying the combining algorithm to the enclosed sequence of policies, \ie
\[
\pdpSem{\pdpPol{\algSyntax }{\x{policies:} \, \policy^+}}{\req} = \algSem{\algSyntax, \policy^+}{\req}
\tag{\semLbl{5}}
\label{sem:pdp}
\]

\subsection{Semantics of Combining Algorithms}
\label{sec:sem_alg}

\begin{table*}[!t]
\caption{Auxiliary definitions for the semantics of combining algorithms: (a) combination matrix for the $\algOpAlg{\permitOverO{}}$ operator ($\pdpRes_1$ and $\pdpRes_2$ indicate the first and the second argument, respectively); (b) definition of the $\isFinal{\x{alg}}{\pdpRes}$ predicate}
\label{tab:auxAlg}
\vspace*{-.3cm}
$$
\footnotesize
\begin{array}{@{\!\!}l@{\ \ }c@{}}
(a) & 
\begin{array}{l||c|c|c|c|}
\ \ \lfrac{\pdpRes_1}{\pdpRes_2}
	    & \langle \permit\ \ \foS_2 \rangle & \langle \deny\ \ \foS_2 \rangle & \notApp & \indet \\[.08cm]
            \hline\hline
            \langle \permit\ \ \foS_1 \rangle  & \langle \permit\ \ \foS_1\ \concat\ \foS_2 \rangle & \langle \permit\ \ \foS_1 \rangle & \langle \permit\ \ \foS_1 \rangle & \langle \permit\ \ \foS_1 \rangle \\
            \langle \deny\ \ \foS_1 \rangle & \langle \permit\ \ \foS_2 \rangle & \langle \deny\ \ \foS_1\ \concat\ \foS_2 \rangle & \langle \deny\ \ \foS_1 \rangle & \indet \\
            \notApp & \langle \permit\ \ \foS_2 \rangle & \langle \deny\ \ \foS_2 \rangle & \notApp & \indet \\
            \indet & \langle \permit\ \ \foS_2 \rangle & \indet & \indet & \indet \\
            \hline
\end{array}
\\
\mbox{}\\
(b) &
\begin{array}{l@{\ \ \ }ll@{}}
\hline 
\mbox{}\\[-.2cm]
\begin{array}{l}
\isFinal{\permitOverO{}}{\pdpRes} = \\[.05cm]
\qquad
    \left\{
        \begin{array}{ll}
        \true & \mathtt{if}\ \pdpRes.\dec = \permit\\
        \false & \mathtt{otherwise}
        \end{array}
    \right.
  \end{array}
&
\begin{array}{l}
\isFinal{\denyOverO{}}{\pdpRes} = \\[.05cm]
\qquad
    \left\{
        \begin{array}{ll}
        \true & \mathtt{if}\ \pdpRes.\dec = \deny\\
        \false & \mathtt{otherwise}
        \end{array}
    \right.
    \end{array}
&
\begin{array}{l}
\isFinal{\denyUnlessO{}}{\pdpRes} = \\[.05cm]
\qquad
    \left\{
        \begin{array}{ll}
        \true & \mathtt{if}\ \pdpRes.\dec = \permit\\
        \false & \mathtt{otherwise}
        \end{array}
    \right.
    \end{array}
\\[.5cm]
\begin{array}{l}
\isFinal{\permitUnlessO{}}{\pdpRes} = \\[.05cm]
\qquad
    \left\{
        \begin{array}{ll}
        \true & \mathtt{if}\ \pdpRes.\dec = \deny\\
        \false & \mathtt{otherwise}
        \end{array}
    \right.
    \end{array}
&
\begin{array}{l}
\isFinal{\firstAppO{}}{\pdpRes} = \\[.05cm]
\qquad
    \left\{
        \begin{array}{ll}
        \false & \mathtt{if}\ \pdpRes.\dec = \notApp\\
        \true & \mathtt{otherwsise}
        \end{array}
    \right.
    \end{array}
&
\begin{array}{l}
\isFinal{\onlyOneAppO{}}{\pdpRes} = \\[.05cm]
\qquad
    \left\{
        \begin{array}{ll}
        \true & \mathtt{if}\ \pdpRes.\dec = \indet\\
        \false & \mathtt{otherwsise}
        \end{array}
    \right.    
   \end{array}
\\[.2cm]
\begin{array}{l}
\isFinal{\weakConO{}}{\pdpRes} = \\[.05cm]
\qquad
 \left\{
        \begin{array}{ll}
        \true & \mathtt{if}\ \pdpRes.\dec = \indet\\
        \false & \mathtt{otherwsise}
        \end{array}
  \right.
\end{array}
&
\begin{array}{l}
\isFinal{\strongConO{}}{\pdpRes} = \\[.05cm]
\qquad
    \left\{
        \begin{array}{ll}
        \true & \mathtt{if}\ \pdpRes.\dec = \indet\\
        \false & \mathtt{otherwsise}
        \end{array}
    \right.
    \end{array}
\\
\mbox{}\\[-.2cm]
\hline
\end{array}
\end{array}
$$
\end{table*}

The semantics of combining algorithms is defined in terms of a family of binary operators. Let $\algName$ denote the name of a combining algorithm (i.e., $\permitOverO{}$, $\denyOverO{}$, etc.); the corresponding semantic operator is identified as $\algOp$ and is defined by means of a two-dimensional matrix that, given two PDP responses, calculates the resulting combined response. For instance, Table~\ref{tab:auxAlg}(a) reports the combination matrix for the $\algOpAlg{\permitOverO{}}$ operator. Basically, the matrix specifies the precedences among the $\permit$, $\deny$, $\notApp$ and $\indet$ decisions, and shows how the resulting (sequence of) instantiated obligations is obtained, i.e. by concatenating the instantiated obligations of the responses whose decision matches the combined one. All other combining algorithms described in Section~\ref{sec:informal_sem}, and possibly many others, can be defined in the same manner (see Appendix~\ref{sec:appendixA}). 
%Notice that the operators are not commutative (in fact, the matrices are not symmetric because the order in which sequences of obligations are combined does matter).

The semantics of the combining algorithms can be now formalised by the function $\denSemF{A}: \mathit{Alg}\times \mathit{Policy}^+ \rightarrow (R \rightarrow\ \mathit{PDPReponse})$. This function is defined in terms of the iterative application of the binary combining operators by means of two definition clauses according to the adopted instantiation strategy: the $\all$ strategy always requires evaluation of all policies, while the $\greedy$ strategy halts the evaluation as soon as a final decision is determined (\ie without necessarily taking into account all policies in the sequence). If the $\all$ strategy is adopted, the definition clause is as follows
\[
\hspace*{-.4cm}
\begin{array}{l}
\algSem{\alg{\all}, \policy_1\  \ldots\ \policy_s}{\req} = \\[.2cm]
\ \algOp (\algOp(\ldots \algOp (\policySem{\policy_1}{\req}, \policySem{\policy_2}{\req}),\ldots), \policySem{\policy_s}{\req})
\end{array}
\hspace*{-.9cm}
\raisetag{.2cm}
\tag{\semLbl{6a}}
\label{sem:algA}
\]
meaning that the combining operator is sequentially applied to the denotations of all input policies\footnote{In case of a single policy, operators $\algOpAlg{\permitUnlessO{}}$ and $\algOpAlg{\denyUnlessO{}}$ turn the $\notApp$ and $\indet$ responses into, respectively, $\langle \permit \ \ \epsilon \rangle$ and $\langle \deny\ \ \epsilon \rangle$, while the remaining operators leave them unchanged.}. Instead, if the $\greedy$ strategy is used, the definition clause is as follows
\[
\hspace*{-.4cm}
\begin{array}{@{}l@{}}
\algSem{\alg{\greedy}, \policy_1\ \ldots\ \policy_s}{\req} = \\[.1cm]
\left\{
\begin{array}{@{\:}l@{\,  }r@{\ }l@{}}
\pdpRes_1 & \mathtt{if}& \policySem{\policy_1}{\req} = \pdpRes_1 \, \wedge\, \isFinal{\mathsf{alg}}{\pdpRes_1} \\[.1cm]
\pdpRes_2 & \mathtt{elseif}& \algOp(\pdpRes_1,\policySem{\policy_2}{\req}) = \pdpRes_2 \\
& & \quad
 \wedge\, \isFinal{\mathsf{alg}}{\pdpRes_2} \\[-.2cm]
\vdots & \vdots \\
\pdpRes_{s\textrm{-}1} & \mathtt{elseif}& \algOp(\pdpRes_{s\textrm{-}2},\policySem{\policy_{s\textrm{-}1}}{\req}) = \pdpRes_{s\textrm{-}1} \\ 
& & \quad \wedge\, \isFinal{\mathsf{alg}}{\pdpRes_{s\textrm{-}1}} \\[.1cm]
%\algOp(\pdpRes_{s\textrm{-}1},\policySem{\policy_s}{\req})
\pdpRes_{s}
 & \mathtt{otherwise}& (\mbox{where\ }  \pdpRes_s = \algOp(\pdpRes_{s\textrm{-}1},\policySem{\policy_s}{\req})\\
\end{array}
\right.
\end{array}
\hspace*{-1.5cm}
\tag{\semLbl{6b}}
\label{sem:algB}
\]
where 
%\modif{$\pdpRes_s = \algOp(\pdpRes_{s\textrm{-}1},\policySem{\policy_s}{\req})$ and} 
the $\mathtt{elseif}$ notation is a shortcut to represent mutually exclusive conditions. The auxiliary predicates $\isFinalPred{\x{alg}}$ (one for each combining algorithm $\algName$), given a response in input, check if the response decision is final with respect to the algorithm $\x{alg}$, \ie if such decision cannot change due to further combinations. Their definition is in Table~\ref{tab:auxAlg}(b); as a matter of notation, we use $\pdpRes.\dec$ to indicate the decision of response $\pdpRes$. These predicates are straightforwardly derived from the combination matrices of the binary operators, thus we only comment on salient points. In case of the $\permitOverO{}$ algorithm (and similarly for the others in the first two rows of the table), the $\permit$ decision is the only decision that can never be overwritten, hence, it is final. In case of the $\firstAppO{}$ algorithm, instead, all decisions except $\notApp$ are final since they represent the fact that the first applicable policy has been already found. Both consensus algorithms have $\indet$ as final decision, because no form of consensus can be reached once an $\indet$ is obtained. Similarly, the $\onlyOneAppO{}$ algorithm has $\indet$ as final decision.

\subsection{Semantics of the Policy Enforcement Process}
\label{sec:sem_enfAlg}

The semantics of the enforcement process defines how the \pep\ discharges obligations and enforces authorisation decisions. To define this process, we use the auxiliary function $\pepSemR{\ }: \mathit{IOblgation}^* \rightarrow \{ \true, \false \}$ that, given a sequence of instantiated obligations, executes such obligations and returns a boolean value that indicates whether the evaluation is successfully completed. Since failures caused by optional obligations can be safely ignored by the \pep, only failures of mandatory obligations (\ie of type $\obM$) have to be taken into account. The function is defined as follows
$$
\begin{array}{r@{\quad }c@{\ }l}
\pepSemR{\epsilon} & = & \x{true}
\\[.25cm]
\pepSemR{[\,\obO \ \ \pepAction(\extVal^*)\,]\,\concat\,\fo^*} & = & \pepSemR{\fo^*} 
\\[.25cm]
\multicolumn{3}{l}{
\begin{array}{l}
\pepSemR{[\,\obM \ \ \pepAction(\extVal^*)\,]\,\concat\,\fo^*}  =  
%\\
%\qquad\qquad\qquad\qquad\quad
\left\{\begin{array}{@{}l@{\quad}l}
\pepSemR{\fo^*} & \mathtt{if}\ \pepAction(\extVal^*) \Downarrow\!\x{ok}
\\[.15cm]
\ \x{false} & \mathtt{otherwise}
\end{array}\right.
\end{array}
}
\end{array}
$$
where \mbox{$\Downarrow\!\x{ok}$} denotes if the discharge of the action $\pepAction(\extVal^*)$ succeeded. Since the set of action identifiers is intentionally left unspecified (see Section~\ref{sec:policySyntax}), the definition of the predicate \mbox{$\Downarrow\!\x{ok}$} is hence unspecified too; we just assume that it is total and deterministic. In other words, the syntactic domain $\mathit{PepAction}$ is a parameter of the syntax, while the predicate \mbox{$\Downarrow\!\x{ok}$} is a parameter of the semantics. The latter parameter could be refined to deal with, e.g., obligations to be enforced after the decision releasing (see Section~\ref{sec:conclusions}). For example, discharging obligations could simply refer to the fact that the system has taken charge of their execution, rather than to the fact that they have been completely executed.

The semantics of $\pep$ is thus defined with respect to the enforcement algorithms. Formally, given an enforcement algorithm and a \pdp\ response, the function $\denSemF{E}A: \mathit{EnfAlg} \rightarrow (\mathit{PDPReponse} \rightarrow \mathit{Decision})$ returns the enforced decision. It is defined by three clauses, one for each algorithm. The clause for the $\denyBiased$ algorithm follows
\[
\hspace*{-.4cm}
\begin{array}{l}
\pepSem{\denyBiased}{\pdpRes} =\\
\ \
\left\{
\begin{array}{l@{\quad}l}
\permit  & \mathtt{if}\ \pdpRes.\dec = \permit\ \wedge\ \pepSemR{\pdpRes.\fo} \\
\deny & \mathtt{otherwise}
\end{array}
\right.
\end{array}
\hspace*{-.3cm}
\tag{\semLbl{7a}}
\label{sem:pepA}
\]
Likewise $\pdpRes.\dec$ that indicates the decision of the response $\pdpRes$, notation $\pdpRes.\fo$ indicates the sequence of instantiated obligations of $\pdpRes$. The $\permit$ decision is enforced only if this is the decision returned by the \pdp\ and all accompanying obligations are successfully discharged. If an error occurs, as well as if the \pdp\ decision is not $\permit$, a $\deny$ is enforced. The clause for the $\permitBiased$ algorithm is the dual one, whereas the clause for the $\based$ algorithm is as follows
\[
\hspace*{-.2cm}
\begin{array}{l}
\!\! \pepSem{\based}{\pdpRes} =\\
\
\left\{
\begin{array}{l@{\quad}l}
\permit  & \mathtt{if}\ \pdpRes.\dec = \permit\ \wedge\ \pepSemR{\pdpRes.\fo}  \\[.1cm]
\deny & \mathtt{if}\ \pdpRes.\dec = \deny\ \wedge\ \pepSemR{\pdpRes.\fo} \\[.1cm]
\notApp &  \mathtt{if}\ \pdpRes.\dec =\notApp \\[.1cm]
\indet &  \mathtt{otherwise}
\end{array}
\right.
\end{array}
\hspace*{-.65cm}
\tag{\semLbl{7b}}
\label{sem:pepB}
\]
Both decisions $\permit$ and $\deny$ are enforced only if all obligations in the \pdp\ response are successfully discharged, otherwise they are enforced as $\indet$. Instead, decisions $\notApp$ and $\indet$ are enforced without modifications. 

\subsection{Semantics of the Policy Authorisation System} 
\label{sec:sem_pas}

The semantics of a Policy Authorisation System is defined in terms of the composition of the semantics of \pep\ and \pdp. It is given by the function 
$\denSemF{P}as : \mathit{PAS} \rightarrow (\mathit{Request} \rightarrow \mathit{Decision})$ defined by the following clause
\[
\begin{array}{l}
\pasSem{\{ \,  \x{pep:} \, \enfAlg\ \ \x{pdp:}\, \pdpSyntax \, \},\rSyntax} = \\[.1cm]
\qquad\qquad\quad 
\pepSem{\enfAlg}{(\pdpSem{\pdpSyntax}{(\reqSem{\rSyntax}{})})}
\end{array}
\tag{\semLbl{8}}
\label{sem:pas}
\]
Basically, given a request $\rSyntax$ in the \facpl\ syntax, this is converted into its functional representation by  the function $\denSemF{R}$ (see Section~\ref{sez:formalRequest}). This result is then passed to the semantics of the \pdp, \ie $\pdpSem{\pdpSyntax}{}$, which returns a response that on its turn is passed to the  semantics of the \pep, \ie $\pepSem{\enfAlg}{}$. The latter function returns the final decision of the Policy Authorisation System when given the request $\rSyntax$ in input.

\subsection{Properties of the Semantics} 
\label{sec:prop_sem}

We conclude this section with some properties and results regarding the \facpl\ semantics. 

%The main result (whose proof is in Appendix~\ref{app:formal_sem}) is that the semantics is a \emph{total} and \emph{deterministic} function. This means that the semantics is defined for all possible input pairs consisting of a \facpl\ specification, \ie a Policy Authorisation System, and a request, and that it always returns the same decision any time it is called with a specific input pair.
The main result is that the semantics is \emph{total} and \emph{deterministic}. This means that it is defined for all possible input pairs consisting of a \facpl\ specification, \ie a Policy Authorisation System, and a request, and that it always returns the same decision any time it is applied to a specific pair.

\begin{theorem}[Total and Deterministic Semantics]
\label{theo:deterministic}\ 
\begin{enumerate}
\item For all $pas \in \mathit{PAS}$ and $\rSyntax\in\mathit{Request}$, there exists a $\dec \in\mathit{Decision}$, such that $\pasSem{pas,\rSyntax} = \dec$. 
\smallskip
\item For all $pas \in \mathit{PAS}$, $\rSyntax\in\mathit{Request}$ and $\dec,\dec' \in\mathit{Decision}$, it holds that\\[.1cm]
$
\begin{array}{l}
\quad \pasSem{pas,\rSyntax} = \dec \ \ \wedge\ \ \pasSem{pas,\rSyntax} = \dec'\\ 
\quad \Rightarrow\ \ \dec=\dec'\,.
\end{array}
$ 
\end{enumerate}
\end{theorem}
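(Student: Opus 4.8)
The plan is to prove both parts simultaneously by a single structural induction that follows the compositional definition of the semantics in Section~\ref{sec:formal_sem}. The key observation is that every semantic clause defines its result by a case analysis, so \emph{totality} (part~1) reduces to showing that the cases are \emph{exhaustive}, while \emph{determinism} (part~2) reduces to showing that they are \emph{mutually exclusive}; in both respects one may invoke the inductive hypothesis that the denotations of all proper sub-terms are already total and single-valued. Since $\denSemF{P}as$ is defined in~\eqref{sem:pas} as the composition $\pepSem{\enfAlg}{(\pdpSem{\pdpSyntax}{(\reqSem{\rSyntax}{})})}$, it suffices to establish the two properties for each semantic function and then close under composition. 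I would therefore state one lemma per function and discharge them in dependency order: $\denSemF{R}$, then $\denSemF{E}$ (with its sequence extension), then $\denSemF{O}$, then the binary operators $\algOp$ and the algorithm semantics $\denSemF{A}$, then $\denSemF{P}$ and $\denSemF{P}dp$, then $\denSemF{E}A$, and finally $\denSemF{P}as$. For $\denSemF{R}$, clause~\eqref{sem:req} is total because its base case carries an explicit \texttt{otherwise} branch returning $\excpt$, so every name receives a value in $\mathit{Value}\cup 2^{\mathit{Value}}\cup\{\excpt\}$; single-valuedness follows by induction on request length, using that $\Cup$ is itself given by a disjoint, exhaustive case analysis on its first argument.

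The expression case (Table~\ref{tab:sem_expression}) goes by structural induction on $\expr$. The base clauses for names and values return a unique result, and each operator clause is exhaustive thanks to its final \texttt{otherwise} branch yielding $\err$; the genuinely non-trivial points are the clauses for $\x{and}$ and $\x{or}$, where I would check that the three explicit branches (outputs $\true$, $\false$, $\excpt$) cannot fire together — for instance, for $\x{or}$ the $\excpt$ branch forces one argument to be $\excpt$ and the other in $\{\false,\excpt\}$, which defeats both the $\true$ and the $\false$ guards — so that with the catch-all they partition the input space. The sequence extension~\eqref{sem:exp_con} and the obligation clauses~\eqref{sem:obl}--\eqref{sem:obl2} then follow immediately, using that $\concat$ is strict on $\err$ and $\excpt$, which keeps the induced sequence functions total and deterministic.

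For the combining layer I would first observe that each binary operator $\algOp$ is total and deterministic because its combination matrix (Table~\ref{tab:auxAlg}(a) and Appendix~\ref{sec:appendixA}) fills every cell with exactly one entry and its rows and columns range over all four decisions. The $\all$ clause~\eqref{sem:algA} then iterates this total, single-valued operator over a finite non-empty policy sequence and so inherits both properties from the inductive hypothesis on $\denSemF{P}$, whereas the $\greedy$ clause~\eqref{sem:algB} is mutually exclusive by construction of the $\mathtt{elseif}$ chain and exhaustive by its final \texttt{otherwise} branch, provided each predicate $\isFinalPred{\algName}$ is total and deterministic, which is immediate from Table~\ref{tab:auxAlg}(b). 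The policy clauses are then routine: the rule clause~\eqref{sem:rule} and the policy-set clause~\eqref{sem:pol} split on the value of the target expression (which by induction lies in a fixed set) and on the algorithm and obligation denotations, and their branches are verified to be disjoint and exhaustive; $\denSemF{P}dp$ in~\eqref{sem:pdp} merely forwards to $\denSemF{A}$.

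Finally, $\denSemF{E}A$ is total and deterministic once we invoke the paper's standing assumption that the action predicate $\Downarrow\!\x{ok}$ is total and deterministic, which makes the auxiliary $\pepSemR{\cdot}$ a total function; the guards of clauses~\eqref{sem:pepA}--\eqref{sem:pepB} are then pairwise disjoint and jointly exhaustive. Composing these facts through~\eqref{sem:pas} delivers both parts of the theorem. I expect the main obstacle to be the bookkeeping of mutual exclusivity and exhaustiveness for the most intricate clauses — concretely the $\x{and}$/$\x{or}$ expression semantics and the $\greedy$ combining clause together with the underlying operator matrices — since these are the places where an overlapping or a missing case could silently break well-definedness; every other clause is exhaustive-by-\texttt{otherwise} and hence comparatively mechanical.
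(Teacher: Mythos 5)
Your proposal is correct and takes essentially the same route as the paper's proof: reduce via the composition in clause~(\ref{sem:pas}), then verify clause-by-clause that each semantic function's case analysis is exhaustive (totality) and mutually exclusive (determinism) --- the paper's requirements R1/R2 --- relying on the stipulated totality and determinism of $\Cup$, $\concat$, the operators $\algOp$, the predicates $\isFinalPred{\algName}$ and $\Downarrow\!\x{ok}$. The only presentational difference is that the paper treats the mutually recursive pair $\denSemF{P}$/$\denSemF{A}$ by an explicit induction on policy depth rather than your single structural induction, but these coincide since the policies fed to a combining algorithm are proper sub-terms of the enclosing policy set.
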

\begin{proof}
It boils down to show that $\denSemF{P}as$ is a total and deterministic function (see Appendix~\ref{app:formal_sem}).
\end{proof}

%The fact that \facpl\ has a total semantics is somehow expected since it is not a general-purpose, Turing-complete language.\todo{non credo valga la pena dirlo}

We now consider the so-called \emph{reasonability properties} of~\cite{TschantzK06} that precisely characterise the expressiveness of a policy language. \facpl\ enjoys the property called \emph{independent composition} of policies, which means that the results of the combining algorithms depend only on the decisions of the policies given in input. This clearly follows from the use of combination matrices. On the contrary, \facpl\ ensures neither \emph{safety}, i.e. a request that is granted may not be granted anymore if it is extended with new attributes, nor \emph{monotonicity}, i.e. the introduction of a new policy in a combination of policies may change a $\permit$ decision to a different one. This should be somehow expected as these latter two properties are enjoyed neither by \xacml\ nor by other policy languages featuring $\deny$ rules and combining algorithms similar to those we have presented.

We conclude by highlighting the relationship between attribute names occurring in a policy and names defined by requests. By letting $\mathit{Names}(\policy)$ to indicate the set of attribute names occurring in (the expressions within) $\policy$, we can state the following result which has important practical implications on the feasibility of the automatic analysis.

\begin{lemma}[Policy relevant attributes]  
\label{lemma1}
For all $\policy \in \mathit{Policy}$ and $\req,\req' \in R$ such that $\req(\name) = \req'(\name)$ for all $\name \in \mathit{Names}(\policy)$, it holds that 
%$\policySem{\policy}{\req} = \pdpRes$ implies $\policySem{\policy}{\req'} = \pdpRes$.
$\policySem{\policy}{\req} = \policySem{\policy}{\req'}$.
\end{lemma}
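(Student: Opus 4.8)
The plan is to prove the statement by structural induction on the policy $\policy$, after first establishing two auxiliary \emph{relevance} results, one for expressions and one for obligations. The guiding intuition is that a request is consulted by the semantics only through the base clause $\exprSem{\name}{\req} = \req(\name)$, so the denotation of any syntactic object can depend on $\req$ only through the values that $\req$ assigns to the attribute names syntactically occurring in that object.

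First I would prove that for every expression $\expr$ and all $\req,\req'\in R$ agreeing on $\mathit{Names}(\expr)$ (the names occurring in $\expr$) one has $\exprSem{\expr}{\req} = \exprSem{\expr}{\req'}$. This goes by structural induction on $\expr$ using the clauses of Table~\ref{tab:sem_expression}. The value case $\exprSem{\val}{\req}=\val$ is independent of $\req$; the name case $\exprSem{\name}{\req}=\req(\name)$ holds because $\name\in\mathit{Names}(\expr)$, so $\req(\name)=\req'(\name)$ by hypothesis; and every operator clause computes its result purely from the denotations of its immediate sub-expressions, to which the induction hypothesis applies since the $\mathit{Names}$ of a compound expression is the union of the $\mathit{Names}$ of its parts. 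The extension to expression sequences via clause~\eqref{sem:exp_con} is immediate, and from it I would derive the analogous statement for obligations and obligation sequences, since by clauses~\eqref{sem:obl} and~\eqref{sem:obl2} the instantiation $\oblSem{\ob^*}{\req}$ is determined entirely by the denotations of the argument expressions.

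The main induction on $\policy$ then has two cases. For a rule, clause~\eqref{sem:rule} shows that $\policySem{\policy}{\req}$ is a function of $\exprSem{\expr}{\req}$ (the target) and $\oblSem{\ob^*}{\req}$ (the obligations) only; since all names of the target and obligations lie in $\mathit{Names}(\policy)$ and $\req,\req'$ agree there, the auxiliary results give equality of both, hence of the responses. For a policy set, clause~\eqref{sem:pol} additionally involves $\algSem{\algSyntax,\policy^+}{\req}$. Here I would observe that the combining-algorithm clauses~\eqref{sem:algA} and~\eqref{sem:algB} are built solely out of the sub-policy denotations $\policySem{\policy_i}{\req}$ through the binary operators $\algOp$ and the predicates $\isFinalPred{\algName}$; consequently, if each $\policySem{\policy_i}{\req}=\policySem{\policy_i}{\req'}$, then $\algSem{\algSyntax,\policy^+}{\req}=\algSem{\algSyntax,\policy^+}{\req'}$. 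Each enclosed policy $\policy_i$ satisfies $\mathit{Names}(\policy_i)\subseteq\mathit{Names}(\policy)$, so $\req,\req'$ agree on $\mathit{Names}(\policy_i)$ and the induction hypothesis applies; combining this with the agreement of the target and obligation denotations yields equality of the policy-set responses.

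The one point requiring care, and the main (though mild) obstacle, is verifying that $\mathit{Names}$ is accumulated correctly, i.e.\ that $\mathit{Names}(\policy)$ contains not only the names of the policy's own target and obligations but also those of all enclosed policies, so that the containment $\mathit{Names}(\policy_i)\subseteq\mathit{Names}(\policy)$ holds and the induction hypothesis is applicable. Once this containment is pinned down from the definition of $\mathit{Names}$, the remaining steps are routine compositional checks, and the greedy case of the combining semantics poses no extra difficulty, since the predicates $\isFinalPred{\algName}$ depend only on decisions, which are already fixed once the sub-policy denotations coincide.
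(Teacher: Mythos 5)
Your proposal is correct and follows essentially the same route as the paper's proof: an auxiliary relevance result for expressions proved by structural induction, lifted to (sequences of) obligations, and then a main induction on the policy using the observation that the combining-algorithm clauses depend only on the sub-policy denotations, together with the containment $\mathit{Names}(\policy_i)\subseteq\mathit{Names}(\policy)$. The only cosmetic difference is that the paper phrases the main induction as induction on the depth of the policy rather than as structural induction.
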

\begin{proof}
The property straightforwardly derives from the semantics of \facpl\ expressions and from Theorem~\ref{theo:deterministic} (see Appendix~\ref{app:formal_sem}).
\end{proof}

%% !TEX root = ./facpl_journal.tex

\section{FACPL Constraint-based Representation}
\label{sec:constraint}

The analysis of access control policies is essential for ensuring confidentiality and integrity of system resources. In the case of \facpl, the analysis is made difficult by the hierarchical structure of policies, the presence of conflict resolution strategies and the intricacies deriving from the many involved controls. Moreover, no off-the-shelf analysis tool directly takes \facpl\ specifications in input. Hence, for enabling the analysis of \facpl\ policies through well-established and efficient software tools, we introduce a constraint formalism that permits, on the one hand, to uniformly represent policies and, on the other hand, to perform extensive checks of (a possibly infinite number of) requests. 

The constraint-based representation we propose specifies satisfaction problems in terms of formulae based on multiple theories as, e.g., boolean and linear arithmetics. Such kind of formulae are usually called \emph{satisfiability modulo theories} (SMT) formulae. The SMT-based approach is supported by the relevant progress made in the development of automatic SMT solvers (e.g., Z3~\cite{MouraB08}, CVC4~\cite{CVC4}, Yices~\cite{Yices}), which make SMT formulae to be extensively employed in diverse analysis applications~\cite{MouraB11}. 

This section introduces our constraint-based representation of \facpl\ policies, while the analysis it enables is presented in Section~\ref{sec:analysis}. We first introduce the constraint formalism (Section~\ref{sec:constraintFormalism}), then we present the constraint representation of \facpl\ policies (Section~\ref{sec:translation}) and some crucial results stating that it is a semantic-preserving representation (Section~\ref{sec:prop_transl}), and finally we show some examples of constraints obtained from our e-Health case study (Section~\ref{sec:constr_ehealth}). 

\subsection{A Constraint Formalism}
\label{sec:constraintFormalism}

The constraint formalism we present here extends boolean and inequality constraints with a few additional operators aiming at precisely representing FACPL constructs. Intuitively, a constraint is a relation defined  through some conditions on a set of attribute names\footnote{In the literature, constraints are typically defined on a set of \emph{variables}. In our framework, the role of variables is played by attribute names. Therefore, to maintain a coherent terminology throughout the paper, we refer to constraint variables as attribute names.}. An assignment of values to attribute names satisfies a constraint if all constraint conditions are matched. Our formalism, besides usual operators and values, explicitly considers the role of missing attributes, by assigning $\excpt$ to attribute names, and of run-time errors, \ie type mismatches in constraint evaluations. In fact, according to the usually accepted semantics of access control policies (besides \xacml, see, e.g.,~\cite{CramptonM12,CramptonW15}), a condition involving a missing attribute should not be evaluated to $\false$ by default.

\begin{table*}[!h]
\caption{Semantics of constraints ($T$ stands for one of the sets of literal values or for the powerset of the set of all literal values, and $i, j \in \{1, 2\}$ with $i \neq j$)}
\label{tab:constr_sem}
\vspace*{-.3cm}
$$
\footnotesize
\begin{array}{@{}l@{}l@{}l@{}}
\hline
\mbox{}\\[-.2cm]
%\cs{\name} = \req(n)
%& 
%\cs{\val} =  \val
%\\[.2cm]
\multicolumn{3}{c}{
\begin{array}{c}
\cs{\name} = \req(n)
\hspace*{4cm} 
\cs{\val} =  \val
\\[.2cm]
    \begin{array}{@{ }l}
    \cs{\isBot{\const}} = \\
    \
        \left\{
       \begin{array}{l@{\ \ }l}
    	\true & \mathtt{if}\  \cs{\const} = \excpt\\[.1cm]
       	\false & \mathtt{otherwise}\\
       \end{array}
        \right.
    \end{array}
\ \ 
    \begin{array}{@{ }l}
    \cs{\isErr{\const}} = \\
    \    \left\{
       \begin{array}{l@{\ \ }l}
    	\true & \mathtt{if}\  \cs{\const} = \err\\[.1cm]
       	\false & \mathtt{otherwise}\\
       \end{array}
        \right.
    \end{array}  
\ \  
{
    \begin{array}{@{ }l}
    \cs{\isBool{\const}} = \\
    \    \left\{
       \begin{array}{l@{\ \ }l}
    	\true & \mathtt{if}\  \cs{\const} \in \{ \true, \false\}\\[.1cm]
       	\false & \mathtt{otherwise}\\
       \end{array}
        \right.
    \end{array}  }
\end{array}
}
\\
\mbox{}\\[-.1cm]
\begin{array}{@{ }l}
\cs{\fnot\ \const} = \\
\    \left\{
   \begin{array}{l@{\ \ }l}
      	\true  & \mathtt{if}\   \cs{\const} = \false \\[.1cm]
	\false  & \mathtt{if}\   \cs{\const} = \true \\[.1cm]
	\excpt & \mathtt{if}\   \cs{\const} = \excpt \\[.1cm]
	\err & \mathtt{otherwise} \\
   \end{array}
    \right.   
\end{array}    
&
\begin{array}{@{ }l}
\cs{\const_1\ \fand\ \const_2} = \\
\    \left\{
   \begin{array}{l@{\ \ }l}
      	\true  & \mathtt{if}\   \cs{\const_1} =  \cs{\const_2} = \true \\[.1cm]
	\false  & \mathtt{if}\   \cs{\const_1} = \false 
	\ \ \mathtt{or}\ \ \cs{\const_2} = \false \\[.1cm]
	\excpt & \mathtt{if}\   \cs{\const_i} = \excpt
	\ \ \mathtt{and}\ \cs{\const_j} \in \{\true,\excpt\} \\[.1cm] 
	\err & \mathtt{otherwise} \\
   \end{array}
    \right.  
\end{array}
&
\begin{array}{@{ }l@{\hspace*{-.2cm}}}
\cs{\const_1\ \for\ \const_2} = \\
\    \left\{
   \begin{array}{l@{\ \ }l}
      	\true  & \mathtt{if}\   \cs{\const_1} = \true 
	\ \ \mathtt{or}\ \ \cs{\const_2} = \true \\[.1cm]
	\false  & \mathtt{if}\   \cs{\const_1} =  \cs{\const_2} = \false \\[.1cm]
	\excpt & \mathtt{if}\   \cs{\const_i} = \excpt 
	\ \ \mathtt{and}\ \cs{\const_j} \in \{\false,\excpt\} \\[.1cm] 
	\err & \mathtt{otherwise} 
   \end{array}
    \right.
\end{array}
\\
\mbox{}\\[-.1cm]
\begin{array}{@{ }l}
\cs{\lnot\ \const} = \\
\quad    \left\{
   \begin{array}{l@{\ \ }l}
      	\true  & \mathtt{if}\   \cs{\const} = \false \\
	&\ \ \mathtt{or}\ \ \cs{\const} = \excpt  \\[.1cm]
	\false  & \mathtt{otherwise} \\
   \end{array}
    \right.
\end{array}
&
\begin{array}{@{ }l}
\cs{\const_1\ \wedge\ \const_2} = \\
\ 
    \left\{
   \begin{array}{l@{\ \ }l}
      	\true  & \mathtt{if}\   \cs{\const_1} = \true 
	\ \ \mathtt{and}\ \ \cs{\const_2} = \true  \\[.1cm]
	\false  & \mathtt{otherwise} \\
   \end{array}
    \right.
 \end{array}
&
\begin{array}{@{ }l}
\cs{\const_1\ \vee\ \const_2} = \\
\quad    \left\{
   \begin{array}{l@{\ \ }l}
      	\true  & \mathtt{if}\   \cs{\const_1} = \true 
	\ \ \mathtt{or}\ \ \cs{\const_2} = \true  \\[.1cm]
	\false  & \mathtt{otherwise} \\
   \end{array}
    \right.
\end{array}
\\
\mbox{}\\[-.1cm]
\multicolumn{3}{c}{
\begin{array}{@{ }l}
\cs{\const_1\ =\ \const_2} = \\
\    \left\{
   \begin{array}{l@{\ \ }l}
      	\true  & \mathtt{if}\   
	\cs{\const_1}, \cs{\const_2} \!\in\! \type
	\ \mathtt{and}\  
	\cs{\const_1} = \cs{\const_2}\\[.1cm]
	\false  & \mathtt{if}\
	\cs{\const_1}, \cs{\const_2} \!\in\! \type
	\ \mathtt{and}\ 
	\cs{\const_1} \neq \cs{\const_2} \\[.1cm]
	\excpt & \mathtt{if}\   \cs{\const_i} = \excpt	\, \mathtt{and}\ \cs{\const_j} \neq \err \\[.1cm] 
	\err & \mathtt{otherwise}
   \end{array}
    \right.  
 \end{array}
\ \
\begin{array}{@{ }l@{}}
\mbox{}\\[-.75cm]
\cs{\const_1\ +\ \const_2} = \\
\quad    \left\{
   \begin{array}{l@{\  }l}
      	\cs{\const_1} + \cs{\const_2}  & \mathtt{if}\ 
	\ \cs{\const_1}, \cs{\const_2} \!\in\! \doubleT
	 \\[.1cm]
	\excpt & \mathtt{if}\   \cs{\const_i} = \excpt  \mathtt{and}\ \cs{\const_j} \neq \err \\[.1cm]
	\err & \mathtt{otherwise} \\
   \end{array}
    \right.
\end{array}
}
\\
\mbox{}\\[-.2cm]
\hline
\end{array}
$$
\end{table*}

\smallskip
\noindent
\emph{Syntax}.
Constraints are written according to the following grammar. 
$$
\begin{array}{@{}l@{\, }l@{ \, }l}
\mathit{Constr} & ::= & \mathit{Value} \Sep \mathit{Name} \Sep \isBot{\mathit{Constr}} \\[.1cm]
& \ \ \mid &  \isErr{\mathit{Constr}} \Sep {\isBool{\mathit{Constr}}}\\[.1cm]
 & \  \ \mid &  \lnot\, \mathit{Constr} \Sep \fnot\, \mathit{Constr} \Sep \mathit{Constr}\, \opC\, \mathit{Constr}  
\\[.2cm]
\ \opC\ & ::= & \ \wedge \ \Sep \ \vee \ \Sep \ \fand \ \Sep \ \for \ \Sep \ = \ \Sep \ > \ \Sep \ \in \\
  & \ \ \mid &  \ + \, \Sep \ - \, \Sep \ \ast \ \, \Sep \ /\\[.15cm]

\end{array}
$$
%shown in Table~\ref{tab:constraints} 
where the nonterminals $\mathit{Value}$ and $\mathit{Name}$ are defined in Table~\ref{tab:facpl_syntax}. Thus, a constraint can be a literal value, an attribute name, or a more complex constraint obtained through predicates $\isBot{}$, $\isErr{}$ and $\isBool{}$, or through boolean, comparison and arithmetic operators. The operators $\lnot$, $\wedge$ and $\vee$ are the usual boolean ones, while $\fnot$, $\fand$ and $\for$ correspond to the 4-valued ones of \facpl\ expressions which implement the special management of $\excpt$ and $\err$ values.

In the sequel, in addition to the notations of Table~\ref{table:syntsemcorr}, we use the letter $\const$ to denote a generic element of the set of all constraints identified by the nonterminal $\mathit{Constr}$.

\smallskip
\noindent
\emph{Semantics}.
The semantics of constraints is modelled by the function $\constrSem : \mathit{Constr} \rightarrow (R \rightarrow \mathit{Value}\ \cup\ 2^{\mathit{Value}}\cup \{\err, \excpt \})$ inductively defined by the clauses in Table~\ref{tab:constr_sem} (the clauses for $>$, $\in$, $-$, $\ast$ and $/$ are omitted as they are similar to those for $=$ or $+$). Hence, the semantics of a constraint is a function that, given the functional representation of a request (i.e., an assignment of values to attribute names), returns a literal value or a set of literal values or one of the special values $\excpt$ and $\err$.

The semantics of constraints, except for the cases of predicates and usual boolean operators, mimics the semantic definitions of the corresponding \facpl\ expression operators defined in Table~\ref{tab:sem_expression} (e.g., the constraint operator $\for$ corresponds to the expression operator $\x{or}$, as well as $+$ corresponds to $\x{add}$). The clause defining the semantics of predicate $\isBot{\const}$ (resp. $\isErr{\const}$) returns $\true$ only if the constraint $\const$ evaluates to $\excpt$ (resp. $\err$), while that of predicate $\isBool{\const}$ returns $\true$ only if the constraint $\const$ evaluates to a boolean value. The clauses for usual boolean operators are instead defined by ensuring that only boolean values can be returned. Specifically, they explicitly define conditions leading to result $\true$, while in all the other cases the result is $\false$. The constraint $\lnot\, \const$ evaluates to $\true$ not only when the evaluation of $\const$ returns $\false$, but also when it returns $\excpt$. This is particularly convenient for translating \facpl\ policies because, in case of $\notApp$ decisions, $\excpt$ is treated as $\false$.

\subsection{From FACPL Policies  to Constraints}
\label{sec:translation}

The constraint-based representation of a \facpl\ policy is a logical combination of the constraints representing targets, obligations and combining algorithms occurring within the policy. Of course, combining algorithms using the $\greedy$ instantiation strategy are not dealt with, as we cannot statically predict when the (sequential) evaluation of a sequence of policies can stop since the decision, that would have resulted from evaluating the whole sequence, has been obtained.
The translation is formally, and \emph{compositionally}, defined by a family of translation functions $\translSymbol_\cdot$, that return the constraints representing the different \facpl\ terms. We use the emphatic brackets $\{\!|$ and $|\!\}$ to represent the application of a translation function to a syntactic term. 

We start by presenting the translation of \facpl\ expressions, whose operators are very close to (some of) those on constraints. The translation is formally given by the function $\transFunct{E}: \mathit{Expr} \rightarrow \mathit{Constr}$, whose defining clauses are given below
\[
\hspace*{-.5cm}
\begin{array}{@{}l@{}}
\translExpr{\mathit{\val}} =  \val
\qquad\qquad\qquad\qquad\qquad
\translExpr{\mathit{\name}}=  \name
\\[.2cm]
\translExpr{\x{not}(\expr)} = \fnot{\translExpr{\expr}} 
\\[.2cm]
\translExpr{\exprOperator(\expr_1, \expr_2)} = 
%\\
%\qquad\qquad\qquad\quad
\translExpr{\expr_1}\: \mathtt{getCop}(\exprOperator) \: \translExpr{\expr_2} 
\end{array}
\hspace*{-1.3cm}
\tag{\conLbl{1}}
\label{cstr:expr}
\]
Thus, $\transFunct{E}$ acts as the identity function on attribute names and values, and as an homomorphism on operators. In fact, \facpl\ negation corresponds to the constraint operator $\fnot$, while the binary \facpl\ operators correspond to the constraint operators returned by the auxiliary function $\mathtt{getCop}()$. Its definition is straightforward, the main cases are defined as follows
\begin{center}
$
\small
\begin{array}{l@{\quad\qquad}l}
\mathtt{getCop}(\x{and})\!=\ \fand
&
\mathtt{getCop}(\x{or}) \!=\ \for
\\[.15cm]
\mathtt{getCop}(\x{equal}) \!=\ \ =
&
\mathtt{getCop}( \x{in}) \!=\ \in
\\[.15cm]
\mathtt{getCop}(\x{greater}\textrm{-}\x{than}) \!=\ >
&
\mathtt{getCop}(\x{add}) \!=\ +
\\[.15cm]
%\mathtt{getCop}(\x{subtract}) \!=\ -
%&
%\mathtt{getCop}(\x{multiply}) \!=\ \ast
%&
%\mathtt{getCop}(\x{divide}) \!=\ /
\end{array}
$
\end{center}

The translation of (sequences of) obligations returns a constraint whose satisfiability corresponds to the successful instantiation of all the input obligations. The translation function $\transFunct{Ob}: \mathit{Obligation}^* \rightarrow \mathit{Constr}$ is defined below
\[
\hspace*{-.4cm}
\begin{array}{l}
\translObl{\epsilon} = \true 
\\[.2cm]
\translObl{\ob\, \ob^*} = \translObl{\ob}\,\wedge\,\translObl{\ob^*}
\\[.2cm]
\translObl{[\obType\ \mathit{PepAction}(\expr^*)]} =
%\\[.2cm]
%\quad
        \bigwedge_{\expr \in \expr^*} 
	        \lnot\isBot{\translExpr{\expr}} 
%	        \\
%\hspace*{4cm}	        
	     \wedge \lnot\isErr{\translExpr{\expr}}\\%\vspace*{-1cm}  
\end{array}
\hspace*{-.4cm}
\tag{\conLbl{2}}
\label{cstr:obl}
\]
Hence, a sequence of obligations corresponds to the conjunction of the constraints representing each obligation. When translating a single obligation, predicates $\isBot{}$ and $\isErr{}$ are used to check the instantiation conditions, \ie that the occurring expressions cannot evaluate to $\excpt$ or $\err$. The n-ary conjunction operator returns $\true$ if the considered obligation contains no expression (i.e., $\expr^* = \epsilon$).

The translation function for policies, $\transFunct{P}$, exploits the translation functions previously introduced, as well as a function $\transFunct{A}$ representing the result of applying a combining algorithm to a sequence of policies. Functions $\transFunct{P}$ and $\transFunct{A}$ are indeed mutually recursive. Moreover, for representing all the decisions that a policy can return, both these two functions return 4-tuples of constraints of the form
$$
\langle \permit: \const_p\ \ \ \deny: \const_d\ \ \ \notApp: \const_n\ \ \ \indet: \const_i\ \rangle
$$
where each constraint represents the conditions under which the corresponding decision is returned. We call these tuples  \emph{policy constraint tuples} and denote their set by $\AT$. As a matter of notation, we will use the projection operator $\proj{l}$ which, when applied to  a constraint tuple, returns the value of the field labelled by $l'$, where $l$ is the first letter of $l'$ (e.g., $\proj{p}$ returns the $\permit$ constraint $\const_p$).

The function $\transFunct{P}: \mathit{Policy} \rightarrow \AT$ is defined by two clauses for rules, i.e.~one for each effect, and one clause for policy sets. The clause for rules with effect $\permit$ is %as follows
\[
\hspace*{-.4cm}
\begin{array}{l}
\translPol{\ruleOpt{\mathit{\permit}\ \ \x{target:} \, \expr\ \ \x{obl:} \, \ob^{*} \,}} = \\[.1cm]
 \, 
\begin{array}{l}
\langle\:
\permit: \
\translExpr{\expr} \wedge\ \translObl{\ob^{*}}  \\[.1cm]
\,
\deny: \ \false \\[.1cm]
 \,
\notApp:\ \neg\ \translExpr{\expr}  \\[.1cm]
 \, \indet: 
%\isErr{\translExpr{\expr}}\, \vee\, 
\lnot\, (\isBool{\translExpr{\expr}}\ \vee\ \isBot{\translExpr{\expr}})
% \\
%\qquad\quad\ \ 
\vee\, (\translExpr{\expr} \wedge\, \lnot\, \translObl{\ob^{*}})\ 
\,\rangle 
\end{array}
\end{array}
\hspace*{-2cm}
\tag{\conLbl{3a}}
\label{cstr:rule}
\]
(the clause for effect $\deny$ is omitted, as it only differs from the previous one because it swaps the $\permit$ and $\deny$ constraints). The clause takes into account the rule constituent parts and combines them according to the rule semantics (see clause~(\ref{sem:rule})). Because of the semantics of the constraint operator $\lnot$, the $\notApp$ constraint is satisfied when the constraint corresponding to the target expression evaluates to $\false$ or to $\excpt$. Instead, the negation of a constraint corresponding to a sequence of obligations represents the failure of their instantiation. 
%Likewise \facpl\ semantics, the operator $\subobeff{\ }{\effect}$ returns the subsequence of obligations defined on the effect $\effect$. 
In the $\indet$ constraint, together with condition $\lnot\ \isBool{\translExpr{\expr}}$, we introduce $\lnot\ \isBot{\translExpr{\expr}}$ because we want to exclude that $\translExpr{\expr}\,= \excpt$ (otherwise, we would fall in the case of decision $\notApp$ ).

The clause for policy sets is as follows
\[
\hspace*{-.3cm}
\begin{array}{l}
\translPol{\langle\ \algSyntax\ \ \x{target:} \, \expr\ \ \x{policies:} \, \policy^{+}\, {\x{\oblp:} \, \ob_{p}^{*}\ \ \x{\obld:} \, \ob_{d}^{*}}\ \rangle} \\
\ = \\[.15cm]
\!\! 
\begin{array}{l}
\langle\:
\permit: \
\translExpr{\expr} \wedge\, \translAlg{\algSyntax,\policy^+}\proj{p} \wedge\, \translObl{{\ob_{p}^{*}}}  \\[.15cm]
 \ \ \,
\deny: \
\translExpr{\expr} \wedge\, \translAlg{\algSyntax,\policy^+}\proj{d} \wedge\, \translObl{{\ob_{d}^{*}}}  \\[.15cm]
 \ \ \,
\notApp:\
\lnot\ \translExpr{\expr} \vee (\translExpr{\expr} 
%\\
%\qquad\qquad
\wedge\, \translAlg{\algSyntax,\policy^+}\proj{n})\\[.15cm]
 \ \ \, 
\indet: \\
\qquad\ \ 
\lnot\, (\isBool{\translExpr{\expr}}\, \vee\, \isBot{\translExpr{\expr}})\\ 
 \qquad\ \ \vee\  (\translExpr{\expr} \wedge\, \translAlg{\algSyntax,\policy^+}\proj{i}) \\
 \qquad\ \ \vee\ (\translExpr{\expr} \wedge\, \translAlg{\algSyntax,\policy^+}\proj{p} \wedge \, \lnot\ \translObl{{\ob_{p}^{*}}}) \\
 \qquad\ \ \vee\ (\translExpr{\expr} \wedge\, \translAlg{\algSyntax,\policy^+} \proj{d} \wedge \, \lnot\ \translObl{{\ob_{d}^{*}}} \,)\
\,\rangle\\[.2cm]
\end{array}
\end{array}
\hspace*{-2.2cm}\tag{\conLbl{3b}}
\label{cstr:pol}
\]
With respect to the clauses for rules, it additionally takes into account the result of the application of the combining algorithm according to the policy set semantics (see clause~(\ref{sem:pol})). It is worth noticing that the exclusive use of operators $\lnot$, $\wedge$ and $\vee$ ensures that constraint tuples are only formed by boolean constraints.

%\begin{table}[t]
%\caption{Constraint combination strategy for the  $\permitOverO{}$ algorithm} 
%\label{tab:con_comb}
%$$
%\begin{array}{@{\quad}lll@{\quad \ }}
%\hline \\[-.15cm]
%	 \permitOverO{}(A, B) & = &  \langle \permit : A \proj{p} \vee  B \proj{p} \\
%	& &\ \ \deny: (A \proj{d} \wedge B \proj{d}) \vee (A \proj{d} \wedge B \proj{n}) \vee (A \proj{n} \wedge B \proj{d}) \\
%	& &\ \ \notApp: A \proj{n} \wedge B \proj{n} \\
%	& &\ \ \indet : (A \proj{i} \wedge \lnot B \proj{p})   \vee (\lnot A \proj{p} \wedge B \proj{i})  \rangle \\[.15cm]
%	\hline
%\end{array}
%$$
%\end{table}

Combining algorithms are dealt with by the function $\transFunct{A}: \mathit{Alg} \times \mathit{Policy^+} \rightarrow \AT$ that, given an algorithm (using the $\x{all}$ instantiation strategy) and a sequence of policies, returns a constraint tuple representing the result of the algorithm application. Its definition is
\[
\begin{array}{l}
\translAlg{\alg{\all}, \policy_1\: \ldots \: \policy_s} = \\[.1cm]
\quad\
\alg{}(\ldots \alg{}(\translPol{\policy_1}, \translPol{\policy_2}),\ldots,\translPol{\policy_s})
\end{array}
\hspace*{-.4cm}
\tag{\conLbl{4}}
\label{cstr:alg}
\]
By means of $\transFunct{P}$, the policies given in input are translated into constraint tuples which are then iteratively  combined, two at a time, according to the algorithm combination strategy. By way of example, the combination of two constraint tuples, say $A$ and $B$, according to the $\permitOverO{}$ algorithm, is defined as follows
$$
\begin{array}{@{}l@{}}
\permitOverO{}(A, B)  = \\[.1cm]
\,
\begin{array}{l@{\ \  }l}
  \langle \permit :  &A \proj{p} \vee  B \proj{p} \\
\, \deny: &(A \proj{d} \wedge B \proj{d}) \vee (A \proj{d} \wedge B \proj{n}) 
%\\
%& \qquad\qquad\qquad\qquad\qquad 
 \vee (A \proj{n} \wedge B \proj{d}) \\
\, \notApp: & A \proj{n} \wedge B \proj{n} \\
\, \indet :  &(A \proj{i} \wedge \lnot B \proj{p})   \vee (\lnot A \proj{p} \wedge B \proj{i})  \rangle
\end{array}
\end{array}
$$
The combinations for the remaining algorithms are in Appendix~\ref{sec:appendixA}. If $s=1$, \ie there is only one argument tuple, all the algorithms leave the input tuple unchanged, but for $\permitUnlessO{}$, which given an input tuple $A$ returns the tuple
$$
\begin{array}{l@{\qquad \ }l}
\langle 
\, \permit : A\proj{p} \vee\ A\proj{n} \vee\ A\proj{i} \quad
& 
\deny : A\proj{d} \quad \\[.2cm]
\ \ \notApp : \false \quad
&
\indet : \false \,
\rangle
\end{array}
$$
and $\denyUnlessO{}$, which behaves similarly.

Finally, the translation of top-level \pdp\ terms $\pdpPol{\algNT\ }{\x{policies:} \, \mathit{Policy}^{+}}$ is the same as that of the corresponding policy sets with target $\true$ and no obligations, \ie $\{ \algNT\ \ \x{target:} \, \true\ \ \x{policies:} \, \mathit{Policy}^{+} \, \}$.
\smallskip

\subsection{Properties of the Translation}
\label{sec:prop_transl}

The key result regarding the translation is that the semantics of the constraint-based representation of a policy and the semantics of the policy itself do agree. This correspondence is clearly limited to only those policies using the instantiation strategy $\all$. Before presenting this result, we show 
%that the constraint semantics is \modif{a total and deterministic function}. 
for the constraint semantics a result analogous to Theorem~\ref{theo:deterministic}.

\begin{theorem}[Total and Deterministic Constraint Semantics]
\label{thr:constr_fun}\ 
\begin{enumerate}
\item For all $\const \in \mathit{Constr}$ and $\req \in R$, there exists an $\mathit{el} \in (\mathit{Value}\ \cup\ 2^{\mathit{Value}}\cup \{\err, \excpt \})$, such that $\cs{\const} = \mathit{el}$. 
\smallskip
\item For all $\const \in \mathit{Constr}$, $\req \in R$ and $\mathit{el}, \mathit{el}' \in (\mathit{Value}\ \cup\ 2^{\mathit{Value}}\cup \{\err, \excpt \})$, it holds that%\\[.1cm]
$$
\cs{\const} = \mathit{el} \ \  \wedge \ \ \cs{\const} = \mathit{el}' \ \ \Rightarrow \ \ \mathit{el} = \mathit{el}'\,.
$$
\end{enumerate}
\end{theorem}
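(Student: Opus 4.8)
The plan is to prove both items by a single structural induction on the constraint $\const$, establishing the combined invariant that, for every request $\req \in R$, the value $\cs{\const}$ is defined, is uniquely determined, and belongs to the codomain $\mathit{Value}\cup 2^{\mathit{Value}}\cup\{\err,\excpt\}$. Totality (item~1) and determinism (item~2) then follow immediately, and the argument runs exactly parallel to the proof of Theorem~\ref{theo:deterministic}, since the constraint clauses of Table~\ref{tab:constr_sem} are deliberately designed to mirror the expression clauses of Table~\ref{tab:sem_expression}.

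For the base cases I would treat the two leaves of the grammar. For a literal value, $\cs{\val}=\val$ is defined, unique, and lies in $\mathit{Value}$. For an attribute name, $\cs{\name}=\req(\name)$; the key observation here is that, by the definition of the semantic domain $R$ (see Table~\ref{table:syntsemcorr}), every $\req$ is a \emph{total} function into $\mathit{Value}\cup 2^{\mathit{Value}}\cup\{\excpt\}$, so $\req(\name)$ is always defined, unique, and in the codomain.

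For the inductive step I would proceed by cases on the outermost constructor, invoking the induction hypothesis on each immediate subconstraint so that every subterm evaluation already yields a single value in the codomain. For the predicates $\isBot{\cdot}$, $\isErr{\cdot}$, $\isBool{\cdot}$ and the usual boolean operators $\lnot$, $\wedge$, $\vee$, the defining clauses have the shape \emph{``if \dots then $\true$, otherwise $\false$''}; these are exhaustive and mutually exclusive by construction and always return a boolean, so the invariant is immediate. For the four-valued operators $\fnot$, $\fand$, $\for$, the comparison operators ($=$, $>$, $\in$) and the arithmetic operators ($+$, $-$, $\ast$, $/$), I would verify for each clause that its guards are (i) pairwise disjoint and (ii) jointly exhaustive, the final \emph{``otherwise $\to \err$''} branch acting as a catch-all; then I would check that every branch's result again lies in the codomain (e.g.\ the sum of two elements of $\doubleT$ is again in $\mathit{Value}$, a comparison yields a boolean, and $\excpt$, $\err$ belong to the codomain by definition). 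Exhaustiveness together with disjointness give a unique defined value, which closes the induction.

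The main obstacle is not conceptual but the bookkeeping required to check disjointness and exhaustiveness of the guards across all operators, particularly the asymmetric propagation of $\excpt$ and $\err$ in $\fand$, $\for$ and in the arithmetic and comparison clauses, where the symmetric index convention $i,j\in\{1,2\}$ with $i\neq j$ must be unfolded. This is, however, precisely the same case analysis already carried out for the expression semantics underlying Theorem~\ref{theo:deterministic}: since $\mathtt{getCop}$ places the constraint operators in one-to-one correspondence with the \facpl\ expression operators and the guards coincide, the verification transfers essentially verbatim, and the only genuinely new clauses, namely those for $\isBot{\cdot}$, $\isErr{\cdot}$, $\isBool{\cdot}$ and $\lnot$, are trivially total and deterministic owing to their binary then/otherwise form.
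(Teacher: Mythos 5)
Your proposal is correct and matches the paper's own argument: the paper likewise proves the theorem by structural induction on the syntax of $\const$, handling literals and attribute names as base cases (using totality of $\req\in R$) and then observing that every defining clause in Table~\ref{tab:constr_sem} has mutually exclusive and exhaustive guards so that the inductive hypothesis closes each case. Your additional remarks on the parallel with Theorem~\ref{theo:deterministic} and the clause-by-clause bookkeeping only make explicit what the paper leaves implicit.
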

\begin{proof}
By structural induction on the syntax of $\const$ (see Appendix~\ref{sec:appendix4}).
\end{proof}

\begin{theorem}[Policy Semantic Correspondence]
\label{thr:constr_sem}
For all $\policy \in \mathit{Policy}$ enclosing combining algorithms only using $\all$ as instantiation strategy, and $\req \in R$, it holds that
$$
\policySem{\policy}{\req}= \langle \dec \ \fo^* \rangle
\ \ \Leftrightarrow \ \ 
\cs{\translPol{\policy}\proj{\dec}} =  \true
$$
\end{theorem}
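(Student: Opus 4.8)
The plan is to prove the statement by structural induction on the policy $\policy$, exploiting the \emph{compositional} definitions of both the denotational semantics $\denSemF{P}$ (clauses~(\ref{sem:rule}) and~(\ref{sem:pol})) and the translation $\transFunct{P}$ (clauses~(\ref{cstr:rule}) and~(\ref{cstr:pol})). Since the policy-set clauses are phrased in terms of the combining-algorithm functions $\denSemF{A}$ and $\transFunct{A}$, the induction on policies is interleaved with an auxiliary induction on the length of the policy sequence combined by an algorithm. Throughout, I would maintain as an invariant that, for every request $\req$, each constraint tuple $\translPol{\policy}$ has \emph{exactly one} field evaluating to $\true$ (the others to $\false$): this holds because the tuples are built only from $\lnot,\wedge,\vee$ (hence are boolean-valued and well-defined by Theorem~\ref{thr:constr_fun}) and because $\denSemF{P}$ is total and deterministic (Theorem~\ref{theo:deterministic}). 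This invariant is exactly what lets the backward direction of the equivalence be read off the forward one.

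Before the induction I would establish two supporting lemmas. First, an \emph{expression-preservation} lemma, $\cs{\translExpr{\expr}} = \exprSem{\expr}{\req}$ for every expression $\expr$, proved by structural induction on $\expr$ using clause~(\ref{cstr:expr}): the base cases (names and values) are immediate, and each operator case holds because the constraint semantics of $\mathtt{getCop}(\exprOperator)$ was designed to mimic the expression semantics of $\exprOperator$, including the treatment of $\excpt$ and $\err$ (compare Tables~\ref{tab:sem_expression} and~\ref{tab:constr_sem}). Second, an \emph{obligation-success} lemma, $\cs{\translObl{\ob^*}} = \true$ iff $\oblSem{\ob^*}{\req} \neq \err$, which follows from clause~(\ref{cstr:obl}), the first lemma, and the semantics of $\isBot{}$ and $\isErr{}$, since $\oblSem{\ob^*}{\req}$ fails precisely when some argument expression evaluates to $\excpt$ or $\err$. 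With these, the rule base case (effect $\permit$; the $\deny$ case is symmetric) is handled by case-splitting on $\exprSem{\expr}{\req}$: the $\permit$ field $\translExpr{\expr}\wedge\translObl{\ob^*}$ matches the first branch of~(\ref{sem:rule}); the $\notApp$ field $\lnot\,\translExpr{\expr}$ matches target $\false$ or $\excpt$ by the semantics of $\lnot$; the $\deny$ field $\false$ matches that a permit rule never yields $\deny$; and the $\indet$ field matches the ``otherwise'' branch (target $\err$ or non-boolean via $\isBool{}/\isBot{}$, or target $\true$ with failed obligations via the second lemma).

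For the inductive step (policy sets) the crux is a \emph{combining-algorithm correspondence}: assuming the theorem for each $\policy_k$ in $\policy^+$, one shows $\algSem{\algSyntax,\policy^+}{\req} = \langle\dec\ \fo^*\rangle$ iff $\cs{\translAlg{\algSyntax,\policy^+}\proj{\dec}} = \true$. Both sides are iterated applications of a binary step -- the combination matrix $\algOp$ in~(\ref{sem:algA}) and the tuple combination of~(\ref{cstr:alg}) -- so this follows by induction on the sequence length once one verifies, for each algorithm, that the binary steps agree decision-by-decision. For $\permitOverO{}$, for instance, one checks that $\permitOverO{}(A,B)$ reproduces Table~\ref{tab:auxAlg}(a) cell by cell: the combined decision is $\permit$ iff some argument is $\permit$ (field $A\proj{p}\vee B\proj{p}$), and is $\indet$ iff some argument is $\indet$ and neither is $\permit$ (field $(A\proj{i}\wedge\lnot B\proj{p})\vee(\lnot A\proj{p}\wedge B\proj{i})$), where the invariant lets $\lnot A\proj{p}$ be read as ``$A$ is not $\permit$''; the single-policy base case uses the stated special behaviour of $\permitUnlessO{}$ and $\denyUnlessO{}$. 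Granting this, the policy-set clause follows exactly as in the rule case: the $\permit$ field $\translExpr{\expr}\wedge\translAlg{\algSyntax,\policy^+}\proj{p}\wedge\translObl{\ob_p^*}$ matches the first branch of~(\ref{sem:pol}), the $\notApp$ field matches target $\false$/$\excpt$ or algorithm-$\notApp$, and the three $\indet$ disjuncts match the target-error, algorithm-$\indet$, and decision-reached-but-obligations-failed sub-cases.

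The main obstacle I expect is this combining-algorithm correspondence, specifically the cell-by-cell verification of binary-operator agreement for \emph{all} the algorithms (and the single-policy boundary behaviour of $\permitUnlessO{}/\denyUnlessO{}$), together with the bookkeeping needed to keep the ``exactly one true field'' invariant in force so that the boolean constraints can be interpreted as mutually exclusive decisions. Each check is routine, but they are numerous and error-prone; the pragmatic presentation is to verify one representative algorithm in full and delegate the remaining matrices and tuple combinations to Appendix~\ref{sec:appendixA}.
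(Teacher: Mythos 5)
Your proposal follows essentially the same route as the paper: induction on the nesting structure of the policy, supported by exactly the paper's three auxiliary results (expression preservation, Lemma~\ref{lemma:expr}; obligation instantiation, Lemma~\ref{lemma:obl}; and the combining-algorithm correspondence, Lemma~\ref{lemma:alg}, proved by an inner induction on the length of the combined sequence with a per-algorithm, cell-by-cell check against the combination matrices). The case analyses you sketch for the rule base case and the policy-set step, including the boolean-law rewriting of the $\mathtt{otherwise}$ branch for $\indet$ and the single-policy boundary behaviour of $\permitUnlessO{}$/$\denyUnlessO{}$, match the paper's argument.

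The one point to repair is your justification of the ``exactly one true field'' invariant. As stated, it does not follow from Theorem~\ref{thr:constr_fun} (which only gives that each field is a well-defined boolean) together with Theorem~\ref{theo:deterministic} (which concerns $\denSemF{P}$, not the translation): nothing in those results prevents two fields of a tuple from being simultaneously $\true$ for some request. Indeed, that mutual-exclusivity property is precisely the paper's Corollary~\ref{theo:partition}, which is \emph{derived from} Theorem~\ref{thr:constr_sem}, so invoking it to obtain the backward implication is circular. The fix is either to prove each case of the induction as a chain of equivalences (so both directions come out together, which is how the paper's lemmas are phrased), or to establish the pairwise exclusivity of the four constraints by a separate induction on the translation clauses before using it. With that adjustment the argument goes through as you describe.
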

\begin{proof}
The proof (see Appendix~\ref{sec:appendix4}) is by induction on the \emph{depth}, \ie the nesting level, of $\policy$ and relies on three auxiliary correspondence results regarding expressions (Lemma~\ref{lemma:expr}), obligations (Lemma~\ref{lemma:obl}) and combining algorithms (Lemma~\ref{lemma:alg}).
\end{proof}

This theorem implies that the properties verified over the constraints resulting from the translation of a \facpl\ policy would return the same results as if they were directly proven on the \facpl\ policy itself. Thus, it ensures that the analysis we present in Section~\ref{sec:analysis} is sound. 

From the previous theorems it follows that policy constraint tuples partition the set of input requests, in other words each access request satisfies only one of the constraints of a policy constraint tuple. Essentially, the following corollary extends Theorem~\ref{thr:constr_fun} to constraint tuples.

\begin{corollary}[Constraint-based partition]
\label{theo:partition}
For all $\req \in R$ and $\policy \in \mathit{Policy}$, such that $\translPol{\policy} =  \langle \permit: \const_1\  \deny: \const_2\ \notApp: \const_3\  \indet: \const_4\, \rangle$, it holds that
$$
\begin{array}{l}
\exists! k\in\{1,\ldots,4\}\ : 
%\\
%\qquad\quad
  \cs{\const_k} = \true 
\ \wedge\ 
\textstyle{\bigwedge_{j \in\{1,\ldots,4\}\backslash\{k\} }} \cs{\const_j} = \false
\end{array}
$$
\end{corollary}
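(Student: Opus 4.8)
The plan is to lift the totality and determinism of the policy semantics, already available from Theorem~\ref{theo:deterministic}, across the semantic correspondence of Theorem~\ref{thr:constr_sem}. First I would fix $\req \in R$ and $\policy \in \mathit{Policy}$ and record that, since $\denSemF{P}$ is total and deterministic (this is exactly what the proof of Theorem~\ref{theo:deterministic} establishes for the component functions on which $\denSemF{P}as$ is built), there is a unique PDP response $\policySem{\policy}{\req} = \langle \dec\ \fo^* \rangle$ with a single well-defined decision $\dec \in \{\permit, \deny, \notApp, \indet\}$. Let $k \in \{1,\ldots,4\}$ be the index of the tuple field labelled by $\dec$ (so $k=1$ for $\permit$, and so on); this is the candidate witness for the existential.

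For existence I would apply the left-to-right direction of Theorem~\ref{thr:constr_sem} to the response $\langle \dec\ \fo^* \rangle$: it yields $\cs{\translPol{\policy}\proj{\dec}} = \true$, that is $\cs{\const_k} = \true$. Hence the first conjunct of the claim holds with this $k$. Note also that the restriction to combining algorithms using the $\all$ strategy is inherited implicitly, since $\translPol{\policy}$ (and thus the tuple in the statement) is only defined in that case, so Theorem~\ref{thr:constr_sem} is applicable throughout.

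The remaining, most delicate part is to show that every other field evaluates to $\false$, which simultaneously gives uniqueness of $k$. Here I would combine two facts. The first is that the translation clauses~(\ref{cstr:rule}) and~(\ref{cstr:pol}) build each field using only the classical boolean operators $\lnot$, $\wedge$, $\vee$ at the top level, whose semantics in Table~\ref{tab:constr_sem} return a value in $\{\true,\false\}$ by construction; together with totality of $\constrSem$ (Theorem~\ref{thr:constr_fun}) this guarantees $\cs{\const_j} \in \{\true,\false\}$ for every $j$. The second is the right-to-left direction of Theorem~\ref{thr:constr_sem}: if, for some $j \neq k$ with associated decision $\dec' \neq \dec$, we had $\cs{\const_j} = \cs{\translPol{\policy}\proj{\dec'}} = \true$, then $\policySem{\policy}{\req} = \langle \dec'\ {\fo'}^* \rangle$ for some ${\fo'}^*$, contradicting the determinism of $\denSemF{P}$ since $\dec' \neq \dec$. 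Therefore no such $j$ exists, so $\cs{\const_j} \neq \true$, and by boolean-valuedness $\cs{\const_j} = \false$ for all $j \neq k$. This yields both $\bigwedge_{j \in \{1,\ldots,4\}\backslash\{k\}} \cs{\const_j} = \false$ and the uniqueness part of the $\exists!$.

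The hard part will be the careful bookkeeping that the four fields really are boolean-valued: this is what legitimises the step from ``$\cs{\const_j}$ is not $\true$'' to ``$\cs{\const_j}$ is $\false$'', ruling out the spurious possibilities $\excpt$ and $\err$ that the codomain of $\constrSem$ otherwise allows. The only other subtlety is the correct two-directional use of the biconditional in Theorem~\ref{thr:constr_sem}, applied once forwards at the witness decision $\dec$ to get existence, and once backwards at each competitor decision $\dec'$ to derive the contradiction with determinism.
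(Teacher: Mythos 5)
Your proof is correct and takes essentially the same route as the paper, whose entire argument is the one-line remark that the thesis ``immediately follows'' from Theorems~\ref{thr:constr_fun} and~\ref{thr:constr_sem}; you have simply made the implicit steps explicit (unique decision by determinism, forward direction of Theorem~\ref{thr:constr_sem} for existence, backward direction plus determinism for uniqueness). Your careful justification that each tuple field is boolean-valued---needed to pass from ``not $\true$'' to ``$\false$''---is exactly the observation the paper makes in prose after clause~(\ref{cstr:pol}), namely that the exclusive use of $\lnot$, $\wedge$ and $\vee$ ensures constraint tuples are only formed by boolean constraints.
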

\begin{proof}
The thesis immediately follows from Theorems~\ref{thr:constr_fun} and~\ref{thr:constr_sem}.
\end{proof}

\subsection{Constraint-based Representation of the e-Health case study}
\label{sec:constr_ehealth}

We now apply the translation functions introduced in Section~\ref{sec:translation} to (a part of) the considered case study. For the sake of presentation, we shorten the attribute names used within policies. For instance, the rule addressing Requirement (\sr{1}) becomes as follows
$$
\begin{array}{@{}l@{}}
%\ruleOpt{\,\permit\ \
%\targetBegin
%\streq(\x{sub/role},``\x{doctor}")
%\ \x{and}\
%\streq(\x{act/id},``\x{write}") \\
%\qquad\qquad\qquad\quad
%\x{and}\ \x{in}(``\x{e}\textrm{-}\x{Pre}\textrm{-}\x{Write}",\x{sub/perm})\ 
%\x{and}\ \x{in}(``\x{e}\textrm{-}\x{Pre}\textrm{-}\x{Read}",\x{sub/perm})}  
\ruleOpt{\,\permit\ \
\targetBegin
\streq(\x{sub/role},``\x{doctor}")\\
\qquad\quad\quad
\x{and}\
\streq(\x{act/id},``\x{write}") \\
\qquad\quad\quad
\x{and}\ \x{in}(``\x{e}\textrm{-}\x{Pre}\textrm{-}\x{Write}",\x{sub/perm})\\
\qquad\quad\quad
\x{and}\ \x{in}(``\x{e}\textrm{-}\x{Pre}\textrm{-}\x{Read}",\x{sub/perm})}\\
\end{array}
$$
Its translation starts by applying function $\transFunct{E}$ to the target expression. The resulting constraint is as follows
$$
\begin{array}{@{}l@{}}
\const_{trg1} \define 
%\\
%\qquad\quad
\x{sub/role} = ``\x{doctor}"\ \fand\ \x{act/id} = ``\x{write}" \\ 
\qquad\quad 
 \fand\ ``\x{e}\textrm{-}\x{Pre}\textrm{-}\x{Write}" \in \x{sub/perm}
\\ 
\qquad\quad
\fand\ ``\x{e}\textrm{-}\x{Pre}\textrm{-}\x{Read}" \in \x{sub/perm} 
\end{array}
$$
The translation proceeds by considering obligations; in this case they are missing  (i.e., they correspond to the empty sequence $\epsilon$), hence the constraint $\true$ is obtained. Function $\transFunct{P}$ finally defines the constraint tuple for the rule as follows
$$
\hspace*{-.2cm}
\begin{array}{l@{}}
\langle \permit : \ \const_{trg1} \wedge \true  
\\[.1cm] 
\ \, \deny:\ \false 
\\[.1cm] 
\  \, \notApp:  \ \lnot \const_{trg1}
\\[.1cm]
\ \, \indet:
\lnot (\isBool{\const_{trg1}}  \vee \isBot{\const_{trg1}}) 
%\\
%\qquad\qquad\qquad\qquad\qquad\qquad
\vee (\const_{trg1} \wedge \lnot \true) \rangle
%\end{array}
%}
\end{array}
$$
The tuples for the rules addressing Requirements (\sr{2}) and (\sr{3}) are defined similarly, they only differ in the constraints representing their targets, which are denoted as $\const_{trg2}$ and $\const_{trg3}$, respectively. 

We can now define the constraint-based representation of Policy~(\ref{policy1}). Besides the target expression, which is straightforwardly translated to the constraint $\const_{trgP} \define\, \x{res/typ} = ``\x{e}\textrm{-}\x{Pre}"$, the constraint tuple is built up from the result of function $\transFunct{A}$ representing the application of the algorithm $\permitOverO{}$. Specifically, the constraint tuples of rules are iteratively combined according to the definition of $\permitOverO{}(A, B)$ previously reported.
For example, the combination of the first two rules generates the following tuple
$$
 \begin{array}{@{}l@{}}
\langle\, \permit :\, (\const_{trg1} \wedge \true) \vee (\const_{trg2} \wedge \true)\\
\, \deny : \, (\false \wedge \false) \vee (\false \wedge \lnot \const_{trg2}) \vee (\lnot \const_{trg_1} \wedge \false)\\
\, \notApp : \ \lnot \const_{trg1} \wedge \lnot \const_{trg2}\\
\, \indet : 
(
(
\lnot (\isBool{\const_{trg1}} \vee\, \isBot{\const_{trg1}})  \\
\qquad\qquad\quad \vee (\const_{trg1} \wedge \lnot \true)
)
\wedge \lnot (\const_{trg2} \wedge \true)
)\\
\qquad\quad
\ \ \ 
\vee 
(
\lnot (\const_{trg1} \wedge \true) \wedge\ (\lnot (\isBool{\const_{trg2}}\,\\
\qquad\qquad\quad
 \vee\, \isBot{\const_{trg2}}) \vee (\const_{trg2} \wedge \lnot \true))
)
\ \,\rangle
\end{array}
$$ 
Notably, the $\deny$  constraint is never satisfied, because it is a disjunction of conjunctions having at least one $\false$ term as argument. This is somewhat expected, because the rules have the $\permit$ effect and the used combining algorithm is $\permitOverO{}$. This tuple is then combined with that of the remaining rule in a similar way. 

To generate the constraint tuple of the policy, we also need the constraint-based representation of its obligations. The policy contains only one obligation for the effect $\permit$, whose corresponding constraint is as follows
$$
\begin{array}{l}
\!\!\const_{obl\_p} \define
%\\[.1cm]
%\quad 
\bigwedge_{n \in \{\x{sys/time}, \x{res/typ}, \x{sub/id}, \x{act/id}\}} \lnot \isBot{n} 
%\\
%\hspace*{5.8cm}
\wedge \lnot \isErr{n}
\end{array}
$$
The constraint corresponding to obligations for the effect $\deny$, which are missing, is instead $\true$.

Finally, the constraint tuple of Policy~(\ref{policy1}) generated by function $\transFunct{P}$ is as follows
$$
\begin{array}{@{}l@{}}
\hspace*{-.5cm}\!\!\! \langle \permit : 
\const_{trgP}
\wedge 
%\\
%\quad
%comb-alg
((\const_{trg1} \wedge \true) \vee (\const_{trg2} \wedge \true) \vee (\const_{trg3} \wedge \true))
 \wedge \const_{obl\_p} \\[.1cm]
 \hspace*{-.5cm}\deny : 
 \const_{trgP}
\wedge (
(((\false \wedge \false) \vee (\false \wedge \lnot \const_{trg2}) 
%\\
%\qquad\qquad\
\vee (\lnot \const_{trg_1} \wedge \false)) \wedge \false) \\
\qquad\quad
\vee (((\false \wedge \false) \vee (\false \wedge \lnot \const_{trg2}) 
%\\
%\qquad\qquad\
\vee (\lnot \const_{trg_1} \wedge \false)) \wedge \lnot \const_{trg3}) \\
\qquad\quad
\vee ((\lnot \const_{trg1} \wedge \lnot \const_{trg2}) \wedge \false)
)
%\\
%&
\wedge\ \true \\[.1cm]
%\multicolumn{2}{l}{
\hspace*{-.5cm}\notApp:  
\lnot %(\x{res/typ} = ``\x{e}\textrm{-}\x{Pre}") 
\const_{trgP}
\vee (
%(\x{res/typ} = ``\x{e}\textrm{-}\x{Pre}")
\const_{trgP}  \wedge 
(\lnot \const_{trg1} \wedge \lnot \const_{trg2} \wedge \lnot \const_{trg3}))
%}
\\[.1cm]
%\end{array}
%$$
%$$
%\begin{array}{l@{\ }l}
\hspace*{-.5cm}\indet:  
\lnot(\isBool{\const_{trgP}} %\x{res/typ} =\! ``\x{e}\textrm{-}\x{Pre}"
 \, \vee\, \isBot{\const_{trgP}}) %\x{res/typ} =\! ``\x{e}\textrm{-}\x{Pre}"
 \\
%not error policy target
\vee 
(
\const_{trgP} 
 \wedge
%Rules combination \proj{indet}
%First case of combination Indet
( 
%Rule1 combined Rule2 \proj{Indet}
( (\lnot (\isBool{\const_{trg1}}\, \vee\, \isBot{\const_{trg1}}) 
%\\
%\qquad\quad 
\vee (\const_{trg1} \wedge \lnot \true))  \wedge \lnot (\const_{trg2} \wedge \true)) \\
\vee\: \lnot ( (\const_{trg1} \wedge \true)  \wedge ( \lnot (\isBool{\const_{trg2}} \vee \isBot{\const_{trg2}}) \\
\qquad\quad
 \vee (\const_{trg2} \wedge \lnot \true)))
%Rule3 not permit
\wedge \lnot (\const_{trg3} \wedge \true)
) 
%%Second case of combination Indet
%\\
\vee\: 
(
%Rule1 combined Rule2 not permit
\lnot ((\const_{trg1} \wedge \true) \vee (\const_{trg2} \wedge \true))
%Rule3 \proj{indet}
%& 
\wedge 
(\lnot (\isBool{\const_{trg3}} \, \\
\qquad\quad
\vee\, \isBot{\const_{trg3}})  \vee (\const_{trg3} \wedge \lnot \true))
)\\
%%%%%%%%%%%%%%%%%%%%%
%END ALG indet - START OBLIGATION ERROR
%%%%%%%%%%%%%%%%%%%%%
\vee 
%Obligations permit error
(
%target not error
\const_{trgP}  \wedge 
%combining algorithm permit
((\const_{trg1} \wedge \true) \vee (\const_{trg2} \wedge \true) 
%\\
%\qquad\quad
\vee (\const_{trg3} \wedge \true)) 
%failure obligation fulfillment
\wedge \lnot \const_{obl\_p}
%End Obligations permit
)
\\
\vee 
%Obligations deny error
(
%target not error
\const_{trgP} \wedge 
%%%Deny constraint
((((\false \wedge \false) \vee (\false \wedge \lnot \const_{trg2})
% \\
%\qquad\quad
\vee (\lnot \const_{trg_1} \wedge \false)) \wedge \false) \\
%&
 \vee (((\false \wedge \false) \vee (\false \wedge \lnot \const_{trg2}) \vee (\lnot \const_{trg_1} \wedge \false))
% \\
% \qquad\quad
  \wedge\  \lnot \const_{trg3}) \\
 \vee ((\lnot \const_{trg1} \wedge \lnot \const_{trg2}) \wedge \false))
%failure Obligations deny
 \wedge \lnot \true
 %End Obligations deny
 )
 \rangle
\end{array}
$$

As this example demonstrates, the constraints resulting from the translation are a single-layered representation of policies that fully details all the aspects of policy evaluation. However, it is also evident that the evaluation, as well as the generation, of such constraints cannot be done manually, but requires a tool support. 
%Notably, the translation is here executed without considering possible optimisations, e.g., simplifications of unsatisfiable constraints like that of $\deny$. 
%In the next section, we show how constraints enable verification of properties and how constraint solvers can be exploited to automatically verify them.

%%% !TEX root = ./facpl_journal.tex

\section{Analysis of FACPL Policies}
\label{sec:analysis}

The analysis of \facpl\ policies we propose aims at verifying different types of properties by exploiting the constraint-based representation of policies. We first formalise a relevant set of properties in terms of expected authorisations for requests, and then we define the strategies for their automated verification by means of constraints.

Furthermore, since \facpl\ does not enjoy the \emph{safety} property (see Section~\ref{sec:prop_sem}), the analysis investigates how the extension of a request through the addition of further attributes might change its authorisation in a possibly unexpected way. Intuitively, it is important to consider the authorisation decisions not only of specific requests, but also of their extensions because, e.g., a malicious user could try to exploit them to circumvent the access control system. This analysis approach is partially inspired by the probabilistic analysis on missing attributes introduced in~\cite{CramptonMZ15}.

In the following, we first formalise the proposed properties (Section~\ref{sec:prop}) and present some concrete examples of them from the case study (Section~\ref{sec:prop_ex}). Afterwards, we show how to express the constraint formalism into a tool-accepted specification (Section~\ref{sec:contr_eval}) and exploit it to automatically verify the properties with an SMT solver (Section~\ref{sec:prop_verify}).

\subsection{Formalisation of Properties}
\label{sec:prop}

We consider both properties that refer to the expected authorisation of single requests, \ie \emph{authorisation properties} (Section~\ref{sec:sec_prop}), and to the relationships among policies on the base of the whole set of authorisations they establish, \ie \emph{structural properties} (Section~\ref{sec:struct_prop}); afterwards we comment on their automatic verification (Section~\ref{sec:towautverif}).

\subsubsection{Authorisation Properties} 
\label{sec:sec_prop}

%We present here a set of primitive properties that compactly specify conditions on single requests. 
To formalise the authorisations properties, we introduce the notion of \emph{request extension set} of a given request $\req$. It is defined as follows
$$
\Ext{\req} \define \{ \req' \in R \mid \req(\name) \neq \excpt \ \ \ \Rightarrow \ \ \req'(n) = \req(n)\}
$$
The set is formed by all those requests that possibly extend request $\req$ with new attributes not already defined by $\req$. 

%\todo[inline]{Ros: considerare la possibilita' di modificare la definizione aggiungendo $\vee\ \req(n) \subseteq \req'(n)$, cosa che sembra naturale in presenza di multivalued attributes. Forse e' addirittura preferibile rimpiazzare la condizione $\req'(n) = \req(n)$ con $\req(n) \Subset \req'(n)$ dove\\
%$
%\req(n) \Subset \req'(n)
%=
%\left\{
%    \begin{array}{l@{\quad}l}
%    	\req(n) = \req'(n) 
%		& \mbox{if\ both\ $\req(n)$\ and\ $\req'(n)$\ are\ literal\ values}\\[.1cm] 
%    	\req(n) \in \req'(n) 
%		& \mbox{if\ $\req(n)$\ is\ a\ literal\ value\ and\ $\req'(n)$\ is\ a\ set\ of\ literal\ values}\\[.1cm] 
%    	\req(n) \subset \req'(n) 
%		& \mbox{if\ both\ $\req(n)$\ and\ $\req'(n)$\ are\ sets\ of\ literal\ values}
%    \end{array}
%\right.
%$}

\smallskip
\noindent
\textit{Evaluate-To}. This property, written $\Eval{\req}{\dec}$, requires the policy under examination to evaluate the request $\req$ to decision $\dec$. The satisfiability, written $\SatLit$, of the \emph{Evaluate-To} property by a policy $\policy$ is defined as follows
$$
\Sat{\policy}  \Eval{\req}{\dec} \qquad \mathit{iff} \qquad
\policySem{\policy}{\req} = {\langle \dec\ \ \fo^*\rangle}
$$
In practice, the verification of the property boils down to apply the semantic function $\denSemF{P}$ to $\policy$ and $\req$, and to check that the resulting decision is $\dec$.

\smallskip
\noindent
\textit{May-Evaluate-To}. This property, written $\May{\req}{\dec}$, requires that \emph{at least one} request extending the request $\req$ evaluates to decision $\dec$. The satisfiability of the \emph{May-Evaluate-To} property by a policy $\policy$ is defined as follows
$$
\begin{array}{l}
\Sat{\policy}  \May{\req}{\dec} \quad \mathit{iff} \
%\\[.1cm]
%\qquad\qquad 
\exists\, \req' \in \Ext{\req}\  :\  \policySem{\policy}{\req'} = {\langle \dec\ \ \fo^*\rangle} 
\end{array}
$$
This property, as well as the next one, addresses additional attributes extending the request $\req$ by considering the requests in its extension set $\Ext{\req}$.

\smallskip
\noindent
\textit{Must-Evaluate-To}. This property, written $\Must{\req}{\dec}$, differs from the previous one as it requires \emph{all} the extended requests to evaluate to decision $\dec$. The satisfiability of the \emph{Must-Evaluate-To} property by a policy $\policy$ is defined as follows
$$
\begin{array}{l}
\Sat{\policy}  \Must{\req}{\dec} \qquad \mathit{iff}\
% \\[.1cm]
%\qquad\qquad
 \forall \req' \in \Ext{\req}\  :\  \policySem{\policy}{\req'} = {\langle \dec\ \ \fo^*\rangle}
\end{array}
$$

\medskip

Of course, additional properties can be obtained by combining the previous ones like, e.g., a property requiring that all requests in $\Ext{\req}$ may evaluate to $\dec$ and must not evaluate to $\dec'$. Again, request extensions can be exploited to track down possibly unexpected authorisations.

\subsubsection{Structural Properties}
\label{sec:struct_prop}

A structural property refers to the structure of the sets of authorisations established by one or multiple policies. In case of multiple policies, the properties aim at characterising the relationships among the policies. Different structural properties have been proposed in the literature (e.g. in~\cite{FislerKMT05} and~\cite{KolovskiHP07}) by pursuing different approaches for their definition and verification. Here, we consider a set of commonly addressed properties and provide a uniform characterisation thereof in terms of requests and policy semantics. 

\smallskip
\noindent
\textit{Completeness}. A policy is $\complete$ if it applies to all requests. Thus, the satisfiability of the \emph{Completeness} property by a policy $\policy$ is defined as follows
$$
\begin{array}{l}
\policy\ \mathtt{sat}\ \complete\ \qquad\mathit{iff}\
%\\[.1cm]
%\qquad
\forall\ \req\ \in \Req \ : \ \policySem{\policy}{\req} = {\langle \dec \ \ \fo^*\rangle, \dec} \neq \notApp
\end{array}
$$
Essentially, we require that the policy applies to any request, \ie it always returns a decision different from $\notApp$. Notably, in this formulation $\indet$ is considered as an acceptable decision; a more restrictive formulation could only accept $\permit$ and $\deny$.

\smallskip
\noindent
\textit{Disjointness}. Disjointness among policies means that such policies apply to disjoint sets of requests. Thus, this property, written $\disjoint\ \policy'$, requires that there is no request for which both the policy under examination and the policy $\policy'$ evaluate to a decision considered \emph{admissible}, \ie $\permit$ or $\deny$. The satisfiability of the \emph{Disjointness} property by a policy $\policy$ is defined as follows
$$
\begin{array}{@{}l}
\policy\ \mathtt{sat}\ \disjoint\ \policy' \qquad\mathit{iff}\qquad 
%\\[.1cm]
%\
\forall\ r \in \Req \  : \  \policySem{\policy}{\req} = \langle \dec \ \ \fo^*\rangle , \policySem{\policy'}{\req} = \langle \dec' \ \ \fo'^*\rangle, \\
\qquad\qquad\qquad\quad\qquad\quad\qquad\quad\qquad\quad\qquad\quad
 \{\: \dec, \dec' \:\}  \not\subseteq \{\permit, \deny\}
\end{array}
$$
It is worth noticing that disjoint polices can be combined with the assurance that the allowed or forbidden authorisations established by each of them are not in conflict, which simplifies the choice of the combining algorithm to be used.

\smallskip
\noindent
\textit{Coverage}.
Coverage among policies means that one of such policies establishes the same decisions as the other ones. More specifically, the property $\cover\ \policy'$ requires that for each request $\req$ for which $\policy'$ evaluates to an admissible decision, the policy under examination evaluates to the same decision. The satisfiability of the \emph{Coverage} property by a policy $\policy$ is defined as follows
$$
\begin{array}{l}
\policy\ \mathtt{sat}\ \cover\ \policy' \qquad\mathit{iff}\qquad \forall\ \req \in \Req \ : \ 
%\\[.1cm]
%\qquad\quad
\policySem{\policy'}{\req} = \langle \dec \ \ \fo^*\rangle, 
\dec \in \{\permit, \deny\} \\
\qquad\qquad\qquad\qquad\qquad\qquad\qquad\qquad\qquad
\Rightarrow\ \ 
\policySem{\policy}{\req} = \langle \dec \ \ \fo'^*\rangle 
\end{array}
$$
Thus, $\policy$ calculates at least the same admissible decisions as $\policy'$. Consequently, if $\policy'$ also covers $\policy$, the two policies establish exactly the same admissible authorisations.

\smallskip

These structural properties permit statically reasoning on the relationships among policies and support system designers in developing and maintaining policies. One technique they enable is the \emph{change-impact analysis}~\cite{FislerKMT05}. This analysis examines policy modifications for discovering unintended consequences of such changes.

\subsubsection{Towards Automated Verification}
\label{sec:towautverif}

It is worth noticing that the analysis approach we propose is feasible in practice, although the involved sets of requests might be infinite, \eg the request extension set of a given request and the set of all possible requests. Indeed, Lemma~\ref{lemma1} implies that only the attribute names occurring within the policies of interest are relevant for their analysis, and these are finite in number; any other name cannot affect policy evaluation. For instance, to analyse a policy $\policy$, we must not consider the set $R$ of all possible requests, but only the set of those requests whose domain is $\mathit{Names}(\policy)$, \ie the finite set of attribute names occurring in $\policy$. This property paves the way for carrying out automated property verification by means of SMT solvers as described in Section~\ref{sec:prop_verify}.

\subsection{Properties on the e-Health case study}
\label{sec:prop_ex}

By way of example, we address in terms of authorisation and structural properties the case of pharmacists willing to write an e-Prescription in the e-Health case study.

Given the patient consent policies in Section~\ref{sec:facplEx}, i.e. Policies~(\ref{policy1}) and~(\ref{policyPatientConsent}), we can verify whether they disallow the access to a pharmacist that wants to write an e-Prescription. To this aim, we define an \textit{Evaluate-To} property\footnote{For the sake of presentation, in this subsection we write requests using the FACPL syntax (i.e., they are specified as sequences of attributes) rather than using their semantical functional representation.} as follows
\[
\Eval{
\begin{array}{@{}l@{}}
(\x{sub/role}, ``\x{pharmacist}")%\\ 
(\x{act/id}, ``\x{write}")
%\\
 (\x{res/typ}, ``\x{e}\textrm{-}\x{Pre}")
 \end{array} }{\ \ \deny} \label{prop:1} \tag{Pr1}
\]
which requires that such request evaluates to $\deny$. Alternatively, by exploiting request extensions, we can check if there exists a request for which a pharmacist acting on e-Prescription can be evaluated to $\notApp$. This corresponds to the \textit{May-Evaluate-To} property defined as follows
\[
\May{
\begin{array}{l}
(\x{sub/role}, ``\x{pharmacist}")%\\
 (\x{res/typ}, ``\x{e}\textrm{-}\x{Pre}")
 \end{array} 
}{\ \ \notApp} 
\label{prop:2} \tag{Pr2}
\]
The verification of these properties with respect to Policy~(\ref{policy1}) results in
$$
\NSat{Policy~(\ref{policy1})\ }{\ (\ref{prop:1})} \quad \Sat{Policy~(\ref{policy1})\ }{\ (\ref{prop:2})}
$$
where $\NSatLit$ indicates that the policy does not satisfy the property. Indeed, as already pointed out in Section~\ref{sec:facplEx}, each request assigning to $\x{act/id}$ a value different from $\x{read}$ evaluates to $\notApp$, hence property (\ref{prop:1}) is not satisfied while property (\ref{prop:2}) holds. On the contrary, the verification with respect to Policy~(\ref{policyPatientConsent}) results in
$$
\Sat{Policy~(\ref{policyPatientConsent})\ }{\ (\ref{prop:1})} \quad \NSat{Policy~(\ref{policyPatientConsent})\ }{\ (\ref{prop:2})}
$$
Both results are due to the internal policy $(\deny)$ which, together with the algorithm $\permitOverO{}$, prevents $\notApp$ to be returned and establishes $\deny$ as default decision. 

The analysis can also be conducted by relying on the structural properties. By verifying completeness, we can check if there is a request that evaluates to $\notApp$. We get
$$
\begin{array}{c}
\NSat{Policy~(\ref{policy1})\ }{\ \complete} \qquad
%\\[.4cm]
 \Sat{Policy~(\ref{policyPatientConsent})\ }{\ \complete} 
 \end{array}
$$
As expected, Policy~(\ref{policy1}) does not satisfy completeness, i.e. there is at least one request that evaluates to $\notApp$, whereas Policy~(\ref{policyPatientConsent}) is complete. Instead, we can check if Policy~(\ref{policyPatientConsent}) correctly refines Policy~(\ref{policy1}) by simply verifying coverage. We get
$$
%\NSat{Policy~(\ref{policy1})\ }{\ \cover\ Policy~(\ref{policyPatientConsent})} \qquad 
\Sat{Policy~(\ref{policyPatientConsent})\ }{\ \cover\ Policy~(\ref{policy1})} 
$$
This follows from the fact that Policy~(\ref{policyPatientConsent}) evaluates to $\permit$ the same set of requests as Policy~(\ref{policy1}) and that Policy~(\ref{policy1}) never returns $\deny$; clearly, the opposite coverage property does not hold. It should be also noted that the two policies are not disjoint (as they share the set of permitted requests).

\subsection{Expressing Constraints with SMT-LIB}
\label{sec:contr_eval}

Property verification requires extensive checks on large (possibly infinite) amounts of requests, hence, in order to be practically effective, tool support is essential. To this aim, we express the constraints defined in Section~\ref{sec:constraint} by means of the SMT-LIB language (\url{http://smtlib.cs.uiowa.edu/}), that is a standardised constraint language accepted by most of the SMT solvers. Intuitively, SMT-LIB is a strongly typed functional language expressly defined for the specification of constraints. Of course, the feasibility of the SMT-based reasoning crucially depends on decidability of the satisfiability checks to be done; in other words, the used SMT-LIB constructs must refer to decidable theories, as \eg uninterpreted function and array theories. We now provide a few insights on the SMT-LIB coding of our constraints. 

The key element of the coding strategy is the parametrised record type representing attributes. This type, named \Fval, is defined as follows
%\begin{Verbatim}[fontsize=\small]
\begin{verbatim}
(declare-datatypes (T) ((TValue 
    (mk-val (val T)(miss Bool)(err Bool)))))
\end{verbatim}
%\end{Verbatim}
Hence, each attribute consists of a 3-valued record, whose first field \FvalField\ is the value with parametric type \texttt{T} assigned to the attribute, while the boolean fields \FvalBot\ and \FvalErr\ indicate, respectively, if the attribute value is missing or has an unexpected type. Additional assertions, not shown here for the sake of presentation, ensure that the fields \FvalBot\ and \FvalErr\ cannot be true at the same time, and that, when one of the last two fields is true, it takes precedence over \FvalField. Of course, a specification formed by multiple assertions is satisfied when all the assertions are satisfied.

\begin{table*}[t]
\caption{Type inference rules for (an excerpt of) \facpl\ expressions; we use $\typeVar$ as a type variable, $\nvType$ as a type name or a type variable, and we assume that $\boolT$, $\doubleT$, $\stringT$, $\dateT$, $\setT$ identify both the values' domains and their type names}
\label{tab:typeInfer}
%\vspace*{-.3cm}
\hspace*{-.6cm}
\centering
$
\footnotesize
\begin{array}{c}
\hline
\mbox{}\\[-.2cm]
\infer[]{\Gamma \vdash \val: \boolT \mid \true}{\val \in \mathit{Bool}}
\quad\
\infer[]{\Gamma \vdash \val: \doubleT  \mid \true}{\val \in \mathit{Double}}
\quad\
\infer[]{\Gamma \vdash \val: \stringT  \mid \true}{\val \in \mathit{String}}
\quad\
\infer[]{\Gamma \vdash \val: \dateT  \mid \true}{\val \in \mathit{Date}}
\quad\
\infer[]{\Gamma \vdash \val: \setT  \mid \true}{\val \in 2^{\mathit{Value}}}
\quad\
\infer[]{\Gamma \vdash \name : \typeVar \mid \true}{\Gamma(\name) = \typeVar}
\\[.3cm]
\infer[]{\Gamma \vdash \x{not}({\expr}) : \boolT\mid C \wedge \nvType = \boolT}{\Gamma \vdash  \expr : \nvType \mid C}
\qquad
\infer[\exprOp \in \{\x{and},\x{or}\}]{\Gamma \vdash \exprOp(\expr_1, \expr_2) : \boolT \mid C_1 \wedge C_2 \wedge \nvType_1 = \boolT \wedge \nvType_2 = \boolT }
{\Gamma \vdash  \expr_1 : \nvType_1 \mid C_1 \qquad \Gamma \vdash  \expr_2 : \nvType_2 \mid C_2}
\\[.3cm]
\infer[]{\Gamma \vdash \x{equal}(\expr_1, \expr_2) : \boolT \mid C_1 \wedge C_2 \wedge \nvType_1 = \nvType_2 }
{\Gamma\vdash  \expr_1 : \nvType_1 \mid C_1 \qquad \Gamma\vdash  \expr_2 : \nvType_2 \mid C_2}
%\\[.3cm]
\qquad
%\infer[]{\Gamma \vdash \x{greater}\textrm{-}\x{than}(\expr_1, \expr_2) : \boolT \mid C_1 \wedge C_2 \wedge
%( ( \nvType_1 = \doubleT \wedge \nvType_2 = \doubleT ) \vee ( \nvType_1 =  \dateT \wedge \nvType_2 =  \dateT ) )}
%{\Gamma\vdash  \expr_1 : \nvType_1 \mid C_1 \qquad \Gamma\vdash  \expr_2 : \nvType_2 \mid C_2 }
%\\[.3cm]
\infer[]{\Gamma \vdash \x{in}(\expr_1, \expr_2) : \boolT \mid C_1 \wedge C_2 \wedge \nvType_1 = \nvType_2}
{\Gamma \vdash  \expr_1 : \nvType_1 \mid C_1 \qquad \Gamma \vdash  \expr_2 : 2^{\nvType_2} \mid C_2}
\\[.3cm]
%\infer[
%\mbox{
%$
%\begin{array}{ll}
%\exprOp \in \{ & \x{add},\x{subtract},\\
%& \x{multiply},\x{divide} \}
% \end{array} 
%$}
%]
%{\Gamma\vdash \exprOp(\expr_1, \expr_2) : \doubleT \mid C_1 \wedge C_2 \wedge \nvType_1 = \doubleT \wedge \nvType_2 = \doubleT }
%{\Gamma \vdash  \expr_1 : \nvType_1 \mid C_1 \qquad \Gamma \vdash  \expr_2 : \nvType_2 \mid C_2 }
%\\[.2cm]
\hline
\end{array}
$
\end{table*}

The declaration of \Fval\ outlines the syntax of SMT-LIB and its strongly typed nature. This means that each attribute occurring in a policy has to be typed, by properly instantiating the type parameter \texttt{T}. Since \facpl\ is an untyped language, to reconstruct the type of each attribute, we define the type inference system (whose excerpt is) reported in Table~\ref{tab:typeInfer}. The rules are straightforward and infer the judgment $\Gamma \vdash \expr: \nvType \mid C$ which, under the typing context $\Gamma$, assigns the type (or the type variable) $\nvType$ to the \facpl\ expression $\expr$ and generates the typing constraint $C$. Specifically, $\Gamma$ is an injective function that associates a type variable to each attribute name, while $C$ is basically made of conjunctions and disjunctions of equalities between variables and types. The generated typing constraint will be processed at the end of the inference process to establish well-typedness of an expression. Thus, a \facpl\ expression is \emph{well-typed} if $C$ is satisfiable, \ie there exists a type assignment for the typing variables occurring in $C$ that satisfies $C$. Moreover, a \facpl\ policy is \emph{well-typed} if the typing constraints generated by all the expressions occurring in the policy are satisfied by a same assignment. These type assignments are then used to instantiate the type parameters of the SMT-LIB constraints representing well-typed policies.

The type inference system aims at statically getting rid of all those policies containing expressions that are not well-typed. For instance, given the expression $\x{or}(\x{cat/id}, \x{equal}(\x{cat/id}, 5))$ and the typing context $\Gamma(\x{cat/id})$ = $\typeVar_{\x{cat/id}}$, the inference rules assign the type $\boolT$ to the expression and generate the constraint $\typeVar_{\x{cat/id}} = \doubleT \wedge \typeVar_{\x{cat/id}} = \boolT \wedge \boolT = \boolT$. This constraint is clearly unsatisfiable (as attribute $\x{cat/id}$ cannot simultaneously be a double and a boolean), hence a policy containing such expression is not well-typed and would be statically discarded. Notably, the use of the field \FvalErr\ allows the analysis to  address the role of errors in policy evaluations, \ie to reason on the authorisations of requests assigning values of unexpected type to attribute names. It is indeed crucial to analyse also these requests, since a possible attacker can leverage on them to circumvent the access control system.

On top of the \Fval\ datatype we build the uninterpreted functions expressing the operators of the proposed constraint formalism. By way of example, the operator $\fand$ corresponds to the \texttt{FAnd} function defined as follows
%\begin{verbatim}
%(define-fun FAnd (
%   (x (TValue Bool)) (y (TValue Bool))
%  )(TValue Bool)
%   (ite (and (isTrue x) (isTrue y))
%      (mk-val true false false)
%      (ite (or (isFalse x) (isFalse y))
%          (mk-val false false false)
%          (ite (or (err x) (err y))
%               (mk-val false false true)
%               (mk-val false true  false)))))
%\end{verbatim}
\begin{verbatim}
(define-fun FAnd 
   ((x (TValue Bool)) (y (TValue Bool)))
   (TValue Bool)
   (ite (and (isTrue x) (isTrue y))
      (mk-val true false false)
      (ite (or (isFalse x) (isFalse y))
          (mk-val false false false)
          (ite (or (err x) (err y))
               (mk-val false false true)
               (mk-val false true  false)))))
\end{verbatim}
where \texttt{mk-val} is the constructor of \Fval\ records. Hence, the function takes as input two \Fval\ \texttt{Bool} records (i.e.,  type \texttt{Bool} is the instantiation of the type parameter \texttt{T}) and returns a \texttt{Bool} record as well. The conditional if-then-else assertions \texttt{ite} are nested to form a structure that mimics the semantic conditions of Table~\ref{tab:constr_sem}, so that different \Fval\ records are returned according to the input. The function \texttt{isFalse} (resp. \texttt{isTrue}) is used to compactly check that all fields of the record are $\false$ (resp. only the field \FvalField\ is $\true$). All the other constraint operators, except $\in$, are defined similarly. 

To express the operator $\in$, we need to represent multivalued attributes. Firstly, we define an array datatype, named \texttt{Set}, to model sets of elements as follows
\begin{verbatim}
(define-sort Set (T) (Array Int T)) 
\end{verbatim}
where the type parameter \texttt{T} is the type of the elements of the array. By definition of array, each element has an associated integer index that is used to access the corresponding value. Thus, a multivalued attribute is represented by a \Fval\ record with type an instantiated \texttt{Set}, e.g. \texttt{(TValue (Set Int))} is an attribute whose value is a set of integers.  Consequently, we can build the uninterpreted function modelling the constraint operator $\in$. In case of integer sets, we have
%\begin{verbatim}
%(define-fun inInt (
%  (x (TValue  Int)) (y (TValue (Set Int)))
% )(TValue Bool)
%  (ite (or (err x)(err y)) 
%    (mk-val false false true)
%    (ite (or (miss x) (miss y))
%      (mk-val false true false)
%      (ite (exists ((i Int)) 
%             (=  (val x) (select (val y) i)))
%           (mk-val true  false false)
%           (mk-val false false false)))))
%\end{verbatim}
\begin{verbatim}
(define-fun inInt 
  ((x (TValue  Int)) (y (TValue (Set Int))))
  (TValue Bool)
  (ite (or (err x)(err y)) 
    (mk-val false false true)
    (ite (or (miss x) (miss y))
      (mk-val false true false)
      (ite (exists ((i Int)) 
             (=  (val x) (select (val y) i)))
           (mk-val true  false false)
           (mk-val false false false)))))
\end{verbatim}
where the command \texttt{(select (val y) i)} takes the value in position \texttt{i} of the set in the field \FvalField\ of the argument \texttt{y}. In addition to the conditional assertions, the function uses the existential quantifier \texttt{exists} for checking if the value of the argument \texttt{x} is contained in the set of the argument \texttt{y}. 

The coding approach we pursue generates, in most of the cases, fully decidable constraints. In fact, since we support non-linear arithmetic, i.e. multiplication, it is possible to define constraints for which a constraint solver is not able to answer. Anyway, modern constraint solvers are actually able to resolve nontrivial nonlinear problems that, for what concerns access control policies, should prevent any undefined evaluation\footnote{It should be noted that if at least one argument of each occurrence of the multiply operator is a numeric constant, then the resulting non-linear arithmetic constraints are decidable.}. Similarly, the quantifier-based constraints are in general not decidable, but solvers still succeed in evaluating complicated quantification assertions due to, e.g., powerful pattern techniques (see, e.g., the documentation of Z3). Notice anyway that if we assume that each expression operator $\x{in}$ (and, consequently, constraint operator $\in$) is applied to at most one attribute name, the quantifications are bounded by the number of literals defining the other operator argument.

Concerning the value types we support, SMT-LIB does not provide a primitive type for $\dateT$. Hence, we use integers to represent its elements. Furthermore, even though SMT-LIB supports the $\stringT$ type, the Z3 solver we use does not. Thus, given a policy as an input, we define an additional datatype, say \texttt{Str}, with as many constants as the string values occurring in the policy. The string equality function is then defined over \Fval\ records instantiated with type \texttt{Str}.

By way of example, the SMT-LIB code for the constraint $\const_{trg1}$ (see Section~\ref{sec:constr_ehealth}) is 
%\begin{Verbatim}[fontsize=\small]
%\begin{verbatim}
%(define-fun cns_target_Rule1 () (TValue Bool)
%(FAnd 
%  (equalStr n_sub/role cst_doc) 
%  (FAnd (equalStr n_act/id cst_write)  
%      (FAnd 
%         (inStr cst_permWrite n_sub/perm)
%         (inStr cst_permRead n_sub/perm)))))
%\end{verbatim}
\begin{verbatim}
(define-fun cns_target_Rule1 
  () 
  (TValue Bool)
  (FAnd 
    (equalStr n_sub/role cst_doc) 
    (FAnd (equalStr n_act/id cst_write)  
      (FAnd 
        (inStr cst_permWrite n_sub/perm)
        (inStr cst_permRead n_sub/perm)))))
\end{verbatim}
% \end{Verbatim}
where identifiers starting with \texttt{n\_} (resp. \texttt{cst\_}) represent attribute names (resp. literals) of the represented expression. The whole SMT-LIB code for Policy~(\ref{policy1}) can be found at 
%\url{http://facpl.sf.net/eHealth/index.html}.
\url{http://facpl.sf.net/eHealth/}.

%The SMT-LIB coding just outlined enables tool-support for our constraint formalism. In particular, by using SMT solvers, we can reason on the possible assignments (un)satisfying the constraints representing \facpl\ policies. Therefore, this makes possible to automatically verify security and structural properties on policies. 

\subsection{Automated Properties Verification}
\label{sec:prop_verify}

The SMT-LIB coding permits using SMT solvers to automatically verify the properties formalised in Section~\ref{sec:prop}. In the following, given a \facpl\ policy $\policy$, we denote by $\langle \permit: \smtconst_p\ \deny: \smtconst_d\ \notApp: \smtconst_n\ \indet: \smtconst_i\rangle$ the tuple of SMT-LIB codes representing the formal constraints $\translPol{\policy} = \langle \permit: \const_p \ \deny: \const_d\ \notApp: \const_n\ \indet: \const_i\rangle$. Hereafter, we present first the strategies to follow for verifying the authorisation properties, then those for verifying the structural properties. 

\subsubsection{Authorisation Properties}

The verification of authorisation properties requires: (i) to introduce into the policy constraint of interest, which is chosen according to the property, the SMT-LIB coding of the request defined by the property; (ii) to check the satisfiability (or validity) of the resulting constraint.

Given a request $\req$, the SMT-LIB coding of the request is defined as follows
$$
\begin{array}{l}
\req_{smtlib} \define \\[1pt]
\quad
    \left\{ 
    \begin{array}{l|l@{}}
        \begin{array}{@{}l@{}}
            \mathtt{(assert\ (=\ (val\ } \ \name  \mathtt{)\ \ } \val )) \\
            \mathtt{(assert\ (and\ (not\ (miss\ } \ \name ))\ \\
            \qquad\qquad\qquad\, \, 
            \mathtt{(not\ (err\  \ } \name ))))
        \end{array}
    & 
     \begin{array}{l}
     	\req(n) = v 
     \end{array} 
    \end{array}
    \right\}
\end{array}
$$
Thus, all attribute names $\name$ in $\req$ are asserted to be equal to their value $\val$ and to be neither missing nor erroneous. Furthermore, given a \facpl\ policy $\policy$, we also define the following SMT-LIB coding of the request
$$
\begin{array}{l}
\overline{\req_{smtlib}(p)} \define \\[1pt]
\quad
    \left\{ 
    \begin{array}{l|l}
            \mathtt{(assert\ (miss\ }\ \name )) 
    & 
		\name \in \mathit{Names}(\policy) \ \wedge\  \req(n) = \excpt
    \end{array}
    \right\}
\end{array}
$$
where, as in Section~\ref{sec:prop_sem}, $\mathit{Names}(\policy)$ indicates the set of attribute names occurring in $\policy$. Thus, all the names $\name$ that occur in $\policy$ and are not assigned to a value in $\req$ are asserted to be missing attributes.

By exploiting this SMT-LIB coding of requests, we define the automated verification (i.e., via an SMT solver) of the authorisation properties as follows
$$
\begin{array}{@{}l@{}}
\Sat{\policy} \Eval{\req}{\dec}\ \  \mathtt{iff} \\
\qquad\quad
\smtconst_{\dec} \ \circ\ \req_{smtlib} \ \circ\ \overline{\req_{smtlib}(p)} \quad\ \, \mathtt{is} \ \mathtt{sat}  \\
\Sat{\policy} \May{\req}{\dec} \ \ \, \mathtt{iff} \\
\hfill 
\smtconst_{\dec} \ \circ\ \req_{smtlib} \quad  \mathtt{is} \ \mathtt{sat} \\
\Sat{\policy} \Must{\req}{\dec} \   \mathtt{iff} \\
 \hfill \smtconst_{\dec} \ \circ\ \req_{smtlib} \quad  \mathtt{is} \ \mathtt{valid} \\
\end{array}
$$
where $\circ$ indicates the concatenation of SMT-LIB code\footnote{Notably, checking the satisfiability of the SMT-LIB code resulting from the concatenation of two (sets of) SMT-LIB assertions amounts to check if both the assertions hold at the same time.}\ and $\mathtt{valid}$ means that the corresponding SMT-LIB code is a valid set of assertions. Some comments follow.

The \textit{Evaluate-To} property does not exploit request extensions, hence all attribute names not assigned by the considered request can only assume the special value $\excpt$. This means that the request $\req$ is coded in SMT-LIB with $\req_{smtlib}$ and $\overline{\req_{smtlib}(p)}$. The satisfiability of the property thus corresponds to that of the resulting SMT-LIB code.

To verify the \textit{May-Evaluate-To} property, since it considers request extensions, the request has to be coded only with $\req_{smtlib}$. As before, the satisfiability of the property  corresponds to that of the resulting SMT-LIB code.

Finally, to verify the \textit{Must-Evaluate-To} property, we code again the request with $\req_{smtlib}$ only, but we check the validity of the resulting SMT-LIB code, i.e. that it is satisfied by all the assignments for the attribute names. This amounts to check if the negation of the resulting SMT-LIB code is not satisfiable, in which case the property holds.

\subsubsection{Structural Properties}
The verification of structural properties does not require to modify policy constraints, but rather to check the unsatisfiability of combinations of constraints. It is defined as follows
$$
\begin{array}{@{}l@{}}
\policy\ \mathtt{sat}\ \complete\qquad\,  \mathtt{iff} \quad \ \ \,  \smtconst_{n}  \quad\, \, \mathtt{is} \ \mathtt{unsat}  \\[.1cm]
\policy\ \mathtt{sat}\ \disjoint\ \policy'  \quad\! \mathtt{iff}
\\ 
    \qquad\quad 
    \left\{ 
    \begin{array}{l@{\qquad\qquad }l}
    \smtconst_{p} \ \circ\ \smtconst'_{p} &  \mathtt{is} \ \mathtt{unsat} \\
    \smtconst_{p} \ \circ\ \smtconst'_{d} &  \mathtt{is} \ \mathtt{unsat} \\
    \smtconst_{d} \ \circ\ \smtconst'_{p} &  \mathtt{is} \ \mathtt{unsat} \\
    \smtconst_{d} \ \circ\ \smtconst'_{d} &  \mathtt{is} \ \mathtt{unsat} \\
    \end{array}
    \right.
\\
\mbox{}\\[-.2cm]
\policy\ \mathtt{sat}\ \cover\ \policy'  \qquad\ \ \, \mathtt{iff}
\\
\qquad\quad
    \left\{ \:
    \begin{array}{l@{\qquad\quad\!}l}
    \lnot\ \smtconst_{p} \ \circ\ \smtconst'_{p} &  \mathtt{is} \ \mathtt{unsat} \\
    \lnot\ \smtconst_{d} \ \circ\ \smtconst'_{d} &  \mathtt{is} \ \mathtt{unsat} \\
    \end{array}
    \right.
\\
\end{array}
$$
where $\smtconst'_{\dec}$ refers to the SMT-LIB code modelling decision $\dec$ of policy $\policy'$. Some comments follow.

The trivial case is that of the \emph{completeness} property, which only amounts to check if the constraint modelling the decision $\notApp$ is not satisfiable, \ie if its neg\-ation is valid. If it is, the property holds. 

The \emph{disjointness} of two policies is verified by checking, one at a time, if the conjunctions between the $\permit$ or $\deny$ constraint of the first policy and the $\permit$ or $\deny$ constraint of the second policy are not satisfiable.
%\footnote{The satisfiability of two (sets of) SMT-LIB assertions amounts to check if they both hold at the same time, hence to check their conjunction.}. 
If this holds for the four possible combinations of those constraints, the property holds. 

The \emph{coverage} of policy $\policy$ on policy $\policy'$ is verified by checking if the conjunction between the negation of the $\permit$ (resp., $\deny$) constraint of $\policy$ and the $\permit$ (resp., $\deny$) constraint of $\policy'$ is not satisfiable. Intuitively, if the policy $\policy$ does not calculate a $\permit$ or $\deny$ decision (i.e., $\lnot\ \smtconst_{p}$ and $\lnot\ \smtconst_{d}$ hold), policy $\policy'$ cannot do it as well, otherwise the property is not satisfied. If this holds for the two conjunctions separately, the property holds.

% !TEX root = ./facpl_journal.tex

\section{The FACPL Toolchain}
\label{sec:tool}

\begin{figure*}[t]
\centering
\includegraphics[scale=.38]{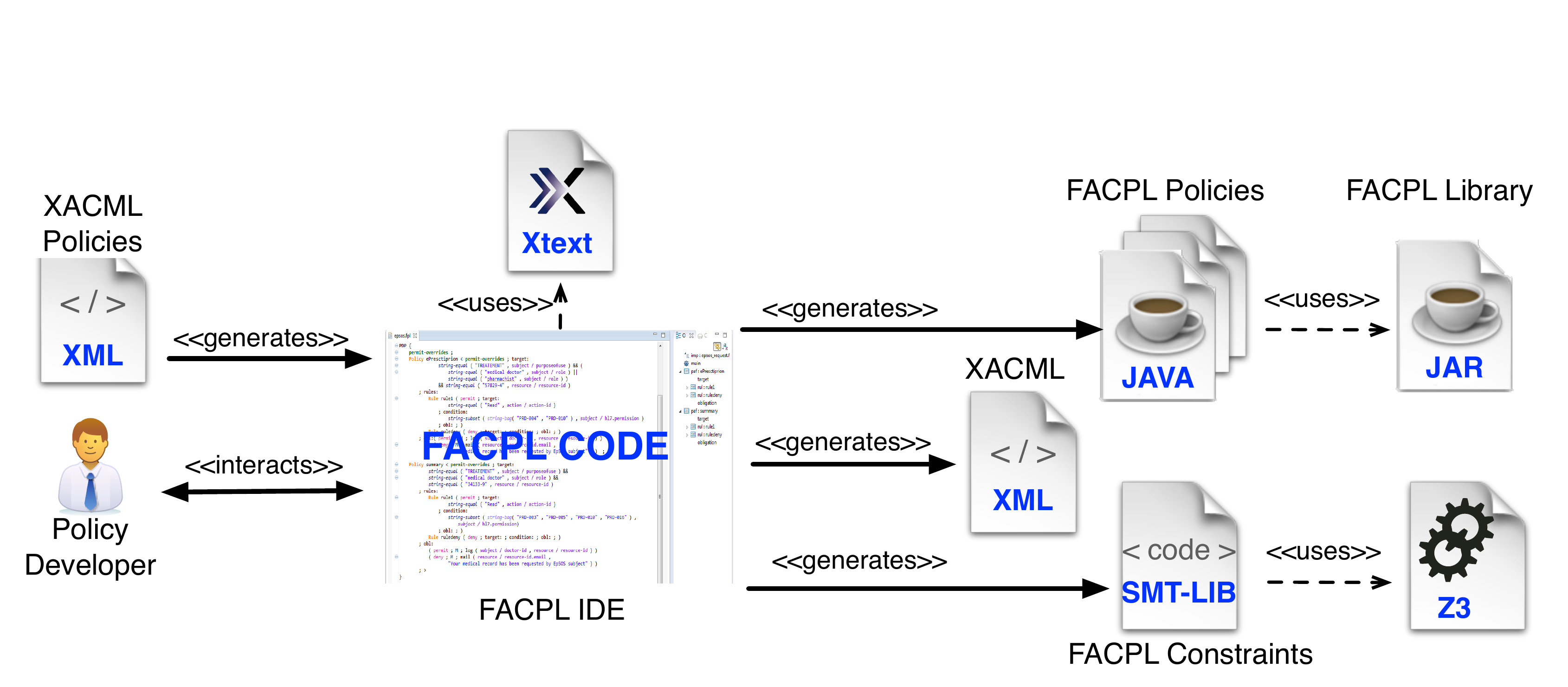}
\vspace*{-.4cm}
\caption{The \facpl\ toolchain}
\label{fig:tools}
\end{figure*}

The coding, analysis and enforcement tasks pursued in the development of \facpl\ specifications are fully supported by a Java-based software toolchain\footnote{The \facpl\ supporting tools are freely available and open-source; binary files, source files, unit tests and documentation can be found at the \facpl\ website \url{http://facpl.sf.net}.}, graphically depicted in Figure~\ref{fig:tools}. The key element of the toolchain is an Eclipse-based IDE that provides features like, e.g., static code checks and automatic generation of runnable Java and SMT-LIB code. An expressly developed Java library is used to compile and execute the Java code, while the analysis of SMT-LIB code exploits the Z3 solver.

To provide interoperability with the %well-established industry 
standard \xacml\ and the various available tools supporting it (e.g., XCREATE~\cite{XCREATE}, Margrave~\cite{FislerKMT05} and Balana~\cite{Balana}), the IDE automatically translates \facpl\ code into \xacml\ one and vice-versa. Because of slightly different expressivity, there are some limitations in \facpl\ and \xacml\ interoperability (see Section~\ref{sec:facplvsxacml} for further details).

Furthermore, to allow newcomer users to directly experiment with \facpl, the web application ``Try FACPL in your Browser'' (reachable from the \facpl\ website) offers an online editor for creating and evaluating \facpl\ policies; the e-Health case study is there reported as a running example. Additionally, the web interface reachable from \url{http://facpl.sf.net/eHealth/demo.html} shows a proof-of-concept demo on how a \facpl -based access control system can be exploited for providing e-Health services. 

In the rest of this section, we detail the \facpl\ Java library and IDE, while Section~\ref{sec:performance} reports performance and functionality comparisons with other similar tools.

\subsection{The \facpl\ Library} 

The Java library we provide aims at representing and evaluating \facpl\ policies, hence at fully implementing the evaluation process formalised in Section~\ref{sec:formal_sem}. To this aim, driven by the formal semantics, we have defined a conformance test-suite that systematically verifies each library unit (e.g., expressions and combining algorithms) with respect to its formal specification.

For each element of the language the library contains an abstract class that provides its evaluation method. In practice, a \facpl\ policy is translated into a Java class that instantiates the corresponding abstract one and adds, by means of specific methods (e.g., $\mathtt{addObligation}$), its forming elements. Similarly, a request corresponds to a Java class containing the request attributes and a reference to a context handler that can be used to dynamically retrieve additional attributes at evaluation-time.

Evaluating requests amounts to invoke the evaluation method of a policy, which coordinates the evaluation of its enclosed elements in compliance with its formal specification. In addition to the authorisation process, the library supports the enforcement process by defining the three enforcement algorithms and a minimal set of pre-defined \pep\ actions, i.e. $\mathtt{log}$, $\mathtt{mailTo}$ and $\mathtt{compress}$. Additional  actions can be dynamically introduced by providing their implementation classes to the \pep\ initialisation method.

By way of example, we report in the following listings (an excerpt of) the Java code of Policy~(\ref{policy1}) introduced in Section~\ref{sec:facplEx}. Besides the specific methods used for adding policy elements, the Java code highlights the use of class references for selecting expression operators and combining algorithms. This design choice, together with the use of best-practices of object-oriented programming, allows the library to be easily extended with, e.g., new expression operators, combining algorithms and enforcement actions. Notice that rules are private inner classes, because they cannot be referred outside the enclosing policy sets. %different from the enclosing one.
\medskip

\ALGOInline{An excerpt of the generate code of the e-Prescription policy}{}{ePre2_short.txt}

Besides the four-valued decisions considered so far, the \facpl\ library also supports the extended indeterminate values used by \xacml, \ie $\indetP$, $\indetD$ and $\indetDP$. They can be used to specify the potential decision ($\permit$, $\deny$ and both, respectively) that should have been returned by the evaluation of a policy if an error would not have occurred. Extended indeterminate values allow the \pdp\ to obtain additional information about policy evaluation, which can be exploited, e.g., during policy debugging for improving the treatment of errors. However, their usage may require additional workload. In fact, it establishes that if the target of a policy set evaluates to $\err$, rather than stopping and returning $\indet$, the evaluation process continues the computation by processing the enclosed policies and using the decision resulting from the application of the combining algorithm to calculate an extended indeterminate value. Thus, e.g., if the combining algorithm returns the decision $\permit$, the evaluation of the policy returns $\indetP$. For all these reasons, we have chosen to support the extended indeterminate values by means of a boolean parameter (of the method $\mathtt{doAuthorisation}$) whose setting can enable or disable their use at each \pdp\ invocation.

\subsection{The FACPL IDE}

\begin{figure*}[t]
\centering
\includegraphics[scale=.18]{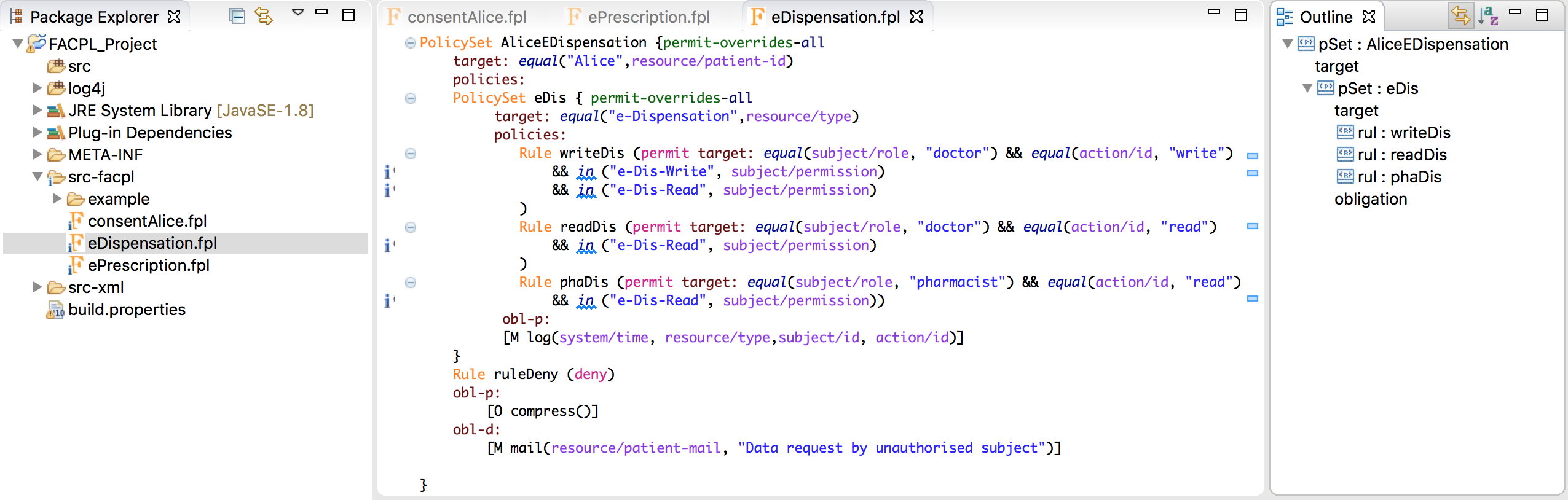}
\vspace*{-.4cm}
\caption{The \facpl\ IDE}
\label{fig:IDE}
\end{figure*}

The \facpl\ IDE (see the screenshot in Figure~\ref{fig:IDE}) is developed as an Eclipse plug-in and aims at bringing together the available functionalities and tools. Indeed, it fully supports writing, evaluating and analysing \facpl\ specifications. The plug-in has been implemented by means of Xtext (\url{http://www.eclipse.org/Xtext/}), that is a framework to design and deploy domain-specific languages.

The plug-in accepts an enriched version of the \facpl\ language, which contains high level features facilitating the coding tasks. In particular, each policy has an identifier that can be used as a reference to include the policy within other policies, while specific linguistic handles enable the definition of new expression operators and combining algorithms. In order to ease the organisation of large policy specifications, the plug-in supports modularisation of files and import commands extending file scopes.

The development environment provided by the plug-in is standard. It offers graphical features (e.g., keywords highlighting, code suggestion and navigation within and among files), static controls on \facpl\ code (e.g., uniqueness of identifiers and type checking), and automatic generation of Java, \xacml, and SMT-LIB code. To configure all the required libraries, a dedicated wizard creates a  \facpl -type project. 

To facilitate the analysis of \facpl\ policies, the plug-in also provides a simple interface allowing policy developers to specify the authorisation and structural properties to be verified on a certain policy. Thus, the plug-in automatically generates the corresponding SMT-LIB files according to the strategies reported in Section~\ref{sec:prop_verify}; an execution script for the Z3 solver is also generated. Of course, the SMT-LIB files can be also evaluated by any other solver accepting SMT-LIB and supporting the theories we use.

As previously pointed out, the Java library is flexible enough to be easily extended. The plug-in facilitates this task by means of dedicated commands. For instance, to define a new expression operator, once a developer has defined the signature of the new function (which is used for type checking and inference), a template of its Java and SMT-LIB implementation is automatically generated. The actual implementation of the Java class, as well as that of the SMT-LIB function, is left to the developer.

In the \facpl\ IDE, all the mentioned functionalities are offered via the customised \facpl\ project created by the dedicated wizard. Once the project is created, the policy developer can write code starting from the basic \facpl\ and \xacml\ examples already provided or from scratch. A \facpl\ file is a generic text file with extension \emph{.fpl},  which has dedicated text editor, outline view and contextual menus. Functionalities supporting code development, \eg code suggestion and auto-completion, are available via the usual Eclipse shortcuts and menus. In particular, from either the toolbar menu or the right-click editor menu, the developer can find a set of pre-defined commands to generate Java, \xacml\ and SMT-LIB code, or to open a step-by-step wizard for the definition of authorisation and structural properties.

% !TEX root = ./facpl_journal.tex

\section{Related Work}
\label{sec:relatedWork}

A preliminary version of \facpl\ was introduced in~\cite{ESSOS} with the aim of formalising the semantics of \xacml. The language presented here addresses a wider range of aspects concerning access control. Specifically, the syntax of the language is cleaned up and streamlined (e.g., rule conditions are integrated with rule targets and the policy structure is simplified); at the same time, it is extended with additional combining algorithms, the \pep\ specification, an explicit syntax for expressions, and obligations. This latter extension widens \facpl\ applicability range and expressiveness, as it provides the policy evaluation process with further, powerful means to affect the behaviour of controlled systems (see \eg\ \cite{WS-FM} for a practical example of a policy-based manager for a Cloud platform). Additional important differences concern the definition of the policy semantics: in \cite{ESSOS} it is given in terms of partitions of the set of all possible requests, while here it is defined in a functional fashion with respect to a generic request. The new approach also features the formalisation of combining algorithms in terms of binary operators and instantiation strategies, and the automatic management of missing attributes and evaluation errors throughout the evaluation process. Most of all, the aim of this work is significantly different: we do not only propose a different language, but we provide a complete methodology that encompasses all phases of policy lifecycle, i.e. specification, analysis and enforcement. Concerning the analysis, we define a set of relevant authorisation and structural properties (whose preliminary definition is given in \cite{WWV15}) characterised in terms of sets of requests. We then introduce a constraint-based representation of policies and an SMT-based approach for mechanically verifying properties on top of constraints. To effectively support the functionalities, we provide a fully-integrated software toolchain.

%The preliminary version of \facpl\ is used in \cite{MDSEC} to formalise UML-based models of access control policies for Web applications, while \cite{HEALTHINF} sketches an early description of the development methodology for \facpl\ policies.

In the rest of this section we survey more closely related work. First, we comment on differences and interoperability of \facpl\ with the already mentioned standard \xacml\ (Section~\ref{sec:facplvsxacml}). Then, we discuss other relevant policy languages (Section~\ref{sec:pol_lang}), and approaches to the analysis of (access control) policies (Section~\ref{sec:rel_analysis}). Finally, we compare supporting tools (Section~\ref{sec:performance}). 

%%%%%%%%%%%%%%
%XACML vs FACPL 
%%%%%%%%%%%%%%
\subsection{FACPL vs XACML}
\label{sec:facplvsxacml}
  
XACML~\cite{XACML3} is a well-established standard for the specification of attribute-based access control policies and requests. It has an XML-based syntax and an evaluation process defined in accordance with~\cite{rfc2753} (hence similar to the \facpl\ one). As a matter of notation, hereafter the words emphasised in sans-serif, e.g. \textsf{Rule}, are XML elements, while element attributes are in italics.

%\todo[inline]{Different terminology: obligation fulfilment vs. instantiation, obligation discharge vs. ???, \ldots\\
%in \facpl\ obligations do not contain an effect}

From a merely lexical point of view, \facpl\ allows developers to define each policy element via a lightweight mnemonic syntax and leads to  compact policy specifications. Instead, the XML-based syntax used by \xacml\ ensures cross-platform interoperability, but generates verbose specifications that are hardly of immediate comprehension for developers and are not suitable for formally defining semantics and analysis techniques. Table~\ref{tab:comparison} exemplifies a lexical comparison between the \facpl\ policies for the e-Health case study and the corresponding \xacml\ ones (both groups of policies can be downloaded from %\url{http://facpl.sf.net/eHealth/index.html}). 
\url{http://facpl.sf.net/eHealth/}). 

Although \facpl\ and \xacml\ policies have a similar structure, there are quite a number of (semantic) differences. Hereafter we outline the main ones.

\begin{table*}[!t]
\caption{\facpl\ vs. \xacml\ on the e-Health case study} \label{tab:comparison}
%\footnotesize
\centering
\begin{tabular}{|@{\ }l@{\ }|@{\ }c@{\ }|@{\ }c@{\ }|@{\ }c@{\ }|@{\ }c@{\ }|@{\ }c@{\ }|@{\ }c@{\ }|}
\hline
\multicolumn{1}{|@{\ }c@{\ }|@{\ }}{\textbf{Policy}} & \multicolumn{2}{@{}c@{\ }|@{\ }}{\textbf{Number of lines}} & \textbf{Saved} & \multicolumn{2}{@{}c@{\ }|@{\ }}{\textbf{Number of types}} & \textbf{Saved}\\[.1cm]
 & \xacml & \facpl & \textbf{lines} & \xacml & \facpl & \textbf{types}\\
\hline
e-Prescription  &  239  &  24 & 89,95\% & 10.656 &  894 &   91,61\%\\
\hline
e-Dispensation   &  239 & 24 &  89,95\% & 10.674 & 914  & 91,43\%  \\
\hline
Patient Consent   & 423 & 38  & 91,01\% & 19.195 & 1.558  & 91,88\% \\
\hline
\end{tabular}
\end{table*}

In \facpl, request attributes are referred by structured names. In \xacml, they are referred by either \textsf{AttributeDesignator} or \textsf{AttributeSelector} elements. The former one corresponds to a typed version of a structured name, while the latter one is defined in terms of XPath expressions, which are not supported by \facpl. Anyway, \facpl\ can represent some of them by appropriately using structured names; \eg an \textsf{AttributeSelector} with category $\mathit{subject}$ and an XPath expression like $\mathit{type/id/text()}$ correspond to $\x{subject/type.id}$.

A \xacml\ \textsf{Target} is made of \textsf{Match} elements defining basic comparison functions on request attributes. The elements are then organised in terms of the tag structure \textsf{AnyOf}\textrm{-}\textsf{AllOf}\textrm{-}\textsf{Match}. This structure can be rendered in \facpl\ by means of, respectively, the expression operators $\x{and}\textrm{-}\x{or}\textrm{-}\x{and}$. However, slightly different results can be obtained from target evaluations due to the management of errors and missing attributes. Indeed, when a value is missing, the \xacml\ semantics of \textsf{Match} elements returns $\err$ or $\false$ (according to the setting of the boolean parameter \textsf{MustBePresent}), whereas the \facpl\ semantics of the target elements (depending on the expression operator) can return $\excpt$ possibly until the level of policies is reached, where $\excpt$ is converted to $\false$; the same occurs for evaluation errors. From our point of view, the \facpl\ management of missing attributes and evaluation errors is smoother than the \xacml's one. Indeed \facpl\ can distinguish if a boolean target function has returned $\false$ due to a not satisfied comparison or due to a missing attribute; instead, \xacml\ cannot always do it.

Additionally, the evaluation of \textsf{Match} functions in \xacml\ is iteratively defined on all the retrieved attribute values. To ensure a similar behaviour in \facpl, a \xacml\ expression such as, e.g., an equality comparison must be translated into an operator defined on sets, like \eg $\x{in}$. Clearly, this limits the amount of \xacml\ functions that can be faithfully represented in \facpl. 
 
Except for target evaluation, the semantics of \xacml\ and \facpl\ policies mainly comply with each other. However, the specification approach fostered by \facpl\ is more generic and poses less constraints on the policy structure.  In particular, \xacml\ prescribes a policy structure based on \textsf{Policy}s, i.e. collections of \textsf{Rules}, and \textsf{PolicySet}s, i.e. collections of \textsf{Policy}s and/or \textsf{PolicySet}s. Most of all, \xacml\ forces specific constraints on targets of \textsf{Policy} and \textsf{Policy Set}: they can only contain comparison functions and each comparison can only contain one attribute name. Moreover, \xacml\ supports fewer combining algorithms than \facpl, as well as instantiation strategies (indeed, \xacml\ only provides the $\greedy$ one). Additionally, as previously pointed out, \xacml\ specialises the decision $\indet$ into three sub-decisions: for the sake of presentation, we have not considered them in the formal development of \facpl, but they are fully supported by the \facpl\ library (see Section~\ref{sec:tool}). 

\facpl\ and \xacml\ share the same management of obligations, although in \facpl\ this process is specified in a more precise manner, \ie through the instantiation strategies and the binary combining operators. It is also worth noticing that the \xacml\ `obligation fulfilment' is termed `obligation instantiation' in \facpl, since indeed the evaluation of obligations by the PDP does not carry out any task beyond its mere instantiation.

Also, \xacml\ provides some constructs that do not crucially affect policy expressiveness and evaluation. For instance, \textsf{Variable} elements permit defining pointers to expression declarations. These constructs are not directly supported by \facpl.  

To sum up, except for minor differences on tangled \xacml\ aspects mainly concerning the management of missing attributes and evaluation errors, \facpl\ subsumes \xacml\ policies not containing XML raw data by offering, at the same time, a higher flexibility in the policy specification approach and a richer set of combining algorithms. Most differently from \xacml, \facpl\ provides a formal semantics that supports formally-based analysis techniques.

%%%%Policy Languages and Applications
\begin{table*}[!t]
\caption{Comparison of some relevant policy languages ($\checkmark^\ast$ means that user encoding is required)}\label{tab:PL_comparison}
\centering
\begin{tabular}{|@{\ }c@{\ }|@{\ }c@{\ }|@{\ }c@{\ }|@{\ }c@{\ }|@{\ }c@{\ }|@{\ }c@{\ }|@{\ }c@{\ }|@{\ }c@{\ }|}
\hline
\textbf{Features} & \textbf{\xacml} & \textbf{Ponder} & \textbf{ASL} & \textbf{PTaCL} & \cite{RaoLBLL09} & \cite{ArkoudasCC14} & \textbf{\facpl}\\
\hline
Rule-based & $\checkmark$ & $\checkmark$ & & &  & & $\checkmark$\\
\hline
Logic-based & &  & $\checkmark$ & $\checkmark$ & $\checkmark$ &  $\checkmark$&   \\
\hline
Mnemonic spec. & & \checkmark & & & & & \checkmark\\
\hline
Comb. algorithms & \checkmark & & $\checkmark^\ast$ & $\checkmark^\ast$ & \checkmark &  $\checkmark^\ast$  &\checkmark\\
\hline
Obligations & \checkmark &  \checkmark & & & & & \checkmark \\
\hline
Missing attributes & \checkmark & &  & \checkmark &  & & \checkmark\\
\hline
Error handling &  \checkmark & & &  & & & \checkmark \\
\hline
\end{tabular}
\end{table*}

\subsection{Policy Languages for Access Control}
\label{sec:pol_lang}

Policy languages have recently been the subject of extensive research, both by industry and academia. Indeed, policies permit managing different important aspects of system behaviours, ranging from access control to adaptation and emergency handling. We compare in the following the main policy languages devoted to access control, which is our focus; Table~\ref{tab:PL_comparison} summarises the comparison.

Among the many proposed policy languages, we can identify two main specification approaches: \emph{rule-based}, as \eg the \xacml\ standard and Ponder~\cite{DamianouDLS01,TwidleDLS09}, and \emph{logic-based}, as e.g. ASL~\cite{Jajodia97alogical}, PTaCL~\cite{CramptonM12} and the logical frameworks in~\cite{ArkoudasCC14}. Many other works, as \eg~\cite{LiWQBRLL09,RaoLBLL09,RamliNN14}, study (part of) \xacml\ by formally addressing peculiar features of design and evaluation of access control policies. 
%%%%%
%(iii) \emph{ontology-based}, as e.g. in KAoS~\cite{UszokBJJTDA04} and Rei~\cite{KagalFJ03}.
%%%%%

In the rule-based approach, policies are structured into sets of declarative rules. The seminal work~\cite{sloman} introduces two types of policies: authorisations and obligations. Policies of the former type have the aim of establishing if an access can be performed, while those of the latter type are basically Event-Condition-Action rules triggering the enforcement of adaptation actions. This setting is at the basis of Ponder.
%~\cite{DamianouDLS01,TwidleDLS09} 
%and \xacml.

Ponder is a strongly-typed policy language that, differently from \facpl, takes authorisation and obligations policies apart. Ponder does not provide explicit strategies to resolve conflictual decisions possibly arising in policy evaluation, rather it relies on abductive reasoning to statically prevent conflicts from occurring, although no implementation or experimental results are presented. On the contrary, \facpl\ provides combining algorithms, as we think they offer higher degrees of freedom to policy developers for managing conflicts. Similarly to Ponder, \facpl\ uses a mnemonic textual specification language and addresses value types, although they are not explicitly reported. Finally, the \facpl\ evaluation process is triggered by requests and not by events as in Ponder. Anyway, the \facpl\ approach is as general as the Ponder one since, by exploiting attributes, requests can represent any event of a system.

%PDL
%Similarly, the PDL language~\cite{ChomickiLN03} uses ECA rules to manage the authorisation and allocation of resources on computer networks. More specifically, to each policy is attached a monitor in charge of controlling and resolving the conflicts possibly generated by the policy evaluation. This approach is based on events and is more general than the one proposed in ASL. 

The logic-based approach mainly exploits predicate or multi-valued logics. Most of these proposals are based on Datalog~\cite{CeriGT89} (see, e.g., ~\cite{Jajodia97alogical,HashimotoKTT09,DeTreville:2002}), which implies that the access rules are defined as first order logic predicates. In general, these approaches offer valuable means for a low-level design of rules, but the lack of high-level features, e.g. combining algorithms or obligations,  prevent them from representing policies like those of \facpl.

ASL
%~\cite{Jajodia97alogical} 
is one of the firstly defined logic-based languages. It expresses authorisation policies based on user identity credentials and authorisation privileges, and supports hierarchisation and propagation of access rights among roles and groups of users.  Additional predicates enable the definition of (a posteriori) integrity checks on authorisation decisions, e.g. conflict resolution strategies. Differently from ASL, \facpl\ provides high-level constructs and offers by-construction many not straightforward features like, e.g., conflict resolution strategies. A suitable use of policies hierarchisation enables propagation of access rights also in \facpl\ specifications.

PTaCL
%~\cite{CramptonM12} 
follows the logic-based approach as well, but it does not rely on Datalog. It defines two sets of algebraic operators based on a multi-valued logic: one modelling target expressions, the other one defining policy combinations. These operators emphasise the role of missing attributes in policy evaluation, in a way similar to \facpl, but only partially address errors. In fact, combination operators are not defined on error values: it is rather assumed that all target functions are string equalities that never produce errors. Similarly to \facpl, PTaCL permits formalising the \emph{non-monotonicity} and \emph{safety} properties of attribute-based policies introduced in~\cite{TschantzK06}. The PTaCL extension reported in~\cite{CramptonW15} introduces obligations and their instantiation, but it still lacks error handling.

A similar study, but more focussed on the distinguishing features of \xacml, is reported in~\cite{RamliNN14}. It introduces a formalisation of \xacml\ in terms of multi-valued logics, by first considering 4-valued decisions and then 6-valued ones. Most of the \xacml\ combining algorithms are formalised as operators on a partially ordered set of decisions, while the algorithms $\firstAppO{}$ and $\onlyOneAppO{}$ are defined by case analysis. Differently from \facpl, this formalisation does not deal with missing attributes and obligations, which have instead a crucial role in \xacml\ policy evaluation.

Another logic-based language is presented in~\cite{ArkoudasCC14}. In this case, a policy is a list of constraint assertions that are evaluated by means of an SMT solver. The framework supports reasoning about different properties, but any high-level feature, e.g. combining algorithms, has to be encoded `by hand' into low-level assertions. In addition, missing attributes, erroneous values and obligations are not addressed. 

Multi-valued logics and the relative operators have also been exploited to model the behaviour of combining algorithms. For example, the \emph{Fine-Integration Algebra} introduced in~\cite{RaoLBLL09} models the strategies of \xacml\ combining algorithms by means of a set of 3-valued (i.e., $\permit$, $\deny$ and $\notApp$) binary operators. The behaviour of each algorithm is then defined in terms of the iterative application of the operators to the policies of the input sequence.
%The goal of the algebra is to generate a \xacml\ policy that integrates the conditions defined by the operators. 
This approach significantly differs from the \facpl\ one since it does not consider the $\indet$ decision. Instead,~\cite{LiWQBRLL09} explicitly introduces an error handling function that, given two decisions, determines whether their combination produces an error, i.e. an $\indet$ decision. Each (binary) operator is then defined  using such error function. The formalisation of \facpl\ combining algorithms follows a similar approach, but it also deals with obligations and instantiation strategies, which require different iterative applications of operators.

Moreover, in~\cite{LiWQBRLL09} nonlinear constraints are used for the specification of combining algorithms which return a decision $\dec$ if the majority of the input policies return $\dec$. Such algorithms are not usually dealt with in the literature and cannot be expressed in terms of iterative application of some binary operators.

\subsection{Analysis of Access Control Policies}
\label{sec:rel_analysis}

The increasing spread of policy-based specifications has prompted the development of many verification techniques like, e.g., property checking and behavioural characterisations. Such techniques have been implemented by means of different formalisms, ranging from SMT formulae to multi-terminal binary decision diagrams (MTBDD), including different kinds of logics. Hereafter we review the more relevant ones.

The works concerning policy analysis that are closer to our approach are of course those exploiting SMT formulae.
In~\cite{TurkmenHRZ15}, a strategy for representing \xacml\ policies in terms of SMT formulae is introduced. The representation, which is based on an informal semantics of \xacml, supports integers, booleans and reals, while the representation of sets of values and strings is only sketched. The combining algorithms are modelled as conjunctions and disjunctions of formulae representing the policies to be combined, \ie in a form similar to the approach shown in Appendix~\ref{sec:appendixA}. As a design choice, formulae corresponding to the $\notApp$ decision are not generated, because they can be inferred as the complementary of the other ones. Thus, in case of algorithms like $\denyUnlessO{}$, additional workload is required. Moreover, the representation assumes that each attribute name is assigned only to those values that match the implicit type of the attribute, hence the analysis cannot deal with missing attributes or erroneous values. Finally, it does not take into account obligations, which have instead an important role in the evaluation. The SMT-based framework of~\cite{ArkoudasCC14}, mentioned in Section~\ref{sec:pol_lang}, suffers from similar drawbacks.

The only analysis approach that takes missing attributes into account is presented in~\cite{CramptonMZ15}. The analysis is based on a notion of request extension, as we have done in Section~\ref{sec:analysis}. Differently from our approach, this analysis aims at quantifying the impact of possibly missing attributes on policy evaluations. 

The change-impact analysis of XACML policies presented in~\cite{FislerKMT05} aims at studying the consequences of policy modifications. In particular, to verify structural properties among policies by means of automatic tools, this approach relies on an MTBDD-based representation of policies. However, it cannot deal with many of the \xacml\ combining algorithms and, as outlined in~\cite{ArkoudasCC14}, an SMT-based approach like ours scales significantly better than the MTBDD one.

Datalog-based languages, like \eg ASL, only provide limited analysis functionalities, that are anyway significantly less performant than SMT-based approaches. In general, these languages are useful to reason on access control issues at an high abstraction level, but they neglect many of the advanced features of modern access control systems.

%For instance, ASL defines a set of predicates representing a posteriori effects of accesses, hence addressing the history of accesses. 
%This enables a simplified reasoning on history-dependent properties, \eg dynamic separation of duty. However, an effective analysis supporting history-dependent properties should enable reasoning not only on the effect of previous accesses on the system, but also on policy modifications that can possibly occur.

Description Logic (DL) is used in~\cite{KolovskiHP07} as a target formalism for representing a part of \xacml. The approach does not take into account many combining algorithms and the decisions $\notApp$ and $\indet$. Thus, it only permits reasoning on a set of properties significantly reduced with respect to that supported by our SMT-based approach. Furthermore, DL reasoners support the verification of structural properties of policies but suffer from the same scalability issues as the MTBDD-based reasoners.

Answer Set Programming (ASP) is used in~\cite{AhnHLM10,RamliNN12} for encoding \xacml\ and enabling verification of structural properties that are similar to the $\complete$ one defined in Section~\ref{sec:struct_prop}. This approach however suffers from some drawbacks due to the nature of ASP. In fact, differently from SMT, ASP does not support quantifiers and multiple theories like datatype and arithmetic. Some seminal extensions of ASP to ``Modulo Theories" have been proposed, but, to the best of our knowledge, no effective solver like Z3 is available. Similarly, the work in~\cite{HughesB08} exploits the SAT-based tool Alloy~\cite{Jackson:2002} to detect inconsistencies in \xacml\ policies. However, as outlined in~\cite{ArkoudasCC14} and~\cite{FislerKMT05}, Alloy is not able to manage even quite small policies and, more importantly, it cannot reason on arithmetic or any additional theory.

%Many other works deal with the analysis of access control policies. For instance, \cite{Bryans05} and \cite{ZhangRG05} exploit, respectively, the process algebra CSP~\cite{CSP} and the description language RW~\cite{Zhang:2004} to verify \xacml\ policies by means of model checking. 
%However, besides the use of different off-the-shelf tools, these approaches only address a limited part of \xacml, \eg neither all combing algorithms nor obligations, and suffer from scalability issues with respect to SMT-based tools.

Finally, it is worth noticing that various analysis approaches using SAT-based tools have been developed for the Ponder language, see e.g.~\cite{BandaraLR03}. These approaches, however, cannot actually be compared with ours due to the numerous differences among Ponder and \facpl. Furthermore, many other works deal with the analysis of access control policies by using, e.g., process algebra and model checking techniques. However, they consider only a limited part of access control policy aspects and suffer from scalability issues with respect to SMT-based tools.

In summary, all the approaches to the analysis of access control policies mentioned above are deficient in several respects when compared with ours. Those based on SMT formulae do not address relevant aspects like, e.g. missing attributes, while the other ones do not enjoy the benefits of using SMT, \ie support of multiple theories and scalable performance.

%To conclude this review, we mention an additional property of interest that we do not address in our analysis, that is policy \emph{redundancy}. In details, a policy is redundant if the behaviour of its enclosing policy remains unchanged when such policy is missing. This property has been object of intensive studies, in fact  identifying and `safely' eliminating redundant policies increase the evaluation performance of access control systems. A rigorous formalisation of redundancy is proposed in~\cite{GuarnieriNMM13}, where an algorithmic approach for minimising access control policies is proposed and its computational complexity is studied. We do not address this property because it requires a different approach, but we plan to exploit the proposed approach in the \facpl\ setting.\todo{Ros: quale proposed approach?}

%%%%Performance Tools 

\subsection{Supporting Tools: Performance and Functionalities}
\label{sec:performance}

The effectiveness of supporting tools is a crucial point for the usability of a policy language. Therefore, hereafter we examine the performance of both the \facpl\ Java library and the SMT-based automatic analysis, and the functionalities offered by the \facpl\ IDE.

Concerning the library, we conducted two different tests\footnote{Both tests were conducted on a MacBook Pro, 3.1 GHz Intel i7, 16~Gb RAM, running OS X Sierra.}: 
(i) a performance comparison with a state-of-the-art \xacml\ tool on the CONTINUE~\cite{Krishnamurthi03} case study (partially analysed in~\cite{FislerKMT05}); (ii) a performance stress test on a large set of randomly generated policies, thus to analyse the scalability of the library. We present below our test results, focussing on the most relevant ones; the suites of policies and requests, as well as all test results, are available at \url{http://facpl.sf.net/test/}. 

\begin{figure}[!h]
\centering
\vspace*{-.2cm}
%\hspace*{-.55cm}\includegraphics[scale=.5]{img/Balana}
\hspace*{-.55cm}\includegraphics[scale=.45]{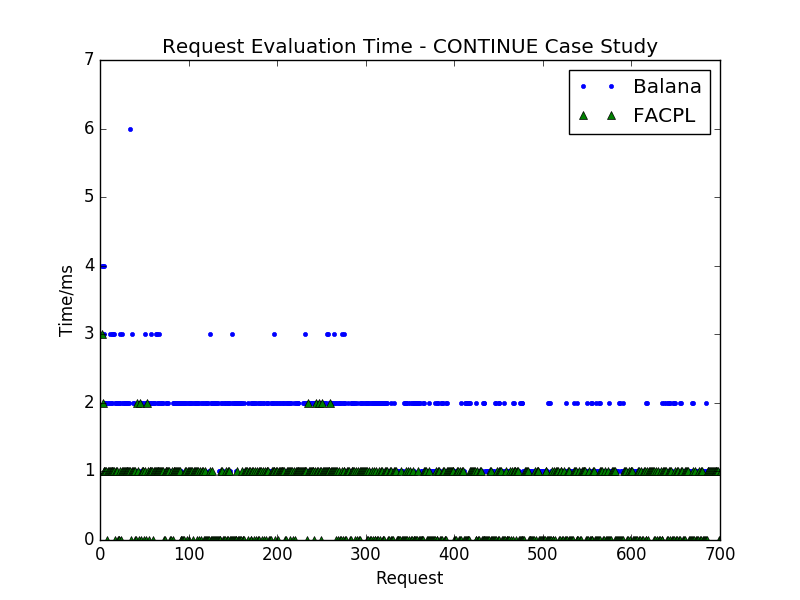}
\vspace*{-,4cm}
\caption{\facpl\ vs. Balana performance evaluation}
\label{fig:performance}
\end{figure}

The \xacml\ standard is by now the point-of-reference for industrial access control. In the authors' knowledge, the most up-to-date, freely available \xacml\ tool is Balana~\cite{Balana}. Differently from our framework that represents \facpl\ policies as Java classes, Balana manages \xacml\ policies directly in XML by exploiting a DOM representation of the XML files and evaluating \xacml\ requests through a visit of the DOM representing the policy. We have compared the evaluation of more than 1.500 requests and obtained the 
%request evaluation times 
results reported in Figure~\ref{fig:performance}; for the sake of readability only 700 requests are reported. The mean request evaluation time is 0.49ms for \facpl\ and 1.27ms for Balana: evaluating a Java class ensures higher performance than navigating the DOM. Additionally, Balana requires an initial set-up time of 770ms to create the DOM.
%However,  Balana requires a set-up time of about 780ms to initially validate \xacml\ policies, which brings the Balana mean request evaluation time up to 2.34ms; instead, \facpl\ initialises the Java classes while evaluating the first request. 

\begin{table*}[!b]
\begin{minipage}{0.45\textwidth}
\centering
$
\begin{tikzpicture}
\draw(-1.9,0) -- (-1.9,-2.2);
\node at (0,.2) {$\policy(d,w,a)$};
\draw(-2.6,0) -- (2.2,0);
%level1
\node at (-2.2,-.35) {$\textrm{\#}w^1$};
\node at (-.1, -.35) {$\policy^1_1 \ \  \ldots \ \  \policy^1_w$};
%level2
\node at (-2.2,-.8) {$\textrm{\#}w^2$};
\node at (-1, -.8) {$\policy^2_1  \ldots  \policy^2_w$};
\node at (-.105,-.8) {$\ldots$};
\node at (1, -.8) {$\policy^2_{w+1}  \ldots  \policy^2_{w^2}$};
%level
\node at (-2.2,-1.2) {$\vdots$};
\node at (-.9,-1.2) {$\vdots$};
\node at (-.105,-1.2) {$\vdots$};
\node at (.9,-1.2) {$\vdots$};
%level_d
\node at (-2.2,-1.8) {$\textrm{\#}w^d$};
\node at (-1.3, -1.8) {$\policy^d_1  \ldots  \policy^d_w$};
\node at (-.105,-1.8) {$\ldots$};
\node at (1.2, -1.8) {$\policy^d_{w^d -w+1}  \ldots  \policy^d_{w^d}$};
%sum
%\draw (-2.6,-2) -- (-1.9,-2);
%\node at (-2.6,-2.6) {${\footnotesize \sum\limits_{j =1}^d w^j}$};
\end{tikzpicture}
$
\end{minipage}
\begin{minipage}{0.45\textwidth}
\centering
\quad 
$
\begin{array}{|@{\,\,}c@{\,\,}||@{\,\,}c@{\,\,}|@{\,\,}c@{\,\,}|@{\,\,}c@{\,\,}|@{\,\,}c@{\,\,}|@{\,\,}c@{\,\,}|}
%[[1, 2, 3, 4, 5], [2, 6, 12, 20, 30], [3, 14, 39, 84, 155], [4, 30, 120, 340, 780], [5, 62, 363, 1364, 3905]]

\lfrac{d}{w} & 1 & 2 & 3 & 4 & 5 \\
 \hline
 \hline
1 & 1 &   2& 3 & 4 &5\\
2 & 2 &   6&  12 & 20& 30\\
3 & 3 & 14&   39& 84&155\\
4 & 4 & 30&   120& 340& 780 \\
5 & 5 & 62&   363& 1364& 3905 \\
\hline
\end{array}
$
\end{minipage}
\\[.1cm]
\begin{minipage}{0.45\textwidth}
\centering
(a)
\end{minipage}
\begin{minipage}{0.45\textwidth}
\centering
(b) 
\end{minipage}
\vspace*{-.2cm}
\caption{(a) Structure of each of policy $\policy(d,w,a)$, where $a$ is the number of occurring attributes names; (b) Total number of sub-policies for each combination of $d$ and $w$ }
\label{tab:testbed}
\end{table*}

The CONTINUE case study is by now used as a standard benchmark in the field of access control tools. However it is relatively small: it is made of 24 policies controlling 14 attributes. All policies are combined through the $\firstAppO{}$ algorithm thus, as soon as a policy applies, the evaluation stops. Therefore, for evaluating performance and scalability of the \facpl\ Java library, we have also considered a set of large randomly-generated policies. We generated the policies according to the following criteria: (i) a variable number of occurring \emph{attribute names} (i.e., 10, 100, 1.000 or 10.000); (ii) a variable policy \emph{depth} (i.e., from 1 to 5); (iii) a variable policy \emph{width} (i.e., from 1 to 5); (iv) only the $\all$ instantiation strategy is used (so to always require the evaluation of all the occurring policies). The combinations of the previous criteria give rise to a test-bed of 100 policies, formed by a distinct number of differently structured sub-policies featuring a different number of attribute names. More specifically, given a number of attribute names $a$, depth $d$ and width $w$, the policy $\policy(d,w,a)$ is generated according to the template in Table~\ref{tab:testbed}a. Namely, $d$ corresponds to the nesting levels in the policy hierarchy, while $w$ corresponds to the number of policies that each policy (set) in the hierarchy contains. The total number of sub-policies contained by every policy $\policy(d,w,a)$ is summarised in Table~\ref{tab:testbed}b. For each of the 25 generated policies, there are 4 different versions, one for each value that $a$ can take.

The generated test-bed has been used to perform the stress test on the \facpl\ library. The results, when $a$ is set to $10.000$, are summarised in Figure~\ref{fig:performance_stress}. The graphs show how the performance changes as a function of the policy structure, \ie depending on $d$ and $w$. Better performances are obtained by  structuring policies in terms of larger width values (marked by the blue square), rather than larger depth values (marked by the green triangle). Namely, the average evaluation time increases more significantly by increasing policy depth than width.

\begin{figure}[!t]
\centering
\vspace*{-.2cm}
%\hspace*{-.5cm}
\includegraphics[scale=.45]{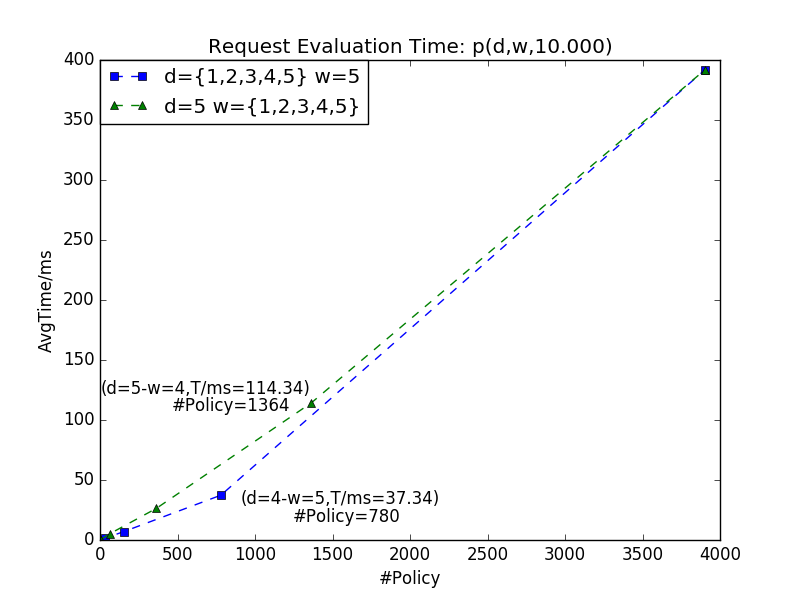}
\vspace*{-,4cm}
\caption{\facpl\ performance stress test (when $a=10.000$)} 
\label{fig:performance_stress}
\end{figure}

Concerning the automatic analysis, as previously pointed out, the tool closer to ours is that of~\cite{ArkoudasCC14}, which relies on the SMT solver Yices~\cite{Yices}. Differently from Z3, Yices does not support datatype theory, which is instead crucial to deal with a wide range of policy aspects, as \eg missing and erroneous attributes. To analyse the completeness of the CONTINUE policies, the Yices-based tool requires around 570ms, while our Z3-based tool requires around 120ms\footnote{The Yices value is taken directly from~\cite{ArkoudasCC14}, since the provided CONTINUE implementation only runs on Windows machines. 
Therefore, we ran this Z3 analysis on an older comparable hardware configuration (with the current configuration it takes only 60ms).}. 
\begin{figure}[!b]
\centering
\vspace*{-.2cm}
%\hspace*{-.3cm}
\includegraphics[scale=.45]{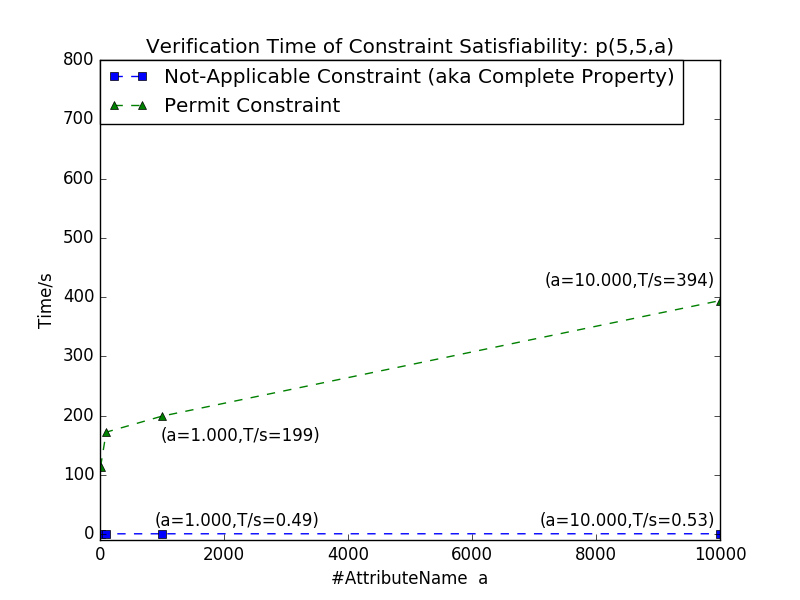}
%\vspace*{-,8cm}
\caption{\facpl\ analysis performance (when $d=w=5$)} 
\label{fig:analysis_perf}
\end{figure}
Notably, other tools not based on SMT, like e.g. Margrave, have significantly worse performance when policies scale. In fact, as reported in~\cite{ArkoudasCC14}, the increment of the number of possible values for the attributes occurring in the CONTINUE policies prevents Margrave to accomplish the analysis. On the contrary, SMT solvers can also deal with infinite sets of attribute values, as \eg integers. 
To further evaluate the analysis performance, we also report in Figure~\ref{fig:analysis_perf} the time required to verify the satisfiability of the $\notApp$ constraint (i.e., the verification of the $\complete$ property; marked by the blue square), and the $\permit$ constraint (marked by the green triangle) of the policies $\policy(5,5,a)$, i.e. 3905 policies,  with a varying number $a$ of attributes. 
Namely, the $\complete$ property is always verified in less than one second, despite the increasing number of attributes. Instead, the verification time for the satisfiability of the $\permit$ constraint increases by rising the number of attribute names, but the increments are significantly lower than the attribute name variations, i.e. from $199s$ with $1.000$ attribute names, to $394s$ with $10.000$ ones.
The difference between the two cases is due to the policy semantics: as soon as a policy target is $\notApp$, the whole policy is $\notApp$, while a policy evaluates to $\permit$ according to the combination strategies of the chosen combining algorithms.
It is finally worth noticing that the considered policies have a limited number of attribute names representing set values. In fact, a higher number of set attributes would require many satisfiability checks of existential quantifiers along with the 3905 policies (see the definition of the \texttt{inInt} function in Section~\ref{sec:contr_eval}). 
In such a case, the analysis remains feasible, i.e. hours instead of minutes, according to the cases. For example, it lasts two hours for the case of $100$ attribute names and $20$ set attributes, while it cannot complete for the case of $1.000$ attribute names and $200$ set ones.

%\begin{figure}[!h]
%\centering
%\vspace*{-.2cm}
%\hspace*{-.3cm}
%\includegraphics[scale=.48]{img/figure_permit_analysis}
%\vspace*{-,8cm}
%\caption{\facpl\ analysis performance} 
%\label{fig:analysis_perf}
%\end{figure}

We conclude by commenting on the most strictly related IDEs. To the best of our knowledge, the only freely available IDEs are the ALFA Eclipse plugin by Axiomatics (\url{http://www.axiomatics.com/alfa-plugin-for-eclipse.html}) and the graphical editor of the Balana-based framework (\url{http://xacmlinfo.org/category/xacml-editor/}). However, differently from our IDE, they only provide a high-level language for writing \xacml\ policies. Additionally, ALFA does not provide any request evaluation engine, since the Axiomatics one is a proprietary software.

% !TEX root = ./facpl_journal.tex

\section{Concluding Remarks and Future Work}
\label{sec:conclusions}

We have described a full-fledged framework for the specification, analysis and enforcement of access control policies. Our framework relies on \facpl\ and is built on top of solid formal foundations. The \facpl\ semantics provides a formalisation of complex access control features ---including obligations and missing attributes, which are instead overlooked by many other proposals--- and lays the basis for developing analysis techniques and tools. We have shown that \facpl\ policies can be represented in terms of specific SMT formulae, whose automatic evaluation permits verifying various authorisation and structural properties. We have demonstrated feasibility and effectiveness of our approach by means of a case study from the e-Health application domain, which is currently one of the most critical application domains of access control systems.
%, for the provision of exchanging services of medical data across EU countries. 
We have also shown that the use of SMT solvers provides us with stable and efficient tools, ensuring better performance than many other approaches from the literature. 

In a general perspective, our approach brings together the benefits deriving from using a high-level, mnemonic rule-based language with the rigorous means provided by denotational semantics and constraints. Most of all, the supporting tools we implemented allow access control system developers to use any of the formally-defined functionalities provided by our framework, without the need that they be familiar with formal methods. 

We conclude by first enlighting some distinguishing traits of \facpl\ (Section~\ref{sec:discuss}), then by pointing out some future research directions (Section~\ref{sec:future}).

\subsection{Discussion}
\label{sec:discuss}

We want here to recap and reflect on a few characteristics of \facpl\ (and its framework) and the design choices that underlie them.

\smallskip
\noindent 
\emph{Expressiveness}.\ 
The access control systems expressible by \facpl\ are those expressible by \xacml\ (but not dealing with XML raw data, see Section~\ref{sec:facplvsxacml}), with in addition the possibility of using consensus-based combining algorithms and the $\all$ instantiation strategy for obligations. \facpl\ access control systems are systematically more compact (see Table~\ref{tab:comparison}) and feature a smoother management of errors and missing attributes. This latter characteristic, together with the fact that we decided not to introduce any static check on the type of requests, permits to accurately deal with every access control request. 
%On the other hand, a different design approach based on a type system could have enabled static checks on policies to infer the type of any attribute name occurring within. Then, we could have retained the policy evaluations to only the requests complying with the expected type;  returning $\indet$ otherwise. However, by pursuing such a type system-based approach, we would have lost expressivity: policies could not automatically manage errors (due to unexpected attribute values) and possibly mask them according to different strategies.
Alternatively, we could have defined a type inference system in order to statically check the policies and infer the expected type of any attribute name occurring within. Then, we could have reserved evaluation for only those requests whose attribute names comply with their expected type, while we could have directly returned the $\indet$ decision for all the other requests. By pursuing such an approach, however, we would have lost expressiveness, since policies could not be able anymore to automatically manage errors due to unexpected attribute values and possibly mask them by using operators that combine, according to different strategies, $\indet$ decisions with the others.

Besides the definition of access controls, \facpl\ permits defining obligations, which are a key ingredient to enhance the expensiveness of access control systems. As exemplified in the definition of the e-Health case study (see Section~\ref{sec:facplEx}), the instantiation of obligations permits defining context-dependent actions to be enforced at run-time in the controlled system. Indeed, the side-effects of policy evaluation are not only the enforcement of access decisions, but also the enforcement of dynamically instantiated actions. 
%The obligations featured by \facpl\ 
%were proposed in~\cite{sloman}\todo{Ros: in che senso? non diciamo che \facpl\ astrae dalle obligation?}\ and their evaluation (carried out by the \pep) 
\facpl\ obligations permit enforcing, e.g., resource usage, adaptation and emergency handling strategies. For example,  
%the use of policy-based obligations to enforce self-adaptation strategies is highly advocated~\cite{HuebscherM08}; see, e.g., the application of Ponder in~\cite{SlomanL10} or of a preliminary version of \facpl\ in~\cite{MargheriPT13,WS-FM}.
in the application of the Ponder language of~\cite{SlomanL10} and in the preliminary version of \facpl\ of~\cite{MargheriPT13,WS-FM}, obligations are used to enforce self-adaptation strategies in autonomic computing systems. Instead, in the context of emergency handling, an obligation-based approach is proposed in \cite{BruckerP09,MarinovicDS14} to enforce the principle known as `break the glass', %~\cite{Break-glass}, 
which means that authorisation controls can be bypassed in case of emergency.

%Emergency handling concerns all those applications that have to fulfil strict requirements beyond merely data secrecy. For example, in healthcare systems the accomplishment of patient treatment has to be always guaranteed, according to the principle that `nothing can interfere with delivery of care'. This is an instance of a more general principle known as `break the glass'~\cite{Break-glass}, which means that authorisation controls can be bypassed in case of emergency.% for, e.g., saving a patient's life. 
%Many research efforts are being devoted to investigate these issues. For example,~\cite{ArdagnaVFGJS10} proposes an approach based on different policy spaces that are combined to model the additional behaviours needed in case of emergency. The approaches in~\cite{BruckerP09,MarinovicDS14} rely on a lattice ordering of policies, that are applied according to the lattice value corresponding to each specific emergency. Notably, all these proposals only require suitable ways of combining policies and specific attributes for capturing emergency. Therefore, we can use \facpl\ for implementing any of them. 
%%For instance, to manage emergencies within our e-Health case study, we can simply introduce an additional policy that bypasses, in case of emergency, the other ones.

We intentionally abstract from the actual syntax of obligations. They are simply intended to be actions executed at run-time.
% before concluding the enforcement process. 
From time to time, they can be chosen to more adequately express the access control system in hand. We also abstract from the actual semantics obligations. Indeed, the discharging of obligations done by the \pep\ simply refers to the fact that the system has taken charge of their execution, which is intended to finish by the conclusion of the \pep\ enforcement process. However, the possibility of enforcing some obligations after releasing the decision and granting the access is a topic worth to be studied
%, especially in connection with usage control~\cite{LazouskiMM10}. This is indeed one of the future research directions we want to pursue.
(it is indeed one of the future research directions we want to pursue).

%\modif{Cercare motivazioni sulle obligation anche da lavori che le trattano in maniera specifica, dire che sono una caratteristica essenziale per l'espressivita' del linguaggio, dire che sebbene intentionally non le specifichiamo la loro semantica influenza la valutazione}\todo{Vedi WS-FM + Ponder + ECS rules for adattivita}

\smallskip
\noindent
\emph{Validation}.\ Our framework has essentially three constituent elements: (i) the linguistic constructs together with their denotational semantics; (ii) the constraint formalism and the semantic-preserving translation; (iii) the Java-based supporting tools. For each of them we have presented different validation results, both theoretical and empirical. 

The linguistic constructs are validated with respect to their expressiveness. This is done, on the one hand, by modelling a real-world case study (Section~\ref{sec:facplEx}) from the e-Health application domain, on the other hand, by comparing (Section~\ref{sec:facplvsxacml}) \facpl\ with \xacml, \ie the state-of-the-art OASIS standard for attribute-based access control systems. \facpl\ formal semantics is validated according to the so-called \emph{reasonability properties} of~\cite{TschantzK06} that precisely characterise the expressiveness of a policy language. Besides these properties, we show that the semantics is well-defined (Theorem~\ref{theo:deterministic}) and precisely characterise the attributes that are relevant for policy evaluation (Lemma~\ref{lemma1}); this important result, as pointed out in Section~\ref{sec:towautverif}, underlies the automatic property verification. All the results are presented in Section~\ref{sec:prop_sem}, while their proofs are relegated to Appendix~\ref{app:formal_sem}. 

Similarly, the constraint formalism and the semantic-preserving translation of \facpl\ policies into SMT formulae are validated by the theoretical results presented in Section~\ref{sec:prop_transl} and proved in Appendix~\ref{sec:appendix4}. All together these results ensure that the approach to the analysis of \facpl\ policies presented in Section~\ref{sec:analysis} is sound. 

The software tools are validated by empirically examining their performance and functionalities. The obtained results are reported in Section~\ref{sec:performance}.

\smallskip
\noindent
\emph{Exploitation}.\ 
The \facpl\ framework is a production-level software that is also used in industry. Indeed, since its preliminary version, \facpl\ has been used by Tiani Spirit (\url{http://www.tiani-spirit.com/}) instead of \xacml\
%\todo{PER MASSI: devi controllare questa parte}\ 
to carry out design and automatic analysis of access control policies. In particular, the \facpl\ access control engine has been used as XACML reference implementation in several projects. 
%The work made in providing a formal semantics for XACML through FACPL (a preliminary version is presented in \cite{ESSOS12}) has been exploited to motivate the non repudiation building block (\url{http://wiki.ds.unipi.gr/display/ESENS/Whitepaper+-+Non+Repudiation}) of the EU project e-SENS. 
Furthermore, \facpl\ was used for team works in a PhD school on engineering Collective Autonomic Systems (\url{http://www.ascens-ist.eu/springschool}) and has been used in many bachelor and master thesis (further details can be found at the \facpl\ web-site). These practical exploitations have highlighted that its compact, mnemonic syntax requires very short learning time, even to undergraduate students. The users have also appreciated the flexibility of the IDE, which can be smoothly integrated within %any development environment, as \eg that used by Tiani Spirit.
other development environments.

\smallskip
\noindent 
\emph{Extendability}.\ The proposed framework offers a variegated set of constructs, ranging from expression operators to combining algorithms,  for defining access controls. Anyway, to better suit any need, as reported in Section~\ref{sec:tool}, both the Java library and the IDE can be easily extended with the introduction of, e.g., new expression operators. This approach supports writing customised \facpl\  specifications. These specifications can then be translated, in accordance with the user's definition of the added constructs, to Java and SMT-LIB code that can be still evaluated and analysed, respectively. The formal assurance of semantic preservation (Theorem~\ref{thr:constr_sem}) can be easily tailored for encompassing the user's extensions. For instance, in case of addition of new expression operators, it only requires devising a constraint operator (or a combination thereof) that faithfully represents the semantics of the new operator.

\subsection{Future Work}
\label{sec:future}

In the next future, we plan to address the issue of controlling the accesses while they are in progress. In this sort of `continuative' access control, the challenge is to ensure guarantees on how granted accesses are used. This model is usually referred to as Usage Control~\cite{LazouskiMM10} in the literature and has been recently studied by various researchers. 
%A quite established proposal is the UCON model~\cite{ParkS04}, which provides a formalisation based on attributes, reporting information on the accesses in progress, and obligations,  defining usage restrictions on ongoing or future accesses.
To deal with usage control, temporal aspects are of paramount importance, both to refer to ongoing accesses and to enforce obligations after releasing access decisions. To this aim, we will provide a \facpl -based solution for usage control that, by relying on the already available context-dependent authorisation process, can control ongoing accesses and instantiate temporal obligations. To actually enforce these obligations and, consequently, reason on them, we will refine the \pep\ semantics by appropriately instantiating the predicate $\Downarrow\!\x{ok}$ introduced in  Section~\ref{sec:sem_enfAlg}.

%Differently from other proposals,
We also plan to provide a formally-based analysis technique that system developers can exploit to verify, e.g., history-dependent properties like dynamic separation of duty. To this aim, besides formalising new history-dependent authorisation properties, we want to define and verify properties on conflicts and dependencies among obligations.

%continuous controls while accesses are in progress. This will require to provide a formal representation of access history and exploit it in our analysis approach. Moreover, we plan to study properties that take into account obligations. Specifically, we want to define properties on conflicts and dependencies among obligations and to devise appropriate analysis strategies. 

\bibliographystyle{plain}
\bibliography{biblio}

\begin{thebibliography}{10}

\bibitem{AhnHLM10}
G.~J. Ahn, H.~Hu, J.~Lee, and Y.~Meng.
\newblock Representing and reasoning about web access control policies.
\newblock In {\em COMPSAC}, pages 137--146. IEEE Computer Society, 2010.

\bibitem{ArkoudasCC14}
K.~Arkoudas, R.~Chadha, and C.{-}Y.~J. Chiang.
\newblock Sophisticated access control via {SMT} and logical frameworks.
\newblock {\em {ACM} Trans. Inf. Syst. Secur.}, 16(4):17, 2014.

\bibitem{BandaraLR03}
A.~K. Bandara, E.~Lupu, and A.~Russo.
\newblock Using event calculus to formalise policy specification and analysis.
\newblock In {\em POLICY}, page~26. IEEE, 2003.

\bibitem{CVC4}
C.~Barrett, C.~L. Conway, M.~Deters, L.~Hadarean, D.~Jovanovic, T.~King,
  A.~Reynolds, and C.~Tinelli.
\newblock {CVC4}.
\newblock In {\em Proc. of CAV}, volume 6806 of {\em LNCS}, pages 171--177.
  Springer, 2011.

\bibitem{XCREATE}
A.~Bertolino, S.~Daoudagh, F.~Lonetti, and E.~Marchetti.
\newblock {The X-CREATE Framework - A Comparison of XACML Policy Testing
  Strategies}.
\newblock In {\em WEBIST}, pages 155--160. SciTePress, 2012.

\bibitem{BruckerP09}
A.~D. Brucker and H.~Petritsch.
\newblock Extending access control models with break-glass.
\newblock In {\em {SACMAT}}, pages 197--206. {ACM}, 2009.

\bibitem{CeriGT89}
S.~Ceri, G.~Gottlob, and T.~Tanca.
\newblock What you always wanted to know about datalog (and never dared to
  ask).
\newblock {\em IEEE Trans. Knowl. Data Eng.}, 1(1):146--166, 1989.

\bibitem{CramptonM12}
J.~Crampton and C.~Morisset.
\newblock Ptacl: A language for attribute-based access control in open systems.
\newblock In P.~Degano and J.~D. Guttman, editors, {\em POST}, volume 7215 of
  {\em LNCS}, pages 390--409. Springer, 2012.

\bibitem{CramptonMZ15}
J.~Crampton, C.~Morisset, and N.~Zannone.
\newblock On missing attributes in access control: Non-deterministic and
  probabilistic attribute retrieval.
\newblock In {\em {SACMAT}}, pages 99--109. {ACM}, 2015.

\bibitem{CramptonW15}
J.~Crampton and C.~Williams.
\newblock Obligations in ptacl.
\newblock In Sara Foresti, editor, {\em STM}, volume 9331 of {\em LNCS}, pages
  220--235. Springer, 2015.

\bibitem{DamianouDLS01}
N.~Damianou, N.~Dulay, E.~Lupu, and M.~Sloman.
\newblock {The Ponder Policy Specification Language}.
\newblock In {\em POLICY}, LNCS 1995, pages 18--38. Springer, 2001.

\bibitem{MouraB08}
L.~M. {de Moura} and N.~Bj{\o}rner.
\newblock {Z3:} an efficient {SMT} solver.
\newblock In {\em {TACAS}}, volume 4963 of {\em LNCS}, pages 337--340.
  Springer, 2008.

\bibitem{MouraB11}
L.~M. {de Moura} and N.~Bj{\o}rner.
\newblock Satisfiability modulo theories: introduction and applications.
\newblock {\em Commun. {ACM}}, 54(9):69--77, 2011.

\bibitem{DeTreville:2002}
J.~DeTreville.
\newblock Binder, a logic-based security language.
\newblock In {\em Proceedings of the 2002 IEEE Symposium on Security and
  Privacy}, SP '02, pages 105--113, Washington, DC, USA, 2002. IEEE Computer
  Society.

\bibitem{Yices}
B.~Dutertre.
\newblock Yices 2.2.
\newblock In {\em Proc. of CAV}, volume 8559 of {\em LNCS}, pages 737--744.
  Springer, 2014.

\bibitem{DataProtectionDirective}
{European Parliament and Council}.
\newblock {Directive 95/46/EC}, 1995.
\newblock Official Journal L 281 , 23/11/1995 P. 0031 - 0050.
  \url{http://eur-lex.europa.eu/LexUriServ/LexUriServ.do?uri=CELEX:31995L0046:en:HTML}.

\bibitem{rbac}
D.~F. Ferraiolo and D.~R. Kuhn.
\newblock Role-based access control.
\newblock In {\em NIST-NCSC National Computer Security Conference}, pages
  554--563, 1992.

\bibitem{FislerKMT05}
K.~Fisler, S.~Krishnamurthi, L.A. Meyerovich, and M.C. Tschantz.
\newblock Verification and change-impact analysis of access-control policies.
\newblock In {\em ICSE}, pages 196--205. ACM, 2005.

\bibitem{Han2012477}
W.~Han and C.~Lei.
\newblock A survey on policy languages in network and security management.
\newblock {\em Computer Networks}, 56(1):477--489, 2012.

\bibitem{HashimotoKTT09}
M.~Hashimoto, M.~Kim, H.~Tsuji, and H.~Tanaka.
\newblock Policy description language for dynamic access control models.
\newblock In {\em DASC}, pages 37--42. IEEE, 2009.

\bibitem{HuKF15}
V.~C. Hu, D.~R. Kuhn, and D.~F. Ferraiolo.
\newblock Attribute-based access control.
\newblock {\em {IEEE} Computer}, 48(2):85--88, 2015.

\bibitem{HughesB08}
G.~Hughes and T.~Bultan.
\newblock Automated verification of access control policies using a sat solver.
\newblock {\em STTT}, 10(6):503--520, 2008.

\bibitem{Jackson:2002}
D.~Jackson.
\newblock Alloy: a lightweight object modelling notation.
\newblock {\em {ACM} Trans. Softw. Eng. Methodol.}, 11(2):256--290, 2002.

\bibitem{Jajodia97alogical}
S.~Jajodia, P.~Samarati, and V.~S. Subrahmanian.
\newblock A logical language for expressing authorizations.
\newblock In {\em Symposium On Security And Privacy}, pages 31--42. IEEE, 1997.

\bibitem{JinKS12}
X.~Jin, R.~Krishnan, and R.~S. Sandhu.
\newblock A unified attribute-based access control model covering dac, {MAC}
  and {RBAC}.
\newblock In {\em DBSec}, pages 41--55. Springer, 2012.

\bibitem{KolovskiHP07}
V.~Kolovski, J.~A. Hendler, and B.~Parsia.
\newblock Analyzing web access control policies.
\newblock In {\em WWW}, pages 677--686. ACM, 2007.

\bibitem{Kovac14}
M.~Kovac.
\newblock E-health demystified: An e-government showcase.
\newblock {\em {IEEE} Computer}, 47(10):34--42, 2014.

\bibitem{Krishnamurthi03}
S.~Krishnamurthi.
\newblock The {CONTINUE} server (or, how {I} administered {PADL} 2002 and
  2003).
\newblock In {\em {PADL}}, volume 2562 of {\em LNCS}, pages 2--16. Springer,
  2003.

\bibitem{Lampson74}
B.~W. Lampson.
\newblock Protection.
\newblock {\em Operating Systems Review}, 8(1):18--24, 1974.

\bibitem{LazouskiMM10}
A.~Lazouski, F.~Martinelli, and P.~Mori.
\newblock Usage control in computer security: {A} survey.
\newblock {\em Computer Science Review}, 4(2):81--99, 2010.

\bibitem{LiWQBRLL09}
N.~Li, Q.~Wang, W.~H. Qardaji, E.~Bertino, P.~Rao, J.~Lobo, and D.~Lin.
\newblock Access control policy combining: theory meets practice.
\newblock In {\em SACMAT}, pages 135--144. {ACM}, 2009.

\bibitem{WS-FM}
A.~Margheri, M.~Masi, R.~Pugliese, and F.~Tiezzi.
\newblock {Developing and Enforcing Policies for Access Control, Resource
  Usage, and Adaptation. A Practical Approach}.
\newblock In {\em WSFM}, volume 8379 of {\em LNCS}, pages 85--105. Springer,
  2013.

\bibitem{MargheriPT13}
A.~Margheri, R.~Pugliese, and F.~Tiezzi.
\newblock Linguistic abstractions for programming and policing autonomic
  computing systems.
\newblock In {\em UIC/ATC}, pages 404--409. IEEE, 2013.

\bibitem{WWV15}
A.~Margheri, R.~Pugliese, and F.~Tiezzi.
\newblock {On Properties of Policy-Based Specifications}.
\newblock In {\em WWV}, volume 188 of {\em {EPTCS}}, pages 33--50, 2015.

\bibitem{MarinovicDS14}
S.~Marinovic, N.~Dulay, and M.~Sloman.
\newblock Rumpole: An introspective break-glass access control language.
\newblock {\em {ACM} TISSEC.}, 17(1):2:1--2:32, 2014.

\bibitem{ESSOS}
M.~Masi, R.~Pugliese, and F.~Tiezzi.
\newblock {Formalisation and Implementation of the XACML Access Control
  Mechanism}.
\newblock In {\em ESSoS}, LNCS 7159, pages 60--74. Springer, 2012.

\bibitem{nistsurvey}
NIST.
\newblock A survey of access control models, 2009.
\newblock
  \url{http://csrc.nist.gov/news_events/privilege-management-workshop/PvM-Model-Survey-Aug26-2009.pdf}.

\bibitem{XACML3}
{OASIS XACML TC}.
\newblock {eXtensible Access Control Markup Language (XACML) version 3.0 },
  January 2013.
\newblock
  \url{https://www.oasis-open.org/committees/tc_home.php?wg_abbrev=xacml}.

\bibitem{RamliNN14}
C.~D. P.~K. Ramli, Nielson~H. Riis, and F.~Nielson.
\newblock The logic of {XACML}.
\newblock {\em Sci. Comput. Program.}, 83:80--105, 2014.

\bibitem{RamliNN12}
C.~D. P.~K. Ramli, H.~Riis~Nielson, and F.~Nielson.
\newblock Xacml 3.0 in answer set programming.
\newblock In {\em LOPSTR}, volume 7844 of {\em LNCS}, pages 89--105. Springer,
  2012.

\bibitem{RaoLBLL09}
P.~Rao, D.~Lin, E.~Bertino, N.~Li, and J.~Lobo.
\newblock An algebra for fine-grained integration of {XACML} policies.
\newblock In {\em {SACMAT}}, pages 63--72. {ACM}, 2009.

\bibitem{sloman}
M.~Sloman.
\newblock Policy driven management for distributed systems.
\newblock {\em J. Network Syst. Manage.}, 2(4):333--360, 1994.

\bibitem{SlomanL10}
Morris Sloman and Emil~C. Lupu.
\newblock Engineering policy-based ubiquitous systems.
\newblock {\em Comput. J.}, 53(7):1113--1127, 2010.

\bibitem{Art29}
{The Article 29 Data Protection WP}, 2013.
\newblock \url{http://ec.europa.eu/justice/data-protection/article-29/}.

\bibitem{TschantzK06}
M.~C. Tschantz and S.~Krishnamurthi.
\newblock Towards reasonability properties for access-control policy languages.
\newblock In {\em SACMAT}, pages 160--169. ACM, 2006.

\bibitem{TurkmenHRZ15}
F.~Turkmen, J.~den Hartog, S.~Ranise, and N.~Zannone.
\newblock Analysis of {XACML} policies with {SMT}.
\newblock In {\em {POST}}, volume 9036 of {\em LNCS}, pages 115--134. Springer,
  2015.

\bibitem{TwidleDLS09}
K.~P. Twidle, N.~Dulay, E.~Lupu, and M.~Sloman.
\newblock Ponder2: A policy system for autonomous pervasive environments.
\newblock In {\em ICAS}, pages 330--335. IEEE, 2009.

\bibitem{Balana}
WSO2.
\newblock {Balana: Open source XACML implementation}, 2015.
\newblock \url{https://github.com/wso2/balana}.

\bibitem{rfc2753}
R.~Yavatkar, D.~Pendarakis, and R.~Guerin.
\newblock {A Framework for Policy-based Admission Control}.
\newblock RFC 3060 (Proposed Standard), 2000.

\end{thebibliography}

\newpage 
%\mbox{}
%\newpage

\appendix
% !TEX root =  ./facpl_journal.tex

\begin{table*}[!p]
\centering
$
\footnotesize
\begin{array}{c}
\mbox{}\\[.4cm]
    \begin{array}{l||c|c|c|c|}
    \algOpAlg{\permitOverO{}} 
    	    & \langle \permit\ \ \mathit{FO}_2 \rangle & \langle \deny\ \ \mathit{FO}_2 \rangle & \notApp & \indet \\[.08cm]
                \hline\hline
                \langle \permit\ \ \mathit{FO}_1 \rangle  & \langle \permit\ \ \mathit{FO}_1\concat\mathit{FO}_2 \rangle & \langle \permit\ \ \mathit{FO}_1 \rangle & \langle \permit\ \ \mathit{FO}_1 \rangle & \langle \permit\ \ \mathit{FO}_1 \rangle \\
                \langle \deny\ \ \mathit{FO}_1 \rangle & \langle \permit\ \ \mathit{FO}_2 \rangle & \langle \deny\ \ \mathit{FO}_1\concat\mathit{FO}_2 \rangle & \langle \deny\ \ \mathit{FO}_1 \rangle & \indet \\
                \notApp & \langle \permit\ \ \mathit{FO}_2 \rangle & \langle \deny\ \ \mathit{FO}_2 \rangle & \notApp & \indet \\
                \indet & \langle \permit\ \ \mathit{FO}_2 \rangle & \indet & \indet & \indet \\
                \hline
    \end{array}
\\
\mbox{}\\[.4cm]
    \begin{array}{l||c|c|c|c|}
    \algOpAlg{\denyOverO{}} 
    & \langle \permit\ \ \mathit{FO}_2 \rangle  & \langle \deny\ \ \mathit{FO}_2 \rangle & \notApp & \indet \\
            \hline\hline
            \langle \permit\ \ \mathit{FO}_1 \rangle  & \langle \permit\ \ \mathit{FO}_1\concat\mathit{FO}_2 \rangle &  \langle \deny\ \ \mathit{FO}_2 \rangle  & \langle \permit\ \ \mathit{FO}_1\rangle  & \indet \\
            \langle \deny\ \ \mathit{FO}_1 \rangle  & \langle \deny\ \ \mathit{FO}_1 \rangle & \langle \deny\ \ \mathit{FO}_1\concat\mathit{FO}_2 \rangle & \langle \deny\ \ \mathit{FO}_1 \rangle & \langle \deny\ \ \mathit{FO}_1 \rangle \\
            \notApp &  \langle \permit\ \ \mathit{FO}_2 \rangle & \langle \deny\ \ \mathit{FO}_2 \rangle & \notApp & \indet\\
            \indet &  \indet & \langle \deny\ \ \mathit{FO}_2 \rangle & \indet & \indet \\
            \hline
	\end{array}
\\
\mbox{}\\[.4cm]		
  \begin{array}{l||c|c|c|c|}
    \algOpAlg{\denyUnlessO{}} 
    	    & \langle \permit\ \ \mathit{FO}_2 \rangle & \langle \deny\ \ \mathit{FO}_2 \rangle & \notApp & \indet \\[.08cm]
                \hline\hline
                \langle \permit\ \ \mathit{FO}_1 \rangle  & \langle \permit\ \ \mathit{FO}_1\concat\mathit{FO}_2 \rangle & \langle \permit\ \ \mathit{FO}_1 \rangle & \langle \permit\ \ \mathit{FO}_1 \rangle & \langle \permit\ \ \mathit{FO}_1 \rangle \\
                \langle \deny\ \ \mathit{FO}_1 \rangle & \langle \permit\ \ \mathit{FO}_2 \rangle & \langle \deny\ \ \mathit{FO}_1\concat\mathit{FO}_2 \rangle & \langle \deny\ \ \mathit{FO}_1 \rangle & \langle \deny\ \  \mathit{FO}_1 \rangle \\
                \notApp & \langle \permit\ \ \mathit{FO}_2 \rangle & \langle \deny\ \ \mathit{FO}_2 \rangle &  \langle \deny\ \  \epsilon \rangle &  \langle \deny\ \  \epsilon \rangle \\
                \indet & \langle \permit\ \ \mathit{FO}_2 \rangle & \langle \deny\ \  \mathit{FO}_2 \rangle & \langle \deny\ \  \epsilon \rangle &  \langle \deny\ \  \epsilon \rangle \\
                \hline
    \end{array}
\\
\mbox{}\\[.4cm]
    \begin{array}{l||c|c|c|c|}
    \algOpAlg{\permitUnlessO{}} 
    & \langle \permit\ \ \mathit{FO}_2 \rangle  & \langle \deny\ \ \mathit{FO}_2 \rangle & \notApp & \indet \\
            \hline\hline
            \langle \permit\ \ \mathit{FO}_1 \rangle  & \langle \permit\ \ \mathit{FO}_1\concat\mathit{FO}_2 \rangle &  \langle \deny\ \ \mathit{FO}_2 \rangle  & \langle \permit\ \ \mathit{FO}_1\rangle  & \langle \permit\ \ \mathit{FO}_1\rangle \\
            \langle \deny\ \ \mathit{FO}_1 \rangle  & \langle \deny\ \ \mathit{FO}_1 \rangle & \langle \deny\ \ \mathit{FO}_1\concat\mathit{FO}_2 \rangle & \langle \deny\ \ \mathit{FO}_1 \rangle & \langle \deny\ \ \mathit{FO}_1 \rangle \\
            \notApp &  \langle \permit\ \ \mathit{FO}_2 \rangle & \langle \deny\ \ \mathit{FO}_2 \rangle &  \langle \permit\ \ \epsilon \rangle  &  \langle \permit\ \ \epsilon \rangle \\
            \indet &   \langle \permit\ \ \mathit{FO}_2 \rangle  & \langle \deny\ \ \mathit{FO}_2 \rangle &  \langle \permit\ \ \epsilon \rangle  &  \langle \permit\ \ \epsilon \rangle  \\
            \hline
	\end{array}
\\
\mbox{}\\[.4cm]
    \begin{array}{l||c|c|c|c|}
    \algOpAlg{\firstAppO{}}  
    & \langle \permit\ \ \mathit{FO}_2 \rangle  & \langle \deny\ \ \mathit{FO}_2 \rangle & \notApp & \indet \\
           \hline\hline
            \langle \permit\ \  \mathit{FO}_1 \rangle &  \langle \permit\ \  \mathit{FO}_1 \rangle &   \langle \permit\ \  \mathit{FO}_1 \rangle  &   \langle \permit\ \  \mathit{FO}_1 \rangle  &    \langle \permit\ \  \mathit{FO}_1 \rangle   \\
             \langle \deny\ \  \mathit{FO}_1 \rangle  &  \langle \deny\ \  \mathit{FO}_1 \rangle &  \langle \deny\ \  \mathit{FO}_1 \rangle &  \langle \deny\ \  \mathit{FO}_1 \rangle &   \langle \deny\ \  \mathit{FO}_1 \rangle  \\
            \notApp &  \langle \permit\ \  \mathit{FO}_2 \rangle &  \langle \deny\ \  \mathit{FO}_2 \rangle & \notApp & \indet \\
            \indet & \indet & \indet & \indet & \indet \\
            \hline
    \end{array}
\\ 
\mbox{}\\[.4cm]
    \begin{array}{l||c|c|c|c|}
    \algOpAlg{\onlyOneAppO{}} 
    & \langle \permit\ \ \mathit{FO}_2 \rangle  & \langle \deny\ \ \mathit{FO}_2 \rangle & \notApp & \indet \\
           \hline\hline
            \langle \permit\ \  \mathit{FO}_1 \rangle &  \indet &   \indet  &   \langle \permit\ \  \mathit{FO}_1 \rangle  & \indet  \\
             \langle \deny\ \  \mathit{FO}_1 \rangle  &  \indet & \indet &  \langle \deny\ \  \mathit{FO}_1 \rangle & \indet \\
            \notApp &  \langle \permit\ \  \mathit{FO}_2 \rangle &  \langle \deny\ \  \mathit{FO}_2 \rangle & \notApp & \indet \\
            \indet & \indet & \indet & \indet & \indet \\
            \hline
    \end{array}
\\ 
\mbox{}\\[.4cm]
    \begin{array}{l||c|c|c|c|}
    \algOpAlg{\weakConO{}}  
    & \langle \permit\ \ \mathit{FO}_2 \rangle  & \langle \deny\ \ \mathit{FO}_2 \rangle & \notApp & \indet \\
           \hline\hline
            \langle \permit\ \  \mathit{FO}_1 \rangle &  \langle \permit\ \ \mathit{FO}_1\concat\mathit{FO}_2 \rangle &   \indet  &   \langle \permit\ \  \mathit{FO}_1 \rangle  & \indet \\
             \langle \deny\ \  \mathit{FO}_1 \rangle  &  \indet & \langle \deny\ \ \mathit{FO}_1\concat\mathit{FO}_2 \rangle &  \langle \deny\ \  \mathit{FO}_1 \rangle &  \indet\\
            \notApp &  \langle \permit\ \  \mathit{FO}_2 \rangle &  \langle \deny\ \  \mathit{FO}_2 \rangle & \notApp & \indet \\
            \indet & \indet & \indet & \indet & \indet \\
            \hline
    \end{array}
\\ 
\mbox{}\\[.4cm]
   \begin{array}{l||c|c|c|c|}
    \algOpAlg{\strongConO{}}  
    & \langle \permit\ \ \mathit{FO}_2 \rangle  & \langle \deny\ \ \mathit{FO}_2 \rangle & \notApp & \indet \\
           \hline\hline
            \langle \permit\ \  \mathit{FO}_1 \rangle &  \langle \permit\ \ \mathit{FO}_1\concat\mathit{FO}_2 \rangle &   \indet  &   \indet  &   \indet  \\
             \langle \deny\ \  \mathit{FO}_1 \rangle  &  \indet & \langle \deny\ \ \mathit{FO}_1\concat\mathit{FO}_2 \rangle & \indet & \indet \\
            \notApp & \indet &  \indet & \notApp & \indet \\
            \indet & \indet & \indet & \indet & \indet \\
            \hline
    \end{array}
\\[1cm]
\end{array}
$
\caption{Combination matrices for the binary operators $\algOpAlg{\alg{}}$}
\label{tab:all_combmatrices}
\end{table*}

\section{Definitions for Combining Algorithms}
\label{sec:appendixA}

%%%%%%%%%%%%%%%%%%%%%%%%%%%%%%%%%%%%%%%

%%\section{Constraints Combinations for combining operators}
%%\label{sec:appendixB}
%
%
%%\begin{table}[!h]
%%\tbl{Combination of two policy constraint tuples $A$ and $B$ according to $\alg{}$ algorithms}{

In this section we report all the definitions regarding the combining algorithms. Table~\ref{tab:all_combmatrices} shows all the combination matrices defining the binary operators $\algOpAlg{\alg{}}$ for each algorithm $\alg{}$. Hereafter we report the constraint resulting from the combination of two constraint tuples, say $A$ and $B$, defined according to the various combining algorithms.
$$
\footnotesize
\begin{array}{@{}l@{}}
    	 \permitOverO{}(A,  B)  =  \\[.1cm]
	\ \ 
	 \begin{array}{@{}l@{}}  
	  \langle\, 
	 	\permit : A \proj{p} \vee\ B \proj{p} \\
    	\ \ \deny: (A \proj{d} \wedge\ B \proj{d}) \vee\ (A \proj{d} \wedge\ B \proj{n}) \vee\ (A \proj{n} \wedge\ B \proj{d}) \\
    	\ \ \notApp: A \proj{n} \wedge\ B \proj{n} \\
    	\ \ \indet : (A \proj{i} \wedge\ \lnot B \proj{p}) \vee\ (\lnot A \proj{p} \wedge\ B \proj{i})  \rangle 
	\end{array}
\\
\mbox{}\\[-.2cm]
    	 \denyOverO{}(A, B) = \\[.1cm]
	\ \ 
	 \begin{array}{@{}l@{}}  
	   \langle\, 
	 	\permit : (A \proj{p} \wedge\ B \proj{p}) \vee\ (A \proj{p} \wedge\ B \proj{n}) \vee (A \proj{n} \wedge\ B \proj{p})\\
    	\ \ \deny:   A \proj{d} \vee\  B \proj{d} \\
    	\ \ \notApp: A \proj{n} \wedge\ B \proj{n} \\
    	\ \ \indet : (A \proj{i} \wedge\ \lnot B \proj{d})  \vee\ (\lnot A \proj{d} \wedge\ B \proj{i})  \rangle 
	\end{array}
\\
\mbox{}\\[-.2cm]
    	 \denyUnlessO{}(A, B) =  \\[.1cm]
	\ \
	 \begin{array}{@{}l@{}}  
	  \langle\, 
	 	\permit : A \proj{p} \vee\  B \proj{p} \\
    	\ \ \deny:  \lnot A \proj{p} \wedge\ \lnot B \proj{p} \wedge\ (A \proj{d} \vee\ A \proj{n} \vee\ A \proj{i}) \\
	\qquad\qquad\quad
	\wedge (B \proj{d} \vee\ B \proj{n} \vee\ B \proj{i}) \\
    	\ \ \notApp: \false \\
    	\ \ \indet : \false  \rangle 
	\end{array}
\\
\mbox{}\\[-.2cm]
    	 \permitUnlessO{}(A, B) = \\[.1cm]
	\ \
	 \begin{array}{@{}l@{}}  
	 \langle\, 
	\permit :  \lnot A \proj{d} \wedge\ \lnot B \proj{d} \wedge\ (A \proj{p} \vee\ A \proj{n} \vee\ A \proj{i}) \\
		\qquad\qquad\quad
	\wedge (B \proj{p} \vee\ B \proj{n} \vee\ B \proj{i}) \\
    	\ \ \deny:  A \proj{d} \vee\  B \proj{d}\\
    	\ \ \notApp: \false \\
    	\ \ \indet : \false  \rangle 
	\end{array}
\\
\mbox{}\\[-.2cm]
    	 \firstAppO{}(A, B) = \\[.1cm]
	\ \ 
	 \begin{array}{@{}l@{}}  
	 \langle\, 
	\permit :  A\proj{p} \vee\ (B \proj{p} \wedge\ A \proj{n}) \\
    	\ \ \deny: A\proj{d} \vee\ (B \proj{d} \wedge\ A \proj{n}) \\
    	\ \ \notApp: A\proj{n} \wedge\ B\proj{n} \\
    	\ \ \indet : A \proj{i} \vee\ (A\proj{n} \wedge\ B\proj{i})  \rangle 
	\end{array}
\\
\mbox{}\\[-.2cm]
    	 \onlyOneAppO{}(A, B) =  \\[.1cm]
	\ \ 
	 \begin{array}{@{}l@{}}  
	 	 \langle\, 
	\permit :  (A\proj{p} \wedge\ B \proj{n}) \vee\ (A \proj{n} \wedge\ B \proj{p}) \\
    	\ \ \deny: (A\proj{d} \wedge\ B \proj{n}) \vee\ (A \proj{n} \wedge\ B \proj{d}) \\
    	\ \ \notApp: A\proj{n} \wedge\ B\proj{n} \\
    	\ \ \indet : A \proj{i} \vee\ B\proj{i} \vee\ ((A\proj{p} \vee\ A\proj{d}) \wedge\ (B\proj{p} \vee\ B\proj{d}))  \rangle 
	\end{array}
\\
\mbox{}\\[-.2cm]
    	 \weakConO{}(A, B) \\[.1cm]
	\ \ 
	 \begin{array}{@{}l@{}}  
	  \langle\, 
	\permit :  (A\proj{p} \wedge\ B \proj{p}) \vee\ (A \proj{p} \wedge\ \lnot B \proj{d}) \vee\ (\lnot A \proj{d} \wedge\  B \proj{p})\\ 
    	\ \ \deny: (A\proj{d} \wedge\ B \proj{d}) \vee\ (A \proj{d} \wedge\ \lnot B \proj{p}) \vee\ (\lnot A \proj{p} \wedge\  B \proj{d}) \\
    	\ \ \notApp: A\proj{n} \wedge\ B\proj{n} \\
    	\ \ \indet : (A\proj{p} \wedge\ B \proj{d}) \vee\ (A\proj{d} \wedge\ B\proj{p}) \vee\ A \proj{i} \vee\ B \proj{i} 
	\end{array}
\\
\mbox{}\\[-.2cm]
    	 \strongConO{}(A, B) = \\[.1cm]
	\ \ 
	 \begin{array}{@{}l@{}}  
	 \langle\, 
	\permit :  A\proj{p} \wedge\ B \proj{p}\\ 
    	\ \ \deny: A\proj{d} \wedge\ B \proj{d} \\
    	\ \ \notApp: A\proj{n} \wedge\ B\proj{n} \\
    	\ \ \indet : A\proj{i} \vee\ B\proj{i} \vee\ (A\proj{n} \wedge\ \lnot B\proj{n}) \vee\ (\lnot A\proj{n} \wedge\ B\proj{n}) \\
	 \qquad\qquad\ \vee\ (A\proj{p} \wedge\ B\proj{d}) \vee\ (A\proj{d} \wedge\ B\proj{p})\rangle \\
	\end{array}
\end{array}
$$

\section{Proofs of the Results}
\label{sec:appendix3}

Some of the proofs proceed by induction on the \emph{depth} of policies, \ie the number of their nesting levels, which is defined by induction on the syntax of policies as follows
$$
\begin{array}{@{}l}
\mathit{depth}(\ruleOpt{\effect\ \ \x{target:} \, \expr\ \ \x{obl:} \, \ob^{*} \,}) = 0\\
\mathit{depth}(\\
\{ \algSyntax\ \ \x{target:} \, \expr\ \ \x{policies:} \, \policy^{+}\, {\x{\oblp:} \, \ob_{p}^{*}\ \ \x{\obld:} \, \ob_{d}^{*}} \, \})= \\
\hfill1 + \mathit{max}( \{ \mathit{depth}(p) \mid \policy \in \policy^{+} \} )
\end{array}
$$
Policies with depth 0 are rules, the other ones are policies containing other policies. Notationally, we will use $\policy^i$ to mean that policy $\policy$ has depth $i$ and $(\policy^+)^i$ to mean that at least a policy in the sequence $\policy^+$ has depth $i$ and the others have depth at most $i$.

\subsection{Proofs of Results in Section~5}
\label{app:formal_sem}

\medskip

\textsc{Theorem}~\ref{theo:deterministic} (Total and Deterministic Semantics).
%For all $pas \in \mathit{PAS}$, $\rSyntax\in\mathit{Request}$ and $\dec,\dec' \in\mathit{Decision}$, it holds that
%\pasSem{pas,\rSyntax} = \dec \ \ \wedge\ \ \pasSem{pas,\rSyntax} = \dec' \ \ \Rightarrow\ \ \dec=\dec' 
\begin{enumerate}
\item For all $pas \in \mathit{PAS}$ and $\rSyntax\in\mathit{Request}$, there exists a $\dec \in\mathit{Decision}$, such that $\pasSem{pas,\rSyntax} = \dec$. 
\smallskip
\item For all $pas \in \mathit{PAS}$, $\rSyntax\in\mathit{Request}$ and $\dec,\dec' \in\mathit{Decision}$, it holds that\\[.1cm]
$
\begin{array}{l}
\quad \pasSem{pas,\rSyntax} = \dec \ \ \wedge\ \ \pasSem{pas,\rSyntax} = \dec'\\ 
\quad \Rightarrow\ \ \dec=\dec'\,.
\end{array}
$ 
\end{enumerate}
\begin{proof}
%From the theorem statements it follows, respectively, that the function $\denSemF{P}as$ must be \emph{total}, i.e. defined for all the elements of its domain, and \emph{deterministic}, i.e. given a pair of inputs the calculated output is always the same. Indeed, the proof reduces to showing that $\denSemF{P}as$ is a \emph{total deterministic function}, \ie it uniquely associates a decision to every input pair $(pas,\rSyntax)$.
The goal of the proof is to show that $\denSemF{P}as$ is a total and deterministic function, \ie 
%it uniquely associates a decision to each input of the form $(pas,\rSyntax)$. 
it is defined for all possible input pairs and always returns the same decision any time it is applied to a specific pair.
If we let $pas$ be $\{ \,  \x{pep:} \, \enfAlg\ \ \x{pdp:}\, \pdpSyntax \, \}$
then, from the clause~(\ref{sem:pas}), we have that
$$
\begin{array}{l}
\pasSem{\{ \,  \x{pep:} \, \enfAlg\ \ \x{pdp:}\, \pdpSyntax \, \},\rSyntax} = \\
\qquad\qquad\qquad\qquad
\pepSem{\enfAlg}{(\pdpSem{\pdpSyntax}{(\reqSem{\rSyntax}{})})}
\end{array}
$$
Thus, since  
%$\denSemF{R}$, $\denSemF{P}dp$ and $\denSemF{E}A$ are deterministically composed and that the composition of total functions is a total function, 
function composition preserves totality and determinism, we are left to prove that $\denSemF{R}$, $\denSemF{P}dp$ and $\denSemF{E}A$ are total and deterministic functions. Due to their inductive definition (given in Section~\ref{sec:formal_sem}), the proof proceeds by inspecting their defining clauses with the aim of checking that they satisfy the two requirements below
\begin{enumerate}
\item[R1:] there is one, and only one, clause that applies to each syntactic domain element (this usually follows since the definition is syntax-driven and considers all the syntactic forms that the input can assume);
\item[R2:] for each defining clause,
\begin{itemize}
\item the conditions in the right hand side are mutually exclusive (from the systematic use of the $\mathtt{otherwise}$ condition, it directly follows that they cover all the possible cases for the syntactic domain elements of the form occurring in the left hand side),
\item the values assigned in each case of the right hand side are obtained by only using 
%total functions and/or total and deterministic operators/predicates.
total and deterministic functions/operators/predicates.
\end{itemize}
\end{enumerate}
\smallskip
\begin{description}
\item [Case $\denSemF{R}$.] 
From its defining clauses~(\ref{sem:req}) we get that $\denSemF{R}$ is defined on all non-empty sequences of attributes, \ie all requests. Moreover, the conditions in the right hand side of each clause are mutually exclusive and the operator $\Cup$ is total and deterministic by definition. Thus R1 and R2 hold, which means that $\denSemF{R}$ is a total and deterministic function. 
\smallskip
\item [Case $\denSemF{P}dp$.] 
To prove this case, we first prove that $\denSemF{E}$, $\denSemF{O}$, $\denSemF{A}$ and $\denSemF{P}$ are total functions. 
\begin{description}
\item [Case $\denSemF{E}$.] 
By an easy inspection of the clauses defining $\denSemF{E}$, an excerpt of which are in Table~\ref{tab:sem_expression}, it is not hard to believe that they satisfy R1 (since the application of the clauses is driven by the syntactic form of the input expression) and R2 above, hence $\denSemF{E}$ is a total function. Moreover, since the operator $\concat$ is total and deterministic, from the clauses~(\ref{sem:exp_con}) it follows that $\denSemF{E}$ remains a total and deterministic function also when extended to sequences of expressions.
\smallskip
\item [Case $\denSemF{O}$.] 
Since $\denSemF{E}$ is a total and deterministic function also on sequences of expressions, from the clauses~(\ref{sem:obl}) and (\ref{sem:obl2}) it follows that R1 and R2 hold, thus $\denSemF{O}$ is a total and deterministic function both on single obligations and on sequences of obligations.
\smallskip
\item [Cases $\denSemF{A}$ and $\denSemF{P}$.] 
The definitions of $\denSemF{P}$ and $\denSemF{A}$ are syntax-driven and consider all the syntactic forms that the input can assume, thus R1 is satisfied. Now, since $\denSemF{P}$ and $\denSemF{A}$ are mutually recursive, we prove by induction on the depth of their arguments that their defining clauses satisfy R2 for all input policies.
\smallskip
\begin{description}
\item [Base Case ($i=0$).] 
Let us start from $\denSemF{P}$. $\policy^0$ is of the form $(\effect\ \x{target}: \expr\ \x{obl}: \ob^*)$. We have hence to prove that the clause~(\ref{sem:rule}), which is the defining clause of $\denSemF{P}$ that applies to $\policy^0$, satisfies R2. This directly follows from the fact that $\denSemF{E}$ and $\denSemF{O}$ are total and deterministic functions.
%, as well as it is by definition the function corresponding to notation $\subobeff{\ob^*}{\effect}$. 
Now, let us consider $\denSemF{A}$ and proceed by case analysis on $\algSyntax$.
\begin{description}
\item [$(a = \alg{\all}$ for any $\alg{})$.] 
Since the clause~(\ref{sem:rule}) satisfies (R1 and) R2, for each $\policy_j^0$ in $(\policy^+)^0$, $\policySem{\policy_j^0}{\req}$ is uniquely defined. Thus, since each operator $\algOp$ is total and deterministic by construction, the clause~(\ref{sem:algA}), to be used since the form of $a$, satisfies R2 (when all the input policies have depth 0).
\smallskip
\item [$(a = \alg{\greedy}$ for any $\alg{})$.]
This case is similar to the previous one, but involves the clause~(\ref{sem:algB}). It satisfies R2 (when all the input policies have depth 0) since its conditions in the right hand side are mutually exclusive by construction (indeed, each predicate $\isFinalPred{\x{alg}}$ and each operator $\algOp$ is total and deterministic).
\smallskip
\end{description}
\smallskip
\item [Inductive Case ($i = n +1$).] 
Let us start from $\denSemF{P}$. $\policy^{n+1}$ is of the form $\{ \algSyntax\ \, \x{target\!:} \expr\ \, \x{policies\!:} (\policy^+)^{n}$ ${\x{\oblp:} \, \ob_{p}^{*}\ \ \x{\obld:} \, \ob_{d}^{*}}  \} $. By the induction hypothesis, for any $\req$, $\algSyntax$ and $\policy^{k}_j$ in $(\policy^+)^{n}$, with $k\leq n$, the clauses defining $\denSemF{P}$ and $\denSemF{A}$ satisfy (R1 and) R2, that is $\policySem{\policy^{k}_j}{\req}$ and $\algSem{\algSyntax, (\policy^+)^{n}}{\req}$ are uniquely defined. Hence, the clause~(\ref{sem:pol}), to be used since the form of $p^{n+1}$, satisfies R2 as well. For $\denSemF{A}$, we can reason like in the base case by exploiting the induction hypothesis. We can thus conclude that both the clauses~(\ref{sem:algA}) and~(\ref{sem:algB}) satisfy R2 (for any input policy).
\end{description}
\smallskip
Therefore, $\denSemF{P}$ and $\denSemF{A}$ are total and deterministic functions.
\end{description}
\smallskip
Now, that $\denSemF{P}dp$ is a total and deterministic function directly follows from its defining clause~(\ref{sem:pdp}).
\smallskip
\item [Case $\denSemF{E}A$.]  
The requirement R1 is satisfied by definition. Moreover, since the predicate $\Downarrow \x{ok}$ is total and deterministic, the same holds for the function $\pepSemR{\ }$. Therefore, also R2 is satisfied by each defining clause (the conditions on $\pdpRes.\dec$ are trivially mutually exclusive). Hence, $\denSemF{E}A$ is a total and deterministic function.
\end{description}
\vspace*{-.5cm}
\end{proof}

\medskip

\noindent
\textsc{Lemma}~\ref{lemma1} (Policy relevant attributes).
For all $\policy \in \mathit{Policy}$ and $\req,\req' \in R$ such that $\req(\name) = \req'(\name)$ for all $\name \in \mathit{Names}(\policy)$ it holds that 
$\policySem{\policy}{\req} = \policySem{\policy}{\req'}$.
\begin{proof}
The statement is based on an analogous result concerning expressions
\[
\begin{array}{l}
\mbox{for all $\expr \in \mathit{Expr}$ and $\req_1,\req_1' \in R$ such that }\\
\mbox{$\req_1(\name) = \req_1'(\name)$ for all $\name \in \mathit{Names}(\expr)$,}\\
\mbox{it holds that $\exprSem{\expr}{\req_1} = \exprSem{\expr}{\req_1'}$}
\end{array}
\hspace*{-.3cm}
\tag{R}
\label{res:aux}
\]
which can be easily proven by structural induction on the syntax of expressions. Functions $\req_1$ and $\req_1'$ are only exploited in the base case when evaluating a name $\name \in \mathit{Names}(\expr)$ for which, by definition and hypothesis, we have $\exprSem{\name}{\req_1} = \req_1(\name) = \req_1'(\name) = \exprSem{\name}{\req_1'}$.
Since for any $\expr$ occurring in $\policy$, we have that $\mathit{Names}(\expr) \subseteq \mathit{Names}(\policy)$, from~(\ref{res:aux}), by taking $r_1=r$ and $r'_1=r'$, it follows that
\[
\begin{array}{@{}l}
\mbox{for all $\expr$ occurring in $\policy$, $\exprSem{\expr}{\req} = \exprSem{\expr}{\req'}$}
\end{array}
%\hspace*{-.5cm}
\tag{R-E}
\label{res:expr}
\]
From~(\ref{res:expr}), it also immediately follows that 
\[
\begin{array}{l}
\mbox{for all $\ob$ occurring in $\policy$, $\oblSem{\ob}{\req} = \oblSem{\ob}{\req'}$}
\end{array}
\tag{R-O}
\label{res:obl}
\]
Now we can prove the main statement by induction on the depth $i$ of $\policy$.
\begin{description}
\item [Base Case ($i=0$).] 
$\policy^0$ has the form $(\effect\ \x{target}: \expr\ \x{obl}: \ob^*)$, thus the clause~(\ref{sem:rule}) is used to determine $\policySem{\policy}{\req}$. The thesis then trivially follows from~(\ref{res:expr}) and~(\ref{res:obl}). 
\smallskip
\item [Inductive Case ($i = n +1$).] 
$\policy^{n+1}$ is of the form $\{ \algSyntax\ \ \x{target:} \, \expr\ \x{policies:} \, (\policy^+)^{n}\, {\x{\oblp:} \, \ob_{p}^{*}\ \ \x{\obld:} \, \ob_{d}^{*}} \}$, thus the clause~(\ref{sem:pol}) is used to determine $\policySem{\policy}{\req}$. By the induction hypothesis, for any $\policy^{k}_j$ in $(\policy^+)^{n}$, with $k\leq n$, it holds that $\policySem{\policy^{k}_j}{\req} = \policySem{\policy^{k}_j}{\req'}$. This, due to the clauses~(\ref{sem:algA}) and~(\ref{sem:algB}), implies that $\algSem{a, (\policy^+)^{n}}{\req} = \algSem{a, (\policy^+)^{n}}{\req'}$, for any algorithm $a$. The thesis then follows from this fact and  from~(\ref{res:expr}) and~(\ref{res:obl}).
\end{description}
\vspace*{-.5cm}
\end{proof}

\subsection{Proofs of results in Section~6}
\label{sec:appendix4}

%%%%%%%%%%%%%%%%%%%%%
%%Constraint Partition
%%%%%%%%%%%%%%%%%%%%%
%\noindent
%\textbf{Theorem}~\ref{theo:partition} [Constraint-based partition]
%Let $\langle \permit: \const_1\ \  \deny: \const_2\ \ \notApp: \const_3\ \  \indet: \const_4\ \rangle$ be 
%a policy constraint tuple. For any request $\req \in R$ it holds that
%$$
%\exists! k\in\{1,\ldots,4\}\ : \  \cs{\const_k} = \true 
%\ \wedge\ 
%\textstyle{\bigwedge_{j \in\{1,\ldots,4\}\backslash\{k\} }} \cs{\const_j} = \false
%$$
%
%\begin{proof}
%
%
%\end{proof}

%%%%%%%%%%%%%%%%%%%%%%
%Constraint Total Function
%%%%%%%%%%%%%%%%%%%%%%
\noindent
\textsc{Theorem}~\ref{thr:constr_fun} (Total and Deterministic Constraint Semantics).
\begin{enumerate}
\item For all $\const \in \mathit{Constr}$ and $\req \in R$, there exists an $\mathit{el} \in (\mathit{Value}\ \cup\ 2^{\mathit{Value}}\cup \{\err, \excpt \})$, such that $\cs{\const} = \mathit{el}$. 
\smallskip
\item For all $\const \in \mathit{Constr}$, $\req \in R$ and $\mathit{el}, \mathit{el}' \in (\mathit{Value}\ \cup\ 2^{\mathit{Value}}\cup \{\err, \excpt \})$, it holds that%\\[.1cm]
$$
\cs{\const} = \mathit{el} \ \  \wedge \ \ \cs{\const} = \mathit{el}' \ \ \Rightarrow \ \ \mathit{el} = \mathit{el}'\,.
$$
\end{enumerate}
\begin{proof}
Similarly to the proof of Theorem~\ref{theo:deterministic}, the proof reduces to showing that $\denSemF{C}$ is a total and deterministic function. We proceed by structural induction on the syntax of $\const$.

\begin{description}
%\item [$\const = v$.] As $\cs{v} = v$, the thesis immediately follows.
%\smallskip
%
%\item [$\const = n$.] As $\cs{n} = \req(n)$ and $\req$ is a total function, the thesis immediately follows.
%\smallskip
%
\item [Base Case.] If $\const=v$, the thesis immediately follows since $\cs{v} = v$; otherwise, \ie $\const=n$, we have $\cs{n} = \req(n)$ and the thesis follows because $\req$ is a total and deterministic function.
\smallskip

%\item [$\const = \x{pred}(\const_1)$ with $\x{pred} \in \{\mathtt{isMiss},\mathtt{isErr}, \mathtt{isBool}\}$.] 
%The thesis follows by the induction hypothesis on $\const_1$ since the corresponding clause of $\denSemF{C}$ is such that the conditions on the right hand side are mutually exclusive and cover all the possible cases.
%\smallskip
%
%\item [$\const = \lnot \ \const_1$.] Likewise the case $\const = \x{pred}(\const_1)$.
%\smallskip
%
%\item [$\const = \fnot \ \const_1$.] Likewise the case $\const = \x{pred}(\const_1)$.
%\smallskip
%
%\item [$\const = \const_1\ \opC\ \const_2$.] 
%As the corresponding clause of $\denSemF{C}$ is defined in terms of $\cs{\const_1}$ and $\cs{\const_2}$, and the conditions on the right hand side are mutually exclusive and cover all the necessary cases, the thesis immediately follows by the induction hypothesis on $\const_1$ and $\const_2$. 
%
\item [Inductive Case.] It is not hard to believe that all the defining clauses of $\denSemF{C}$ are such that the conditions in the right hand side are mutually exclusive and cover all the necessary cases. For each different form that $\const$ can assume, the thesis then directly follows by the induction hypothesis.
\end{description}
\vspace*{-.5cm}
\end{proof}

\medskip

The proof of Theorem~\ref{thr:constr_sem} relies on the following three auxiliary results. 

\begin{lemma}
\label{lemma:expr}
For all $\expr \in \mathit{Expr}$ and $\req \in R$, it holds that
$$
\exprSem{\expr}{\req} = \cs{\translExpr{\expr}}
$$
\end{lemma}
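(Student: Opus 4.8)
The plan is to proceed by structural induction on the syntax of the expression $\expr$, exploiting the fact that the translation function $\transFunct{E}$ (clause~(\ref{cstr:expr})) acts as the identity on atoms and as a homomorphism on operators, together with the observation already made after Table~\ref{tab:constr_sem}: the constraint semantics of every operator lying in the image of $\transFunct{E}$ was deliberately designed to mirror the semantics of the corresponding FACPL expression operator. By Theorem~\ref{theo:deterministic} and Theorem~\ref{thr:constr_fun} both sides of the claimed equality denote a single well-defined element, so it suffices to verify that they coincide by inspecting the defining clauses pairwise.

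First I would dispatch the base cases. If $\expr = \val$, then $\translExpr{\val} = \val$, hence $\cs{\translExpr{\val}} = \cs{\val} = \val = \exprSem{\val}{\req}$, using the value clauses of $\constrSem$ (Table~\ref{tab:constr_sem}) and of $\denSemF{E}$ (Table~\ref{tab:sem_expression}). If $\expr = \name$, then $\translExpr{\name} = \name$, hence $\cs{\translExpr{\name}} = \cs{\name} = \req(\name) = \exprSem{\name}{\req}$ by the name clauses of the two semantics.

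For the inductive step I would treat each operator form in turn. For $\x{not}(\expr')$, the translation gives $\fnot\,\translExpr{\expr'}$; the induction hypothesis yields $\exprSem{\expr'}{\req} = \cs{\translExpr{\expr'}}$, and comparing the four-case clause for $\x{not}$ in Table~\ref{tab:sem_expression} with the four-case clause for $\fnot$ in Table~\ref{tab:constr_sem} shows that, since the single argument evaluates to the same element on both sides, the two results agree. The binary cases are analogous: $\mathtt{getCop}$ sends $\x{and}$, $\x{or}$, $\x{equal}$, $\x{in}$, $\x{greater}\textrm{-}\x{than}$, $\x{add}$ (and the remaining arithmetic operators) to $\fand$, $\for$, $=$, $\in$, $>$, $+$, and so on, and for each such pair the defining clause of the constraint operator is a verbatim transcription of the defining clause of the expression operator. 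In particular the $\excpt$/$\err$ masking behaviour of $\fand$ and $\for$ coincides with that of $\x{and}$ and $\x{or}$, and the discipline ``$\err$ takes precedence over $\excpt$'' of the comparison and arithmetic clauses coincides with that of their expression counterparts. Applying the induction hypothesis to the subexpressions $\expr_1$ and $\expr_2$ makes the guard conditions of the two clauses literally identical, whence the returned values agree.

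The argument is essentially bookkeeping, and the main (though mild) obstacle is twofold. First, one must notice that the usual two-valued boolean constraint operators $\lnot$, $\wedge$, $\vee$ never arise in this lemma, since they lie outside the image of $\transFunct{E}$: they are introduced only later, in the translation of targets, obligations and combining algorithms (clauses~(\ref{cstr:obl})--(\ref{cstr:pol})). Hence the induction needs to cover only the $4$-valued operators and the comparison/arithmetic ones. Second, one must confirm that every clause of Table~\ref{tab:constr_sem} that was omitted there (namely those for $>$, $\in$, $-$, $\ast$, $/$) is genuinely defined as the faithful analogue of the corresponding expression clause, exactly as asserted in the accompanying text; once this is granted, the matching for each such operator is immediate and the induction closes.
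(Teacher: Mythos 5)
Your proof is correct and follows essentially the same route as the paper's: structural induction on the expression syntax, with the base cases for names and values handled by the identity behaviour of $\transFunct{E}$ and the inductive cases closed by the clause-by-clause correspondence between each expression operator and its constraint counterpart (via $\mathtt{getCop}$). Your extra remarks on the two-valued boolean operators lying outside the image of $\transFunct{E}$ and on the omitted clauses are sound bookkeeping but do not change the argument.
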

\begin{proof}
We proceed by structural induction on the syntax of $\expr$ according to the translation rules of the clause~(\ref{cstr:expr}).
\begin{description}
\item [$(\expr = \name)$.] 
Since $\translExpr{\name} = \name$, the thesis follows because $\exprSem{\name}{\req} = \req(\name) = \cs{\name}$.
\smallskip
\item [$(\expr = \val)$.] 
Since $\translExpr{\val} = \val$, the thesis follows because $\exprSem{\val}{\req} = \val = \cs{\val}$.
\smallskip
\item [$(\expr = \x{not}(\expr_1))$.] 
Since $\translExpr{\expr} = \fnot\: \translExpr{\expr_1}$ and, by the induction hypothesis, $\exprSem{\expr_1}{\req} = \cs{\translExpr{\expr_1}}$, the thesis follows due to the correspondence of the semantic clause of the operator $\fnot$ in Table~\ref{tab:constr_sem} and that of the operator $\x{not}$ in Table~\ref{tab:sem_expression}.
\item [$(\expr = \exprOperator (\expr_1, \expr_2))$.] 
Since $\translExpr{\expr} = \translExpr{\expr_1}$ $\x{getOp}(\exprOperator)\ \translExpr{\expr_2}$ and, by the induction hypothesis, $\exprSem{\expr_1}{\req} = \cs{\translExpr{\expr_1}}$ and  $\exprSem{\expr_2}{\req} = \cs{\translExpr{\expr_2}}$, the thesis follows due to the correspondence of the semantic clause of the expression operator $\exprOperator$ in Table~\ref{tab:constr_sem} and that of the constraint operator $\x{getOp}(\exprOperator)$ in Table~\ref{tab:sem_expression}.
\end{description}
\vspace*{-.5cm}
\end{proof}

\medskip

\begin{lemma}
\label{lemma:obl}
For all $\ob \in \mathit{Obligation}$ and $\req \in R$ it holds that
$$
\oblSem{\ob}{\req} = \fo \ \ \Leftrightarrow \ \ \cs{\translObl{\ob}} = \true
$$
and
$$
\oblSem{\ob}{\req} = \err \ \ \Leftrightarrow \ \ \cs{\translObl{\ob}} = \false
$$
\end{lemma}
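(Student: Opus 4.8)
The plan is to fix an arbitrary obligation $\ob = [\obType\ \pepAction(\expr^*)]$ and a request $\req \in R$, and to reduce both equivalences to a single structural condition on the expression arguments: that every $\expr$ occurring in $\expr^*$ evaluates, under $\req$, to a genuine value, i.e. neither $\excpt$ nor $\err$. Since the translation~(\ref{cstr:obl}) of a single obligation is the flat $n$-ary conjunction $\bigwedge_{\expr \in \expr^*} \lnot\isBot{\translExpr{\expr}} \wedge \lnot\isErr{\translExpr{\expr}}$, no induction on obligation structure is needed; I would reason directly about this conjunction over the finitely many arguments and then obtain the second equivalence as the logical complement of the first.

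First I would handle the semantic side. Unfolding clause~(\ref{sem:obl}), $\oblSem{\ob}{\req} = \fo$ for some instantiated obligation $\fo$ holds exactly when $\exprSem{\expr^*}{\req} = \extVal^*$ is a sequence of literal values (or sets of values). By the strictness of the concatenation operator $\concat$ on $\err$ and $\excpt$ (clauses~(\ref{sem:exp_con})), this is equivalent to saying that no $\expr \in \expr^*$ evaluates to $\excpt$ or $\err$, which is precisely the structural condition above. The edge case $\expr^* = \epsilon$ is immediate, since $\exprSem{\epsilon}{\req} = \epsilon$ is already a (empty) sequence of values, so the obligation is instantiated.

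Next I would analyse the constraint side. By Theorem~\ref{thr:constr_fun} the value of $\translObl{\ob}$ under $\req$ is always defined, and since $\translObl{\ob}$ is built solely from the usual boolean operators $\wedge$ and $\lnot$ applied to the predicates $\isBot{}$ and $\isErr{}$, this value is always a boolean. Using the clauses of Table~\ref{tab:constr_sem}, $\cs{\translObl{\ob}} = \true$ iff every conjunct is $\true$, i.e. iff for each $\expr \in \expr^*$ both $\cs{\isBot{\translExpr{\expr}}} = \false$ and $\cs{\isErr{\translExpr{\expr}}} = \false$; by definition of $\isBot{}$ and $\isErr{}$ this amounts to $\cs{\translExpr{\expr}} \notin \{\excpt, \err\}$. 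Invoking Lemma~\ref{lemma:expr} to rewrite $\cs{\translExpr{\expr}}$ as $\exprSem{\expr}{\req}$, this reduces exactly to the structural condition, and the empty-conjunction case gives $\translObl{\ob} = \true$, matching the corresponding semantic case. Chaining the two reductions yields the first equivalence. For the second, I would appeal to exhaustiveness: $\oblSem{\ob}{\req} \in \{\fo, \err\}$ by totality and determinism of $\denSemF{O}$ (established in the proof of Theorem~\ref{theo:deterministic}), while $\cs{\translObl{\ob}} \in \{\true, \false\}$ as just noted; hence $\oblSem{\ob}{\req} = \err$ iff $\oblSem{\ob}{\req} \neq \fo$ iff $\cs{\translObl{\ob}} \neq \true$ iff $\cs{\translObl{\ob}} = \false$.

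I expect the only delicate point to be the careful bookkeeping of the special values: ensuring that the strictness of $\concat$ on \emph{both} $\err$ and $\excpt$ corresponds precisely to the conjunction of the two negated predicates $\lnot\isBot{}$ and $\lnot\isErr{}$, and that the $n$-ary conjunction's convention of returning $\true$ on the empty family of arguments lines up with $\exprSem{\epsilon}{\req} = \epsilon$. Everything else is routine case analysis on the semantic clauses together with a direct application of Lemma~\ref{lemma:expr}.
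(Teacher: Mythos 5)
Your proof is correct and follows essentially the same route as the paper's: both reduce the claim, via Lemma~\ref{lemma:expr} and the strictness of $\concat$ on $\err$ and $\excpt$, to the condition that no argument expression evaluates to $\excpt$ or $\err$, and then read off the truth value of the conjunction of the $\lnot\isBot{}$ and $\lnot\isErr{}$ predicates. The only organizational difference is that you obtain the second equivalence by exhaustiveness (totality and determinism of $\denSemF{O}$ together with the fact that the translated constraint is always boolean), whereas the paper argues the $\err$ case by a direct, symmetric case analysis and declares the converse direction specular; both are sound, and your complementation step in fact makes explicit why the paper's ``specular'' claim goes through.
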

\begin{proof}
We only prove the ($\Rightarrow$) implication as the proof for the other direction proceeds in a specular way. Let $\ob\ = [\obType\ \mathit{pepAct}(\expr^*)]$
with $\expr^* = \expr_1 \ldots \expr_n$. By the clause~(\ref{cstr:obl}), it is translated into the constraint 
\begin{center}
$
c\ =\ 
%\translObl{\ob} = 
%\bigwedge_{\expr_j \in \ob.\mathit{Exprs}} \lnot\isBot{\translExpr{\expr_j}} \wedge \lnot\isErr{\translExpr{\expr_j}} 
\bigwedge_{\expr_j \in \expr^*} \lnot\isBot{\translExpr{\expr_j}} \wedge \lnot\isErr{\translExpr{\expr_j}} 
$
\end{center} 
%where $\ob.\mathit{Exprs} = \{ \expr_1,  \ldots , \expr_n\}$. 
We now proceed by case analysis on $\oblSem{\ob}{\req}$.
\begin{description}
\item [$(\oblSem{\ob}{\req} = \fo)$.] 
We have to prove that $\cs{c} = \true$. By the definition of $\denSemF{C}$, $\cs{c} = \true $ corresponds to 
$$
\begin{array}{l}
\forall j \in \{1,\ldots,n\}\  : \\
\qquad
\cs{\lnot\ \isBot{\translExpr{\expr_j}}} = \true\ \\
\qquad\quad 
\wedge\ \cs{\lnot\ \isErr{\translExpr{\expr_j}}}= \true 
\end{array}
$$
According to the constraint semantics of $\lnot$, $\mathtt{isMiss}$ and $\mathtt{isErr}$, this corresponds to
 $$
 \begin{array}{l}
\forall j \in \{1,\ldots,n\}\  : \\
\qquad
\cs{\translExpr{\expr_j}} \neq \excpt\\
\qquad \ \wedge\ \ \cs{\translExpr{\expr_j}} \neq \err
\end{array}
$$ 
By the hypothesis $\oblSem{\ob}{\req} = \fo$ and the clauses~(\ref{sem:obl}) and (\ref{sem:exp_con}), we have
$$
\exprSem{\expr^*}{\req} = \exprSem{\expr_1}{\req}\ \concat\: \ldots\ \concat\: \exprSem{\expr_n}{\req}  = \extVal_1\: \ldots \extVal_n
$$ 
where $\extVal_j$ stands for a literal value or a set of values. Thus, by Lemma~\ref{lemma:expr}, 
we get that 
$$\begin{array}{l}
\forall j \in \{1,\ldots,n\}\ : \\
\qquad  \cs{\translExpr{\expr_j}} = \extVal_j \not\in \{\excpt, \err\}
\end{array}
$$ 
which proves the thesis.
\smallskip

\item [$(\oblSem{\ob}{\req} = \err)$.] 
We have to prove that $\cs{c} = \false$. By the definition of $\denSemF{C}$, $\cs{c} = \false$ corresponds to 
$$
\begin{array}{l}
\exists j \in \{1,\ldots,n\}\  :\\
\qquad
\cs{\lnot\ \isBot{\translExpr{\expr_j}}} = \false\\
\qquad\quad
 \vee\ \cs{\lnot\ \isErr{\translExpr{\expr_j}}}= \false 
\end{array}
$$
According to the constraint semantics of $\lnot$, $\mathtt{isMiss}$ and $\mathtt{isErr}$, this corresponds to
$$
\begin{array}{@{}l}
\exists j \in \{1,\ldots,n\}\  :\\
\quad
\cs{\translExpr{\expr_j}} = \excpt\ \vee\ \ \cs{\translExpr{\expr_j}} = \err
\end{array}
$$ 
By the hypothesis $\oblSem{\ob}{\req} = \err$ and the clauses~(\ref{sem:obl}) and (\ref{sem:exp_con}), we have
$$
\begin{array}{l}
\exprSem{\expr^*}{\req} = \exprSem{\expr_1}{\req}\ \concat\: \ldots\ \concat\ \exprSem{\expr_n}{\req}  \neq \extVal^* \\
\qquad \Rightarrow\ \exists j \in \{1, \ldots,n\} : \exprSem{\expr_j}{\req} \in \{\excpt,\err\}
\end{array}
$$ 
Thus, by Lemma~\ref{lemma:expr}, we obtain that 
$$
\exists j \in \{1, \ldots,n\} \ : \ \cs{\translExpr{\expr_j}} \in \{\excpt,\err\}
$$ 
which proves the thesis.
\end{description}
\vspace*{-.5cm}
\end{proof}

\medskip

\begin{lemma}
\label{lemma:alg}
For all $\alg{\all} \in Alg$, $\req \in R$ and policies $\policy_1,\ldots,\policy_s \in \mathit{Policy}$ such that $\forall\,i\in\{1,\ldots,s\}\ : \ \policySem{\policy_i}{\req} = \langle \dec_i \ \fo_i^* \rangle \ \Leftrightarrow \ \cs{\translPol{\policy_i}\proj{\dec_i}} =  \true\,$, it holds that 
$$
\begin{array}{c}
\algSem{\alg{\all}, \policy_1\  \ldots\ \policy_s}{\req} = \langle \dec \ \fo^* \rangle \ \Leftrightarrow\ \\[.1cm] 
\cs{\translAlg{\alg{\all}, \policy_1\ \ldots \ \policy_s}\proj{\dec}} = \true
\end{array}
$$
\end{lemma}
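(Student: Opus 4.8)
The plan is to prove the statement by induction on the number $s$ of combined policies, reducing the inductive step to a correspondence between the binary semantic operator $\algOp$ (defined by the combination matrices of Tables~\ref{tab:auxAlg} and~\ref{tab:all_combmatrices}) and the binary constraint operator $\alg{}$ (defined in Appendix~\ref{sec:appendixA}). Before starting, I would record a consequence of the hypothesis that makes the constraint formulas easy to evaluate. By totality and determinism of the policy semantics (Theorem~\ref{theo:deterministic}) each $\policySem{\policy_i}{\req}$ has a \emph{unique} decision, say $\dec_i$; since the hypothesised biconditional holds for every one of the four decisions, and since the tuple components are boolean constraints (they are built only from $\lnot$, $\wedge$, $\vee$, as observed after clause~(\ref{cstr:pol})) whose semantics is total and deterministic (Theorem~\ref{thr:constr_fun}), it follows that $\translPol{\policy_i}$ has exactly the component labelled $\dec_i$ equal to $\true$ and the other three equal to $\false$. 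I shall refer to this as the \emph{single-true-component} property.

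For the base case $s=1$ neither clause applies a binary operator. For every algorithm except $\permitUnlessO{}$ and $\denyUnlessO{}$ one has $\algSem{\alg{\all}, \policy_1}{\req} = \policySem{\policy_1}{\req}$ and $\translAlg{\alg{\all}, \policy_1} = \translPol{\policy_1}$, so the thesis is exactly the hypothesis. For the two \emph{unless} algorithms I compare the special single-policy transformation (turning $\notApp$ and $\indet$ into the biased effect, per the footnote to clause~(\ref{sem:algA})) with the special single-tuple transformation of clause~(\ref{cstr:alg)}): using the single-true-component property one checks that $A\proj{p} \vee A\proj{n} \vee A\proj{i}$ evaluates to $\true$ exactly when $\dec_1 \in \{\permit,\notApp,\indet\}$, matching the transformed decision, and dually for $\deny$. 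For the inductive step $s \geq 2$, both clauses are left-nested applications of the respective binary operators on the \emph{raw} per-policy responses and tuples. I define the partial combinations $\pdpRes^{(2)} = \algOp(\policySem{\policy_1}{\req}, \policySem{\policy_2}{\req})$ and $A^{(2)} = \alg{}(\translPol{\policy_1}, \translPol{\policy_2})$, and for $3 \leq k \leq s$ set $\pdpRes^{(k)} = \algOp(\pdpRes^{(k-1)}, \policySem{\policy_k}{\req})$ and $A^{(k)} = \alg{}(A^{(k-1)}, \translPol{\policy_k})$, so that $\pdpRes^{(s)}$ and $A^{(s)}$ are precisely the two sides of the thesis. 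An inner induction on $k$ then reduces everything to a single binary correspondence: if $\pdpRes_1/A$ and $\pdpRes_2/B$ satisfy the single-true-component correspondence, then so does $\algOp(\pdpRes_1,\pdpRes_2)/\alg{}(A,B)$; since the combined response again has a unique decision and the combined tuple again has exactly one true component, the inner hypothesis reapplies at each step.

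The binary correspondence is the heart of the argument and the main obstacle. Fixing an algorithm, the single-true-component property implies that the value of each decision-component of $\alg{}(A,B)$ is determined solely by the pair $(\dec_1,\dec_2)$ of input decisions, while the decision of $\algOp(\pdpRes_1,\pdpRes_2)$ is read off the corresponding cell of the combination matrix. The proof is therefore a finite case analysis over the sixteen cells, verifying for each that the matrix decision coincides with the unique label whose constraint-component evaluates to $\true$. This is tedious but mechanical, and the work is roughly halved by the specular pairs ($\denyOverO{}$/$\permitOverO{}$ and $\denyUnlessO{}$/$\permitUnlessO{}$). I expect the delicate cases to be the consensus algorithms and $\onlyOneAppO{}$, where $\indet$ is produced by several distinct cells: there one must check that the disjunction in the $\indet$-component (for instance $(A\proj{p}\wedge B\proj{d}) \vee (A\proj{d}\wedge B\proj{p}) \vee A\proj{i} \vee B\proj{i}$) covers exactly the cells the matrix sends to $\indet$, and simultaneously that no other component is satisfied on those cells, so that the combined tuple retains the single-true-component property needed to close the inner induction.
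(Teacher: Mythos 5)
Your proposal is correct and follows essentially the same route as the paper's proof: an outer induction on $s$ whose heart is a case-by-case check that the binary combination matrix for $\algOp$ agrees with the constraint combination $\alg{}(A,B)$, with the $s=1$ case of the \emph{unless} algorithms handled separately via their special single-argument transformations. Your explicit \emph{single-true-component} property and the cell-by-cell organisation of the matrix check are only presentational refinements of the paper's argument, which performs the same verification organised by output decision and leaves that property implicit in the induction hypothesis.
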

\begin{proof}
%Since the considered algorithms use the $\all$ \modif{instantiation} strategy, by the clause~(\ref{sem:algA}) and the hypothesis, we have
%$$
%\begin{array}{l}
%\algSem{\alg{\all}, \policy_1\,  \ldots\, \policy_s}{\req}= 
%\algOp (\algOp(\ldots \algOp (\policySem{\policy_1}{\req}, \policySem{\policy_2}{\req}),\ldots), \policySem{\policy_s}{\req}) = \\[.1cm]
%\hspace*{3.4cm}
%\algOp (\algOp(\ldots \algOp (\langle \dec_1\ \fo_1^* \rangle, \langle \dec_2 \ \fo_2^* \rangle),\ldots), \langle \dec_s \ \fo_s^* \rangle) = \langle \dec\ \fo^*\rangle
%\end{array}
%$$
%By the clause~(\ref{cstr:alg}), we have 
%$$
%\begin{array}{l}
%\cs{\translAlg{\alg{\all}, \policy_1\ \ldots \ \policy_s}\proj{\dec}} =
%\\[.1cm]
%\qquad\qquad\qquad
%\cs{\alg{}(\alg{}(\ldots \alg{}(\translPol{\policy_1}, \translPol{\policy_2}),\ldots),\translPol{\policy_s})\proj{\dec}} 
%= \true
%\end{array}
%$$
Since the considered algorithms use the $\all$ instantiation strategy, by the hypothesis and the clauses~(\ref{sem:algA}) and~(\ref{cstr:alg}), the thesis is equivalent to prove that
$$
\begin{array}{@{}c}
%\algOp (\algOp(\ldots \algOp (\langle \dec_1\ \fo_1^* \rangle, \langle \dec_2 \ \fo_2^* \rangle),\ldots), \langle \dec_s \ \fo_s^* \rangle) = 
\algOp (\ldots \algOp (\langle \dec_1\ \fo_1^* \rangle, \langle \dec_2 \ \fo_2^* \rangle),\ldots, \langle \dec_s \ \fo_s^* \rangle)\hfill\\
\hfill
= \langle \dec\ \fo^*\rangle\\[2pt]
\Longleftrightarrow\\[2pt]
\cs{\alg{}(\ldots \alg{}(\translPol{\policy_1}, \translPol{\policy_2}),\ldots,\translPol{\policy_s})\proj{\dec}} \\
\hfill
= \true
\end{array}
$$
The proof proceeds by case analysis on $\alg{}$.
%$\alg{\all}$. 
In what follows, we only report the case of the $\permitOverO{}$ 
%$\permitOverO{\all}$ 
algorithm, as the other ones are similar and derive directly from the definitions in Appendix~\ref{sec:appendixA}
\smallskip

%\begin{description}
%\item [$\alg{} = \permitOverO{\all}$.] 
When $s=1$, we have $
\algOpAlg{\permitOverO{}} (\policySem{\policy_1}{\req})=\policySem{\policy_1}{\req}$ and $\permitOverO{}(\translPol{\policy_1}) = \translPol{\policy_1}$ by definition, hence the thesis directly follows from the hypothesis that $\policySem{\policy_1}{\req} = \langle \dec_1 \ \fo_1^* \rangle \ \Leftrightarrow \ \cs{\translPol{\policy_1}\proj{\dec_1}} =  \true\,$. 
For the remaining cases, we proceed by induction on the number $s$ of policies to combine.  
\begin{description}
\item [Base Case $(s=2)$.] 
%Recall that the projection operator $\proj{l}$, when applied to  a constraint tuple, returns the value of the field labelled by $l'$, where $l$ is the first letter of $l'$, e.g. $\proj{p}$ returns the $\permit$ constraint.
We must prove that
$$
\begin{array}{c}
\algOpAlg{\permitOverO{}} (\langle \dec_1\ \fo_1^* \rangle, \langle \dec_2 \ \fo_2^* \rangle) = \langle \dec\ \fo^*\rangle\\[2pt]
\Leftrightarrow\\[2pt]
\cs{\permitOverO{}(\translPol{\policy_1}, \translPol{\policy_2})\proj{\dec}} = \true\,.
\end{array}
$$
For the sake of simplicity, in the following we omit the sequences of instantiated obligations, as their combination does not affect the decision $\dec$ returned by $\algOpAlg{\permitOverO{}}$.
%\begin{description}
%\item [$(\Rightarrow)$.] 
%Let $A,B$ be $\translPol{\policy_1}, \translPol{\policy_2}$ resp., 
We proceed by case analysis on the decision $\dec$.
\begin{description}
\item [$(\dec = \permit)$.] %As $\algOpAlg{\permitOverO{}}(\dec_1, \dec_2) = \permit$, 
It follows that $\dec_1 = \permit$ or $\dec_2 = \permit$. Moreover, by definition we have $\permitOverO{}(\translPol{\policy_1}, \translPol{\policy_2})\proj{p} = \translPol{\policy_1} \proj{p}\ \vee\ \translPol{\policy_2} \proj{p}$.
%, due to the hypothesis on $\translPol{\policy_i}$ and the definition of $\denSemF{C}$, the thesis follows.
\smallskip

\item [$(\dec = \deny)$.] %As $\algOpAlg{\permitOverO{}}(\dec_1, \dec_2) = \deny$, 
It follows that $\dec_1, \dec_2 \in \{\deny, \notApp\}$. Moreover, by definition we have $\permitOverO{}(\translPol{\policy_1}, \translPol{\policy_2})\proj{d} = (\translPol{\policy_1} \proj{d} \wedge\, \translPol{\policy_2} \proj{d}) \vee\ \mbox{$(\translPol{\policy_1} \proj{d} \wedge\ \translPol{\policy_2} \proj{n})$} \vee\, (\translPol{\policy_1} \proj{n} \wedge\ \translPol{\policy_2} \proj{d})$.
%, due to the hypothesis on $\translPol{\policy_i}$ and the definition of $\denSemF{C}$, the thesis follows.
\smallskip

\item [$(\dec = \notApp)$.] %As $\algOpAlg{\permitOverO{}}(\dec_1, \dec_2) = \notApp$, 
It follows that $\dec_1 = \dec_2 = \notApp$. Moreover, by definition we have $\permitOverO{}(\translPol{\policy_1}, \translPol{\policy_2})\proj{n} = \translPol{\policy_1} \proj{n}\, \wedge\ \translPol{\policy_2} \proj{n}$.
%, due to the hypothesis on $\translPol{\policy_i}$ and the definition of $\denSemF{C}$, the thesis follows.
\smallskip

\item [$(\dec = \indet)$.] %As $\algOpAlg{\permitOverO{}}(\dec_1, \dec_2) = \indet$, 
It follows that $\dec_1 = \indet$ or $\dec_2 = \indet$ and $\dec_1, \dec_2 \neq \permit$. Moreover, by definition we have $\permitOverO{}(\translPol{\policy_1}, \translPol{\policy_2})\proj{i} = (\translPol{\policy_1} \proj{i} \wedge\, \lnot \translPol{\policy_2} \proj{p})$ $\vee\, (\lnot \translPol{\policy_1} \proj{p} \wedge\ \translPol{\policy_2} \proj{i})$.
%, due to the hypothesis on $\translPol{\policy_i}$ and the definition of $\denSemF{C}$, the thesis follows.
\smallskip

\end{description}
In any case, thesis follows from the hypothesis on $\translPol{\policy_i}$ and the definition of $\denSemF{C}$.

%\item [$(\Leftarrow)$.] The proof is specular, we proceed by induction on the number $s$ of $\translPol{\policy_i}$ by reasoning on the satisfiability of the combined constraints returned by $\permitOverO{}$ and %, due to the hypothesis on $\elAT_i$ and the definition of operators $\algOpAlg{\permitOverO{}}$, 
%on their correspondence with the %possible 
%combined decisions $\dec$.
%\end{description}
\smallskip

\item [Inductive Case $(s=k+1)$.] 
%\mbox{}\\[-.3cm]
%\begin{description}
%\item [$(\Rightarrow)$.] 
By the induction hypothesis the thesis holds for $k$ policies, that is
$$
%\begin{array}{c}
%\algOp (\algOp(\ldots \algOp (\langle \dec_1\ \fo_1^* \rangle, \langle \dec_2 \ \fo_2^* \rangle),\ldots), \langle \dec_k \ \fo_k^* \rangle) = \langle \dec'\ \fo'^*\rangle
%\\[2pt]
%\Longleftrightarrow
%\\[2pt]
%\cs{\alg{}(\alg{}(\ldots \alg{}(\translPol{\policy_1}, \translPol{\policy_2}),\ldots),\translPol{\policy_k})\proj{\dec'}} 
%= \true
%\end{array}
\begin{array}{@{}c}
%\algOp (\algOp(\ldots \algOp (\langle \dec_1\ \fo_1^* \rangle, \langle \dec_2 \ \fo_2^* \rangle),\ldots), \langle \dec_s \ \fo_s^* \rangle) = 
\algOp (\ldots \algOp (\langle \dec_1\ \fo_1^* \rangle, \langle \dec_2 \ \fo_2^* \rangle),\ldots, \langle \dec_k \ \fo_k^* \rangle)
\\
\hfill
= \langle \dec\ \fo^*\rangle\\[2pt]
\Longleftrightarrow\\[2pt]
\cs{\alg{}(\ldots \alg{}(\translPol{\policy_1}, \translPol{\policy_2}),\ldots,\translPol{\policy_k})\proj{\dec}} \\
\hfill
= \true
\end{array}
$$
The thesis then follows by repeating the case analysis on decision $\dec$ of the `Base Case' once we replace $\langle \dec_1\ \fo_1^*\rangle$, $\langle \dec_2\ \fo_2^*\rangle$, $\translPol{\policy_1}$ and $\translPol{\policy_2}$ by $\langle \dec'\ \fo'^*\rangle$, $\langle \dec_s\ \fo_s^*\rangle$, $\permitOverO{}(\ldots \permitOverO{}(\translPol{\policy_1}, \translPol{\policy_2}),\ldots,\translPol{\policy_k})$ and $\translPol{\policy_{s}}$, respectively. 
%\item [$(\Leftarrow)$.] Specular to the previous case.
%\end{description}
\end{description}
\vspace*{-.5cm}
\end{proof}

\medskip

\noindent
\textsc{Theorem}~\ref{thr:constr_sem} [Policy Semantic Correspondence]
For all $\policy \in \mathit{Policy}$ enclosing combining algorithms only using $\all$ as instantiation strategy, and $\req \in R$, it holds that
$$
\policySem{\policy}{\req}= \langle \dec \ \fo^* \rangle
\ \ \Leftrightarrow \ \ 
\cs{\translPol{\policy}\proj{\dec}} =  \true
$$
\begin{proof}
The proof proceeds by induction on the depth $i$ of $\policy$. 
\begin{description}

\item [Base Case $(i=0)$.] This means that $\policy$ is of the form $\ruleOpt{\effect\ \ \x{target:} \, \expr\ \ \x{obl:} \, \ob^{*} \,}$.
%\noindent
%\begin{description}
%\item [$(\Rightarrow)$.]  
We proceed by case analysis on $\dec$. 
\begin{description}
\item [$(\dec = \permit)$.] By the clause~(\ref{sem:rule}), it follows that
$$
\exprSem{\expr}{\req} = \true\ \wedge\ \oblSem{\ob^{*}}{\req} = \fo^*
$$
Thus, by Lemma~\ref{lemma:expr}, it follows that
$$
\cs{\translExpr{\expr}} = \true
$$
and, by Lemma~\ref{lemma:obl} and the clause~(\ref{cstr:obl}), it follows that
$$
\cs{\translObl{\ob^{*}}} = \true
$$
On the other hand, by the clause~(\ref{cstr:rule}), we have that
$$
\begin{array}{l}
\translPol{\ruleOpt{\permit\ \ \x{target:} \, \expr\ \ \x{obl:} \, \ob^{*} \,}}\proj{p} = \\
\qquad\qquad\qquad\qquad
\translExpr{\expr} \wedge \translObl{\ob^{*}}
\end{array}
$$
Hence, by the definition of $\denSemF{C}$, we can conclude that
$$
\begin{array}{@{}r@{}c@{}l@{}}
\cs{\translPol{\ruleOpt{\permit\ \ \x{target:} \, \expr\ \ \x{obl:} \, \ob^{*} \,}}\proj{p}} & = &\\[.1cm] \cs{\translExpr{\expr}} \wedge \cs{\translObl{\ob^{*}}} & = &\\ 
\true \wedge \true & = & \true
\end{array}
$$
which proves the thesis.
\smallskip

\item [$(\dec = \deny)$.] We omit the proof since it proceeds like the previous case.
\smallskip

\item [$(\dec = \notApp)$.] By the clause~(\ref{sem:rule}), it follows that
$$
\exprSem{\expr}{\req} = \false\ \vee\  \exprSem{\expr}{\req} = \excpt
$$ 
By the clause~(\ref{cstr:rule}), we have that
$$
\translPol{\ruleOpt{\effect\ \ \x{target:} \, \expr\ \ \x{obl:} \, \ob^{*} \,}}\proj{n} = \lnot\: \translExpr{\expr}
$$ 
Hence, the thesis directly follows by Lemma~\ref{lemma:expr} and the definition of $\denSemF{C}$.
\smallskip 

\item [$(\dec = \indet)$.] By the clause~(\ref{sem:rule}), the $\mathtt{otherwise}$ condition holds, that is 
$$
\begin{array}{l}
\lnot (\exprSem{\expr}{\req} = \true\ \wedge\ \oblSem{\ob^{*}}{\req} = \fo^*) \\
\wedge\ \lnot (\exprSem{\expr}{\req} = \false\ \vee\  \exprSem{\expr}{\req} = \excpt)
\end{array}
$$
By applying standard boolean laws and reasoning on function codomains, this condition can be rewritten as follows
$$
\begin{array}{@{}l}
\lnot (\exprSem{\expr}{\req} = \true\ \wedge\ \oblSem{\ob^{*}}{\req} = \fo^*) \\
\quad
\wedge\ \lnot (\exprSem{\expr}{\req} = \false\ \vee\  \exprSem{\expr}{\req} = \excpt)\\
= (\exprSem{\expr}{\req} \neq \true\, \vee\, \oblSem{\ob^{*}}{\req} = \err) \\
\quad
 \wedge \, (\exprSem{\expr}{\req} \not\in \{ \false, \excpt\} )\\ 
= \exprSem{\expr}{\req} \not\in \{\true, \false, \excpt\} 
\vee (\exprSem{\expr}{\req} \not\in \{\false, \excpt\}\, \\
\quad
\wedge\, \oblSem{\ob^{*}}{\req} = \err)\\
= \exprSem{\expr}{\req} \not\in \{\true, \false, \excpt\}\ \vee\\
\quad (\exprSem{\expr}{\req} \not\in \{\true, \false, \excpt\}\, \wedge\, \oblSem{\ob^{*}}{\req} = \err)\ \vee\\
\quad (\exprSem{\expr}{\req} = \true \, \wedge\, \oblSem{\ob^{*}}{\req} = \err)\\
= \exprSem{\expr}{\req} \not\in \{\true, \false, \excpt\}  \\
\quad
\vee (\exprSem{\expr}{\req} = \true \, \wedge\, \oblSem{\ob^{*}}{\req} = \err)
\end{array}
$$
On the other hand, by the clause~(\ref{cstr:rule}), we have that
$$
\begin{array}{l}
\translPol{\ruleOpt{\effect\ \ \x{target:} \, \expr\ \ \x{obl:} \, \ob^{*} \,}}\proj{i} = 
\\[.1cm]
%\isErr{\translExpr{\expr}}\, \vee\, 
\quad \lnot\ (\isBool{\translExpr{\expr}}\, \vee\,\isBot{\translExpr{\expr}})
\\ 
\quad
\vee\ (\translExpr{\expr} \wedge\, \lnot\, \translObl{\ob^{*}})
\end{array}
$$
The thesis then follows by Lemmas~\ref{lemma:expr} and~\ref{lemma:obl} and the definition of $\denSemF{C}$.
\end{description}
\smallskip 

%\item [$(\Leftarrow)$.] 
%The proof is omitted as it is specular to the previous case. Indeed, it proceeds by case analysis on $\dec$, by assuming that the constraint generated by $\transFunct{P}$ evaluates to $\true$. Then, by reasoning on the constituent parts of the constraint and applying Lemmas~\ref{lemma:expr} and~\ref{lemma:obl}, the correspondence with the decision $\dec$ calculated by $\policySem{\ruleOpt{\effect\ \ \x{target:} \, \expr\ \ \x{obl:} \, \ob^{*} \,}}{\req}$ can be shown.
%\end{description}
%\smallskip

\item [Inductive Case $(i=k+1)$.] $\policy$ is of the form $\{ \alg{\all}\ \ \x{target:} \, \expr\ \ \x{policies:} \, (\policy^{+})^k \, {\x{\oblp:} \, \ob_{p}^{*}\ \ \x{\obld:} \, \ob_{d}^{*}} \, \}$.
%\begin{description}
%\item [$(\Rightarrow)$.] 
We proceed by case analysis on $\dec$. 
\begin{description}
\item [$(\dec = \permit)$.] 
By the clause~(\ref{sem:pol}), it follows that
$$
\begin{array}{l}
\exprSem{\expr}{\req} = \true \\
 \wedge\ \algSem{\alg{\all}, (\policy^{+})^k}{\req} = \langle \permit\ \ \fo_1^*\rangle\\
\wedge\ \oblSem{{\ob_{p}^{*}}}{\req} = \fo_2^*
\end{array}
$$ 
Thus, by Lemma~\ref{lemma:expr}, it follows that 
$$
\exprSem{\expr}{\req}  =\ \cs{\translExpr{\expr}} = \true
$$
and, by Lemma~\ref{lemma:obl} and the clause~(\ref{cstr:obl}), it follows that
$$
\cs{\translObl{{\ob_{p}^{*}}}} = \true
$$
Since by the induction hypothesis, for all $\policy^h_i$ in $(\policy^+)^k$ with $h \leq k$, it holds that
$$
\policySem{\policy_i^h}{\req} = \langle \dec_i \ \fo^* \rangle \ \ \Leftrightarrow \ \ \cs{\translPol{\policy_i^h}\proj{\dec_i}} =  \true
$$ 
then, by Lemma~\ref{lemma:alg}, it follows that 
$$
\translAlg{\alg{\all},(\policy^+)^k}\proj{p} = \true
$$
On the other hand, by the clause~(\ref{cstr:pol}), we have that
$$
\begin{array}{@{}l@{}}
%\translPol{\{ \alg{\all}\ \x{target:} \, \expr\ \x{policies:} \, (\policy^{+})^k \, \modiff{\x{oblp:} \, \ob_{p}^{*}\ \ \x{obld:} \, \ob_{d}^{*}} \, \}}\proj{p}\\[.1cm]
\translSymbol_{P}\{\!|
\{ \alg{\all}\ \x{target:} \, \expr\ \x{policies:} \, (\policy^{+})^k \, \\
\qquad\qquad\qquad\qquad\qquad\ \ {\x{\oblp:} \, \ob_{p}^{*}\ \ \x{\obld:} \, \ob_{d}^{*}} \, \}
|\!\}
\proj{p}\\[.1cm]
= \translExpr{\expr} \wedge\ \translAlg{\alg{\all},(\policy^+)^k}\proj{p} \wedge\ \translObl{{\ob_{p}^{*}}} 
\end{array}
$$
Hence, by the definition of $\denSemF{C}$, we can conclude that
$$
\begin{array}{@{}l@{}}
%\cs{\translPol{\{ \alg{\all}\ \x{target:} \, \expr\ \x{policies:} \, (\policy^{+})^k \, \modiff{\x{oblp:} \, \ob_{p}^{*}\ \ \x{obld:} \, \ob_{d}^{*}} \, \}}\proj{p}}\\[.1cm] 
\constrSem[\![\translSymbol_{P}\{\!|
\{ \alg{\all}\ \x{target:} \, \expr\ \x{policies:} \, (\policy^{+})^k \, \\
\qquad\qquad\qquad\qquad\qquad\quad\ \ {\x{\oblp:} \, \ob_{p}^{*}\ \ \x{\obld:} \, \ob_{d}^{*}} \, \}
|\!\}
\proj{p}
]\!]{\req}\\[.1cm] 
= \cs{\translExpr{\expr}} \wedge \cs{\translAlg{\alg{\all},(\policy^+)^k}\proj{p}}\\
\quad \wedge\ \cs{\translObl{{\ob_{p}^{*}}}}\\[.1cm] 
= \true \wedge \true \wedge \true  =  \true
\end{array}
$$
which proves the thesis.
\smallskip

\item [$(\dec = \deny)$.] We omit the proof since it proceeds like the previous case.
\smallskip

\item [$(\dec = \notApp)$.] By the clause~(\ref{sem:pol}), it follows that
$$
\begin{array}{@{}l@{}}
\exprSem{\expr}{\req} = \false \, \vee\, \exprSem{\expr}{\req} = \excpt \\
\vee\ (\exprSem{\expr}{\req} = \true\ \wedge\ \algSem{\alg{\all}, (\policy^{+})^k}{\req} = \notApp)
\end{array}
$$
By the clause~(\ref{cstr:pol}), we have that
$$
\begin{array}{@{}l@{}}
%\translPol{\{ \alg{\all}\ \ \x{target:} \, \expr\ \ \x{policies:} \, (\policy^{+})^k \, \modiff{\x{oblp:} \, \ob_{p}^{*}\ \ \x{obld:} \, \ob_{d}^{*}} \, \}}\proj{n} = \\[.1cm]
\translSymbol_{P}\{\!|
\{ \alg{\all}\ \x{target:} \, \expr\ \x{policies:} \, (\policy^{+})^k \, \\
\qquad\qquad\qquad\qquad\qquad\ \ {\x{\oblp:} \, \ob_{p}^{*}\ \ \x{\obld:} \, \ob_{d}^{*}} \, \}
|\!\}
\proj{n}\\[.1cm]
= \lnot\ \translExpr{\expr} \vee (\translExpr{\expr} \wedge \translAlg{\alg{\all},(\policy^+)^k}\proj{n})
\end{array}
$$
The thesis then directly follows by Lemmas~\ref{lemma:expr} and~\ref{lemma:alg}, due to the induction hypothesis and the definition of $\denSemF{C}$.
\smallskip

%%%%
%INDET - PSET
%%%%%
\item [$(\dec = \indet)$.] 
By the clause~(\ref{sem:pol}), the $\mathtt{otherwise}$ condition holds, that is 
%$$
%\begin{array}{@{\!\!}l@{}}
%\lnot (\exprSem{\expr}{\req}=\true\ \wedge\ \algSem{\alg{\all}, (\policy^{+})^k}{\req}= \langle \effect\ \  \foS_1 \rangle \\
%\quad
%\wedge\  (\algSem{\alg{\all}, (\policy^{+})^k}{\req}= \langle \effect\ \  \foS_1 \rangle \Longrightarrow \oblSem{\modiff{\ob_{\effect}^{*}}}{\req} = \foS_2) \
% \\[.1cm]
%\wedge\ \lnot (\exprSem{\expr}{\req}=\false \vee\, \exprSem{\expr}{\req}=\ \excpt  \\
%\quad
%\vee\, (\exprSem{\expr}{\req}=\true \ \wedge \ \algSem{\alg{\all}, (\policy^{+})^k}{\req}=\notApp))
%\end{array}
%$$
\[
\begin{array}{@{\!\!\!\!\!\!}l@{}}
\lnot (\exprSem{\expr}{\req}=\true\ \wedge\ \algSem{\alg{\all}, (\policy^{+})^k}{\req}= \langle \effect\ \  \foS_1 \rangle 
 \\[.1cm]
\quad
\wedge\  (\algSem{\alg{\all}, (\policy^{+})^k}{\req}= \langle \effect\ \  \foS_1 \rangle \Longrightarrow \oblSem{{\ob_{\effect}^{*}}}{\req} = \foS_2) \
 \\[.2cm]
\wedge\ \lnot (\exprSem{\expr}{\req}=\false \vee\, \exprSem{\expr}{\req}=\ \excpt  
 \\[.1cm]
\quad
\vee\, (\exprSem{\expr}{\req}=\true \ \wedge \ \algSem{\alg{\all}, (\policy^{+})^k}{\req}=\notApp))
\\
\mbox{ }
 \\[-.2cm]
\end{array}
\hspace*{-1cm}
\tag{C}
\label{othcondition}
\]
where,
%\todo{Non credo valga la pena modificare direttamente la prima condizione di clause~(\ref{sem:pol})\\ AM: sono d'accordo. FT: ok}\ 
to recall the connection between the effect returned by the combining algorithm and the sequence of obligations that is instantiated, we exploit the tautology
$$
\begin{array}{@{\!\!}l@{}}
\algSem{\alg{\all}, (\policy^{+})^k}{\req}= \langle \effect\ \ \foS_1 \rangle
\wedge
\oblSem{{\ob_{\effect}^{*}}}{\req} = \foS_2\\
= \\[.1cm]
\algSem{\alg{\all}, (\policy^{+})^k}{\req}= \langle \effect\ \ \foS_1 \rangle\\
\wedge (\algSem{\alg{\all}, (\policy^{+})^k}{\req}= \langle \effect\ \ \foS_1 \rangle \Longrightarrow \oblSem{{\ob_{\effect}^{*}}}{\req} = \foS_2)
\end{array}
$$
By applying standard boolean laws and reasoning on function codomains, the Condition~(\ref{othcondition}) above can be rewritten as follows
$$
\begin{array}{@{}l@{}}
\lnot (\exprSem{\expr}{\req}=\true\ \wedge\ \algSem{\alg{\all}, (\policy^{+})^k}{\req}= \langle \effect\ \  \foS_1 \rangle \\
\quad
\wedge\  (\algSem{\alg{\all}, (\policy^{+})^k}{\req}= \langle \effect\ \  \foS_1 \rangle \Longrightarrow \oblSem{{\ob_{\effect}^{*}}}{\req} = \foS_2) \
 \\
\wedge\ \lnot (\exprSem{\expr}{\req}=\false \vee\, \exprSem{\expr}{\req}=\ \excpt \\
\quad \vee\, (\exprSem{\expr}{\req}=\true \ \wedge \ \algSem{\alg{\all}, (\policy^{+})^k}{\req}=\notApp))\\[.1cm]
=\\
( \exprSem{\expr}{\req} \neq \true\, \vee\, \algSem{\alg{\all}, (\policy^{+})^k}{\req} \in \{ \notApp, \indet\}\\
\quad \vee\ (\algSem{\alg{\all}, (\policy^{+})^k}{\req}= \langle \effect\ \  \foS_1 \rangle \wedge \oblSem{{\ob_{\effect}^{*}}}{\req} = \err) )\\
\wedge\ (\exprSem{\expr}{\req} \not\in \{ \false, \excpt\} \\
\quad \wedge ( \exprSem{\expr}{\req} \neq \true\, \vee\,
\algSem{\alg{\all}, (\policy^{+})^k}{\req} 
%\in \{ \langle \effect\ \ fo^*\rangle, \indet\}))\\[.1cm]
\neq \notApp))\\[.1cm]
=\\[.1cm]
( \exprSem{\expr}{\req} \neq \true\, \vee\, \algSem{\alg{\all}, (\policy^{+})^k}{\req} \in \{ \notApp, \indet\}\\
\quad \vee\ (\algSem{\alg{\all}, (\policy^{+})^k}{\req}= \langle \effect\ \  \foS_1 \rangle \wedge \oblSem{{\ob_{\effect}^{*}}}{\req} = \err) )\\
\wedge 
(\exprSem{\expr}{\req} \not\in \{ \true, \false, \excpt\}\\
\quad \vee ( \exprSem{\expr}{\req} \not\in \{ \false, \excpt\} \wedge\\
\quad
\algSem{\alg{\all}, (\policy^{+})^k}{\req} 
%\in \{ \langle \effect\ \ fo^*\rangle, \indet\}))\\[.1cm]
\neq \notApp))\\[.1cm]
=\\[.1cm]
\quad \exprSem{\expr}{\req} \not\in \{ \true, \false, \excpt\}\\
\vee (\exprSem{\expr}{\req} \not\in \{ \true, \false, \excpt\} \\
\quad \wedge\ \algSem{\alg{\all}, (\policy^{+})^k}{\req} \neq \notApp )\\
\vee (\exprSem{\expr}{\req} \not\in \{ \true, \false, \excpt\} \\
\quad \wedge\ \algSem{\alg{\all}, (\policy^{+})^k}{\req} \in \{ \notApp, \indet\} )\\
\vee (\exprSem{\expr}{\req} \not\in \{ \false, \excpt\} \\
\quad \wedge\ \algSem{\alg{\all}, (\policy^{+})^k}{\req} \neq \notApp \\
\quad \wedge\ \algSem{\alg{\all}, (\policy^{+})^k}{\req} \in \{ \notApp, \indet\} )\\
\vee (\exprSem{\expr}{\req} \not\in \{ \true, \false, \excpt\} \wedge \algSem{\alg{\all}, (\policy^{+})^k}{\req}= \langle \effect\ \  \foS_1 \rangle\\
\quad \wedge\ \oblSem{{\ob_{\effect}^{*}}}{\req} = \err )\\
\vee (\exprSem{\expr}{\req} \not\in \{ \false, \excpt\} \wedge \algSem{\alg{\all}, (\policy^{+})^k}{\req} \neq \notApp\\ 
\quad \wedge\ \algSem{\alg{\all}, (\policy^{+})^k}{\req} = \langle \effect\ \  \foS_1 \rangle \wedge \oblSem{{\ob_{\effect}^{*}}}{\req} = \err )\\[.1cm]
=\\[.1cm]
\quad \exprSem{\expr}{\req} \not\in \{ \true, \false, \excpt\}\\
\vee (\exprSem{\expr}{\req} \not\in \{ \false, \excpt\} \wedge \algSem{\alg{\all}, (\policy^{+})^k}{\req} = \indet )\\
\vee (\exprSem{\expr}{\req} \not\in \{ \false, \excpt\} \\
\quad\ \wedge\ \algSem{\alg{\all}, (\policy^{+})^k}{\req} = \langle \effect\ \  \foS_1 \rangle \wedge \oblSem{{\ob_{\effect}^{*}}}{\req} = \err )\\[.1cm]
=\\[.1cm]
\quad \exprSem{\expr}{\req} \not\in \{ \true, \false, \excpt\}\\
\vee 
(\exprSem{\expr}{\req} \not\in \{ \true, \false, \excpt\} \wedge \algSem{\alg{\all}, (\policy^{+})^k}{\req} = \indet )\\
\vee 
(\exprSem{\expr}{\req} = \true \wedge \algSem{\alg{\all}, (\policy^{+})^k}{\req} = \indet )\\
\vee 
(\exprSem{\expr}{\req} \not\in \{ \true, \false, \excpt\} \\
\quad \wedge \algSem{\alg{\all}, (\policy^{+})^k}{\req} = \langle \effect\ \  \foS_1 \rangle \wedge \oblSem{{\ob_{\effect}^{*}}}{\req} = \err )\\
\vee 
(\exprSem{\expr}{\req} = \true \\
\quad 
\wedge \algSem{\alg{\all}, (\policy^{+})^k}{\req} = \langle \effect\ \  \foS_1 \rangle \wedge \oblSem{{\ob_{\effect}^{*}}}{\req} = \err )\\[.1cm]
=\\[.1cm]
\end{array}
$$
$$
\begin{array}{@{}l@{}}
\quad \exprSem{\expr}{\req} \not\in \{ \true, \false, \excpt\}\\
\vee 
(\exprSem{\expr}{\req} = \true \wedge \algSem{\alg{\all}, (\policy^{+})^k}{\req} = \indet )\\
\vee 
(\exprSem{\expr}{\req} = \true \wedge \algSem{\alg{\all}, (\policy^{+})^k}{\req} = \langle \effect\ \  \foS_1 \rangle \\
\quad \wedge \oblSem{{\ob_{\effect}^{*}}}{\req} = \err )\\
=\\[.1cm]
\quad \exprSem{\expr}{\req} \not\in \{\true, \false, \excpt\} \\
\vee (\exprSem{\expr}{\req} = \true\: \wedge\: \algSem{\alg{\all}, (\policy^{+})^k}{\req} = \indet\ )\\
\vee 
(\exprSem{\expr}{\req} = \true\: \wedge\: \algSem{\alg{\all}, (\policy^{+})^k}{\req} = \langle \permit\ \ \foS_1 \rangle \\
\quad \wedge \oblSem{{\ob_{p}^{*}}}{\req} = \err)\\
\vee  
(\exprSem{\expr}{\req} = \true\: \wedge\: \algSem{\alg{\all}, (\policy^{+})^k}{\req} = \langle \deny\ \ \foS_1 \rangle \\
\quad \wedge \oblSem{{\ob_{d}^{*}}}{\req} = \err)
\end{array}
$$
where the last step exploits the fact that $\effect \in \{\permit, \deny\}$.

On the other hand, by the clause~(\ref{cstr:pol}), we have that
$$
\begin{array}{@{}l@{}}
%\translPol{\{ \alg{\all}\ \x{target:} \, \expr\ \x{policies:} \, (\policy^{+})^k \, \modiff{\x{oblp:} \, \ob_{p}^{*}\ \ \x{obld:} \, \ob_{d}^{*}} \, \}}\proj{\indet} \ 
\translSymbol_{P}\{\!|
\{ \alg{\all}\ \x{target:} \, \expr\ \x{policies:} \, (\policy^{+})^k \, \\
\qquad\qquad\qquad\qquad\qquad\ \ {\x{\oblp:} \, \ob_{p}^{*}\ \ \x{\obld:} \, \ob_{d}^{*}} \, \}
|\!\}
\proj{i}
= \\[.1cm]
\ \ \lnot\ (\isBool{\translExpr{\expr}}\ \vee\ \isBot{\translExpr{\expr}})\\
\ \ \vee\ (\translExpr{\expr} \wedge \ \translAlg{\algSyntax,(\policy^+)^k}\proj{i}) \\
\ \ \vee\ (\translExpr{\expr} \wedge\ \translAlg{\algSyntax,(\policy^+)^k}\proj{p} \wedge \, \lnot\ \translObl{{\ob_{p}^{*}}} \,) \\
\ \ \vee\ (\translExpr{\expr} \wedge\ \translAlg{\algSyntax,(\policy^+)^k} \proj{d} \wedge \, \lnot\ \translObl{{\ob_{d}^{*}}} \,)
\end{array}
$$
The thesis then follows by Lemmas~\ref{lemma:expr},~\ref{lemma:obl} and~\ref{lemma:alg}, due to the induction hypothesis and the definition of $\denSemF{C}$.

\end{description}

%\item [$(\Leftarrow)$.] Again, the proof is omitted as it is specular to the previous case.
%\end{description}
\end{description}
\vspace*{-.5cm}
\end{proof}

%%%%%%%%%%%%%%%%%%%%%%%%%%%%%%%%%%%%%%%%%%%%%%%%%%%%%

%%%%ACRONYMs
\acrodef{ABAC}{Attribute Based Access Control}
\acrodef{RBAC}{Role-Based Access Control}
\acrodef{PBAC}{Policy-Based Access Control}
\acrodef{XACML}{eXtensible Access Control Markup Language}
\acrodef{FACPL}{Formal Access Control Policy Language}
\acrodef{PEP}{Policy Enforcement Point}
\acrodef{PDP}{Policy Decision Point}
\acrodef{PAS}{Policy Authorisation System}
\acrodef{PR}{Policy Repository}

\end{document}